\documentclass[aps, pra,reprint, showpacs,nofootinbib,
superscriptaddress,10pt,floatfix,longbibliography]{revtex4-2}
\usepackage{amsmath, amsthm, amssymb}
\usepackage{adjustbox}
\usepackage{graphicx}
\usepackage{ragged2e, physics}

\usepackage{yfonts}
\usepackage{bm}
\usepackage{rotating}
\usepackage{booktabs}
\usepackage{natbib}
\usepackage[normalem]{ulem}
\usepackage{soul}
\usepackage{etoolbox}    
\AtBeginEnvironment{thebibliography}{%
  \sloppy
  \setlength\emergencystretch{1em}%
}
\AtEndEnvironment{thebibliography}{\fussy}

\allowdisplaybreaks
\newcommand{\mycommand}[1]{\csname #1command\endcsname}
\usepackage{appendix}
\usepackage[caption=false]{subfig} 
\usepackage{comment}
\usepackage{enumitem}
\usepackage{color}
\usepackage{algorithm}
\usepackage[utf8]{inputenc}
\usepackage{braket}
\usepackage{qcircuit}
\usepackage[overload]{empheq}
\usepackage{hyperref}
\usepackage{xcolor}
\usepackage[english]{babel}
\usepackage[flushleft]{threeparttable}

\usepackage{algcompatible}
\usepackage[noend]{algpseudocode}
\usepackage{mathtools}

\usepackage{xparse}

\NewDocumentCommand{\INTERVALINNARDS}{ m m }{
    #1 {,} #2
}
\NewDocumentCommand{\interval}{ s m >{\SplitArgument{1}{,}}m m o }{
    \IfBooleanTF{#1}{
        \left#2 \INTERVALINNARDS #3 \right#4
    }{
        \IfValueTF{#5}{
            #5{#2} \INTERVALINNARDS #3 #5{#4}
        }{
            #2 \INTERVALINNARDS #3 #4
        }
    }
}

\usepackage{lipsum}

\usepackage{qcircuit}



\begin{document}
\newtheorem{theorem}{\bf Theorem}[section]
\newtheorem{proposition}[theorem]{\bf Proposition}
\newtheorem{definition}[theorem]{\bf Definition}
\newtheorem{corollary}[theorem]{\bf Corollary}
\newtheorem{example}[theorem]{\bf Example}
\newtheorem{exam}[theorem]{\bf Example}
\newtheorem{remark}[theorem]{\bf Remark}
\newtheorem{lemma}[theorem]{\bf Lemma}
\newtheorem{statement}[theorem]{\bf Statement}
\newcommand{\nrm}[1]{|\!|\!| {#1} |\!|\!|}

\newcommand{\calL}{{\mathcal L}}
\newcommand{\calX}{{\mathcal X}}
\newcommand{\calA}{{\mathcal A}}
\newcommand{\calB}{{\mathcal B}}
\newcommand{\calC}{{\mathcal C}}
\newcommand{\calK}{{\mathcal K}}
\newcommand{\C}{{\mathbb C}}
\newcommand{\R}{{\mathbb R}}
\newcommand{\U}{{\mathrm U}}
\renewcommand{\SS}{{\mathbb S}}
\newcommand{\LL}{{\mathbb L}}
\def\kernel{\mathop{\rm kernel}\nolimits}
\def\sigan{\mathop{\rm span}\nolimits}

\newcommand{\klasse}{{\boldsymbol \Delta}}

\newcommand{\ba}{\begin{array}}
\newcommand{\ea}{\end{array}}
\newcommand{\von}{\vskip 1ex}
\newcommand{\vone}{\vskip 2ex}
\newcommand{\vtwo}{\vskip 4ex}
\newcommand{\dm}[1]{ {\displaystyle{#1} } }

\newcommand{\be}{\begin{equation}}
\newcommand{\ee}{\end{equation}}
\newcommand{\beano}{\begin{eqnarray*}}
\newcommand{\eeano}{\end{eqnarray*}}
\newcommand{\inp}[2]{\langle {#1} ,\,{#2} \rangle}
\def\bmatrix#1{\left[ \begin{matrix} #1 \end{matrix} \right]}
\def \noin{\noindent}
\newcommand{\evenindex}{\Pi_e}

\newcommand{\tb}[1]{\textcolor{blue}{ #1}}
\newcommand{\tm}[1]{\textcolor{magenta}{ #1}}
\newcommand{\tre}[1]{\textcolor{red}{ #1}}
\newcommand{\snote}[1]{\textcolor{blue}{Shantanav: #1}}



\def \K{{\mathbf k}}
\def \N{{\mathbb N}}
\def \R{{\mathbb R}}
\def \F{{\mathbb F}}
\def \C{{\mathbb C}}
\def \Q{{\mathbb Q}}
\def \Z{{\mathbb Z}}
\def \I{{\mathbb I}}
\def \D{{\mathcal D}}
\def \H{{\mathcal H}}
\def \P{{\mathcal P}}
\def \M{{\mathcal M}}
\def \B{{\mathcal B}}
\def \O{{\mathcal O}}
\def \calG{{\mathcal G}}
\def \PO{{\mathcal {PO}}}
\def \X{{\mathcal X}}
\def \Y{{\mathcal Y}}
\def \calW{{\mathcal W}}
\def \pf{{\bf Proof: }}
\def \lam{{\lambda}}
\def\lc{\left\lceil}   
\def\rc{\right\rceil}
\def \N{{\mathbb N}}
\def \Ls{{\Lambda}_{m-1}}
\def \Gb{\mathrm{G}}
\def \Hb{\mathrm{H}}
\def \Delta{\triangle}
\def \Rar{\Rightarrow}
\def \p{{\mathsf{p}(\lam; v)}}

\def \D{{\mathbb D}}

\def \tr{\mathrm{Tr}}
\def \cond{\mathrm{cond}}
\def \lam{\lambda}
\def \sig{\sigma}
\def \sign{\mathrm{sign}}

\def \ep{\epsilon}
\def \diag{\mathrm{diag}}
\def \rev{\mathrm{rev}}
\def \vec{\mathrm{vec}}

\def \ham{\mathsf{Ham}}
\def \herm{\mathsf{Herm}}
\def \sym{\mathsf{sym}}
\def \odd{\mathsf{sym}}
\def \en{\mathrm{even}}
\def \rank{\mathrm{rank}}
\def \pf{{\bf Proof: }}
\def \dist{\mathrm{dist}}
\def \rar{\rightarrow}

\def \rank{\mathrm{rank}}
\def \pf{{\bf Proof: }}
\def \dist{\mathrm{dist}}
\def \Re{\mathsf{Re}}
\def \Im{\mathsf{Im}}
\def \re{\mathsf{re}}
\def \im{\mathsf{im}}

\def \sym{\mathsf{sym}}
\def \sksym{\mathsf{skew\mbox{-}sym}}
\def \odd{\mathrm{odd}}
\def \even{\mathrm{even}}
\def \herm{\mathsf{Herm}}
\def \skherm{\mathsf{skew\mbox{-}Herm}}
\def \str{\mathrm{ Struct}}
\def \eproof{$\blacksquare$}

\def \cnot{\mathrm{CNOT}}
\definecolor{darkolivegreen}{rgb}{0.33, 0.42, 0.18}

\def \bS{{\bf S}}
\def \cA{{\cal A}}
\def \E{{\mathcal E}}
\def \X{{\mathcal X}}
\def \F{{\mathcal F}}
\def \cH{\mathcal{H}}
\def \cJ{\mathcal{J}}
\def \tr{\mathrm{Tr}}
\def \range{\mathrm{Range}}
\def \adj{\star}

\pdfstringdefDisableCommands{%
  \def\\{}%
  \def\texttt#1{<#1>}%
}
\def \adj{\star}

\def \pal{\mathrm{palindromic}}
\def \palpen{\mathrm{palindromic~~ pencil}}
\def \palpoly{\mathrm{palindromic~~ polynomial}}
\def \odd{\mathrm{odd}}
\def \even{\mathrm{even}}

\newcommand{\tg}[1]{\textcolor{green}{ #1}}


\title{Weighted Quantum Signal Processing: Low-Depth Polynomial Approximation with Applications to Kolmogorov–Arnold Networks}

\author{Rohit Sarma Sarkar}
\email{rohit15sarkar@yahoo.com}
\affiliation{Universitat Polit\`ecnica de Catalunya, Barcelona, Spain}

\author{Rupayan Bhattacharjee}
\affiliation{Universitat Polit\`ecnica de Catalunya, Barcelona, Spain}

\author{Elias F. Combarro}
\affiliation{Universidad de Oviedo, Spain}

\author{Michele Grossi}
\affiliation{Quantum Computing Infrastructures and Algorithms Competence Centre, European Organization for Nuclear Research (CERN), Geneva 1211, Switzerland}

\author{Lirand\"e Pira}
\affiliation{Centre for Quantum Technologies (CQT), National University of Singapore, Singapore}

\author{Carmen G. Almud\'ever}
\affiliation{Universitat Polit\`ecnica de Val\`encia, Spain}

\author{Sergi Abadal}
\affiliation{Universitat Polit\`ecnica de Catalunya, Barcelona, Spain}

\author{Eduard Alarcon}
\affiliation{Universitat Polit\`ecnica de Catalunya, Barcelona, Spain}



\begin{abstract}
Quantum Signal Processing (QSP) is a powerful quantum framework for generating and approximating univariate polynomials. However, QSP is often limited by circuit-depth bottlenecks and parity constraints on the class of realizable polynomials. In this work, we introduce Weighted Quantum Signal Processing (WQSP), an extension of QSP in which a weight function is assigned to the central rotation operator. This formulation provides a deeper understanding of QSP, which emerges as the special case of WQSP with unit weights. The choice of weights fundamentally determines the structure and expressive capabilities of WQSP circuits. When the weights are natural numbers greater than one, WQSP reduces to a pruned version of QSP, revealing parameter redundancies in the standard framework. Through an appropriate selection of integer weights, WQSP achieves linear-to-exponential reductions in the number of parameters required to realize arbitrary bounded univariate polynomials while preserving approximation quality. For generic weights, we establish corresponding approximation error bounds and show that, in many cases, the approximation is exact. We analyze WQSP from both a deterministic perspective, where polynomial generation is formulated as the solution of a linear system, and a quantum machine learning perspective, where WQSP serves as a structured and expressive quantum learning model. We further employ this learning framework to parameterize learnable activation functions in Kolmogorov--Arnold Networks for multivariate function approximation. Our results show that WQSP provides a compact, flexible, and theoretically grounded framework for realizing arbitrary univariate polynomials while requiring significantly fewer trainable parameters than conventional QSP. This yields expressive and parameter-efficient neural architectures, highlighting the potential of WQSP as a scalable primitive for quantum-enhanced machine learning.
\end{abstract}


%
\maketitle
\onecolumngrid

\vspace{0.3cm}
\noindent\textbf{Keywords.} Quantum Signal Processing, Weighted Quantum Signal Processing, Chebyshev Polynomials, Kolmogorov-Arnold Networks, Quantum circuit.
\twocolumngrid
\vspace{0.6cm}
\hrule 
\vspace{0.4cm}

\tableofcontents



\section{Introduction}
In recent years, developments in quantum algorithm theory have revealed a unifying perspective i.e. many of the most influential quantum algorithms can be understood as instances of implementing a function of a matrix or Hamiltonian, denoted by $f(H)$\cite{gilyen,chuang}. This perspective encompasses a broad class of algorithms including Hamiltonian simulation, quantum search, factoring, quantum walks, quantum linear system solvers, and several modern optimization and machine learning routines. Consequently, the ability to efficiently realize spectral transformations of operators has emerged as one of the central objectives in quantum algorithm design. Over time, several frameworks have been developed for constructing such Hamiltonian transformations. Early approaches relied on phase estimation techniques, which formed the basis of algorithms such as HHL for quantum linear systems. Later developments introduced methods based on linear combinations of unitaries (LCU)\cite{berry2015hamiltonian, Childs2017Quantum}, enabling improved asymptotic scaling for Hamiltonian simulation and matrix function implementation. Among these approaches, Quantum Signal Processing (QSP) has emerged as one of the most versatile and powerful techniques.

The central idea behind QSP is to approximate a target continuous function through polynomial transformations of the eigenvalues of a unitary operator $U$ that encodes the underlying Hamiltonian or matrix. QSP realizes high-degree polynomial transformations using remarkably simple quantum circuits and requiring only a minimal number of ancilla qubits by interleaving signal operators with carefully chosen phase rotations. This framework has led to asymptotically optimal algorithms for Hamiltonian simulation and has become a central primitive in modern quantum algorithm design.

QSP was later generalized through Quantum Singular Value Transformation (QSVT)\cite{gilyen}, where QSP appears as a special case corresponding to polynomial transformations of eigenvalues encoded within singular value transformations. While QSVT considerably broadens the class of realizable matrix transformations, practical implementations typically require larger fault-tolerant quantum circuits and additional encoding overhead. In contrast, QSP operates within a significantly simpler single-qubit signal-processing framework, making it particularly attractive for constructing resource-efficient quantum algorithms. In fact, QSP has already been experimentally demonstrated on trapped-ion and noisy quantum devices \cite{explim,Kikuchi2023}, and has been explored in applications including Hamiltonian simulation, proportional sampling, and related quantum algorithmic tasks \cite{martyn,Dalzell2025Quantum,laneve,chuang}.

Despite its success, QSP still suffers from several practical limitations. Most notably, standard QSP imposes structural constraints on the family of realizable polynomials. To approximate arbitrary polynomial transformations, one typically decomposes the target polynomial into parity sectors and, for complex-valued functions, further separates real and imaginary components before recombining them through procedures such as linear combination of unitaries and amplitude amplification. This process increases circuit depth and implementation overhead. Hence, Generalized Quantum Signal Processing (GQSP)\cite{GQSP} was introduced in the literature \cite{GQSP}, which alleviates these parity constraints by incorporating generic $\mathrm{SU}(2)$ rotations together with an ancillary qubit register.

However, the use of generic $\mathrm{SU}(2)$ rotations instead of optimized, hardware-native gates forces the compiler to decompose each rotation into a longer sequence of elementary operations (such as CNOT gates and single-qubit pulses). This decomposition significantly increases the circuit depth for a fixed polynomial degree $d$, thereby accelerating fidelity loss due to cumulative gate noise and decoherence before the transformation is completed.

In addition, incorporating additional ancilla-based control for these rotations introduces substantial hardware overhead through increased ancilla requirements and controlled operations. In NISQ architectures, every additional controlled interaction between the ancilla and the target qubit increases implementation complexity and error accumulation, reducing the precision with which the target polynomial $P(x)$ can be realized.

A central problem in QSP is the recovery of phase parameters required to realize a target polynomial transformation, and there exists a large body of literature devoted to this problem. For a detailed introduction to QSP, we refer the reader to \cite{lin2025mathematical}. Constructive phase synthesis algorithms have been developed for both complex and real polynomial settings. In particular, \cite{gilyen,chuang} proposed an $O(d^2)$ algorithm for recovering the phase vector corresponding to a degree-$d$ polynomial through repeated matrix operations and phase reconstruction. Subsequent works improved optimization-based phase recovery \cite{linlin2,Chao2020Finding}, characterized the optimization landscape of QSP \cite{linlinenergy}, and introduced iterative approaches including fixed-point methods \cite{Linliniter,dong2024infinite}.

Alternative approaches focused on complementary polynomial construction for real-valued QSP, including root-finding methods over Laurent polynomials \cite{Haah2019product,gilyen}, contour-integral formulations such as the Prony method \cite{Ying2022}, linear-system-based methods such as the halving algorithm \cite{Chao2020Finding}, and nonlinear Fourier analysis techniques \cite{Alexis2024Quantum,linlinenergy}. Extensions to infinite QSP were later introduced for large-degree polynomial and continuous function approximation \cite{dong2024infinite}. More recently, \cite{linlininfinite} proposed a provably stable numerical algorithm based on Riemann--Hilbert factorization, referred to as the Riemann--Hilbert Weiss algorithm, for computing QSP phase factors. The authors showed existence and uniqueness of solutions for almost all admissible QSP representations. However, computing each phase factor requires solving a Riemann--Hilbert factorization problem via linear systems, resulting in an $O(d^3)$ cost per phase factor for a $d$-degree polynomial.

Despite substantial progress in phase synthesis, practical implementations of QSP for high-degree polynomial approximation still require circuits of increasing depth together with a growing number of trainable phase parameters which is suitable for fault-tolerant circuits. Consequently, although existence of phase solutions is guaranteed \cite{chuang} for admissible polynomial transformations, recovering and optimizing these parameters remains a nontrivial computational task in practice, particularly for resource-constrained and near-term quantum architectures.

\noindent\textbf{Contributions.} In this work, we introduce Weighted Quantum Signal Processing (WQSP), a natural extension of the standard QSP framework that addresses a key limitation of existing QSP-based approaches: the increasing circuit depth and parameter count required for high-degree polynomial approximation. WQSP provides a low-depth, parameter-efficient framework for constructing polynomial transformations and extends naturally to the approximation of general continuous functions through a learning-based formulation.

A key feature of WQSP is the incorporation of classical weight encoding directly into the quantum signal-processing circuit (Section \ref{sec:wqsp}). This additional structure enables the contribution of each layer to be modulated while preserving the overall circuit architecture, providing greater flexibility in constructing polynomial transformations. When the weights are restricted to integers, WQSP reduces to a pruned variant of QSP in which repeated or aggregated signal operations effectively lower circuit depth while preserving expressive power. In this regime, WQSP retains the structure of QSP but removes redundant rotations, resulting in more compact implementations. Overall, the strength of WQSP lies in its ability to trade classical weight design for quantum circuit efficiency, thereby enabling flexible control over circuit depth, parameter count, and parity structure while maintaining strong approximation capabilities for polynomial and continuous functions.

In this work, we first show that, under appropriate choices of integer weight vectors, WQSP can match the approximation performance of standard QSP while requiring significantly fewer parameters and reduced circuit depth. This demonstrates that QSP contains a substantial degree of redundancy that can be eliminated through structured weight design without sacrificing expressive power. For more general weight choices, we derive explicit approximation bounds that characterize the trade-off between circuit depth and approximation accuracy. In particular, increasing compression in the weight structure leads to reduced circuit depth at the cost of a controlled increase in approximation error. This provides a quantitative mechanism for balancing expressivity and resource efficiency within the WQSP framework. In particular, the learning-based formulation, which we also incorporate for WQSP, naturally extends beyond polynomial approximation and can be applied to the broader task of approximating continuous functions (Section \ref{polyfit}).

We also employ Weighted QSP to realize the learnable activation functions of classical Kolmogorov-Arnold Networks (KANs) for multivariate function approximation.
Artificial neural networks form the foundation of modern machine learning owing to their ability to approximate highly nonlinear functions. Classical multilayer perceptrons (MLPs) achieve this through compositions of affine transformations and point-wise nonlinear activation functions, with their expressive power guaranteed by universal approximation theorems. However, standard MLPs often suffer from the curse of dimensionality, motivating alternative representations that exploit structure in target functions to reduce parameter complexity (Section \ref{polycirc}, Section \ref{wqspKAN}). KANs \cite{liu2024kan} have recently emerged as a compelling alternative architecture in this context. They are inspired by the Kolmogorov--Arnold Representation Theorem (KART)~\cite{SCHMIDTHIEBER2021119}, also known as the Kolmogorov Superposition Theorem, which originated from Hilbert's thirteenth problem. KART states that every continuous function $f$ defined on a compact subset of $\mathbb{R}^n$ admits the representation
\begin{equation}\label{KART}
f(x_1,\ldots,x_n)
=\sum_{q=0}^{2n}\phi_q\!\left(\sum_{p=1}^{n}\psi_{pq}(x_p)\right),
\end{equation}
where $\phi_q$ and $\psi_{pq}$ are continuous univariate functions, with $\psi_{pq}$ independent of the target function $f$. Motivated by this representation, KANs replace fixed activation functions with learnable univariate functions defined on network edges, while aggregation is performed at the nodes. This architectural shift improves expressivity, parameter efficiency, and interpretability, particularly for functions exhibiting compositional structure. Consequently, KANs have found applications in computer vision, time-series analysis, physics-informed learning, and graph representation learning~\cite{cheon2024demonstrating,cang2024can,dong2024kolmogorov,vaca2024kolmogorov,patra2025physics,Zhang2025physics,carlo2025kolmogorovarnold,kiamari2024gkan}.

In this work, we employ Weighted Quantum Signal Processing (WQSP) to parameterize the learnable edge activation functions of KANs, yielding a quantum-native architecture for multivariate function approximation. Each activation is represented by a WQSP-generated polynomial, allowing expressive univariate functions to be realized using compact, low-depth variational quantum circuits. The resulting WQSP-KAN framework inherits the compression properties of WQSP, significantly reducing the number of trainable parameters and circuit depth required to represent the activation functions, while avoiding the block-encoding overheads present in existing quantum KAN architectures\cite{ivashkov2024qkan}. Rather than treating variational quantum circuits as standalone predictive models, our approach harnesses their expressive power at the level of univariate function approximation, closely aligning with the underlying Kolmogorov--Arnold decomposition. This hybrid framework combines the interpretability and compositional structure of KANs with the expressive polynomial representations of WQSP, providing a compact, parameter-efficient, and quantum-native approach to multivariate function approximation.

A key advantage of WQSP is its hardware-native gate mapping. While GQSP uses general $\mathrm{SU}(2)$ rotations that typically require $2$ to $3$ physical pulses via Euler decomposition, WQSP employs $R_x(\theta)$ rotations that can be implemented with a single modulated pulse. Consequently, this reduces physical gate depth and error accumulation on NISQ hardware.

Overall, WQSP provides a structured, low-parameter compression of QSP while extending its capabilities to flexible function approximation through weighted signal transformations. The resulting framework offers enhanced control over approximation behavior and has the potential to serve as a scalable primitive for quantum machine learning and other quantum algorithmic frameworks.

\noindent\textbf{Organization.} The paper is organized as follows. In Section~\ref{sec:wqsp}, we review the basics of QSP and WQSP. In Section~\ref{polygen}, we analyze the class of functions generated by WQSP and studying their structures in detail. In Section~\ref{polyfit}, we investigate how WQSP with integer weights can be used to approximate univariate polynomials and continuous functions with significantly fewer parameters and reduced circuit depth compared to QSP. We also present both deterministic and learning-based methods for recovering phase angles corresponding to a target function. For integer weights, we further derive explicit bounds characterizing the effect of generic weight choices on approximation accuracy. In Section~\ref{polycirc}, we describe how functions and polynomials generated by WQSP can be extracted from the corresponding quantum circuits. Subsequently, in Section~\ref{wqspKAN}, we integrate WQSP into Kolmogorov--Arnold Network (KAN) architectures for multivariate function approximation, where WQSP serves as a primitive for constructing expressive nonlinear activation functions. Finally, we conclude the paper with a summary of results and directions for future work.

\section{Weighted Quantum Signal Processing} \label{sec:wqsp}
In this section, we provide a mathematical overview of Quantum Signal Processing (QSP) and its extension, Weighted Quantum Signal Processing (WQSP). We highlight the fundamental differences between QSP and WQSP and demonstrate how the additional flexibility introduced by WQSP can be exploited to realize polynomial transformations with substantially fewer resources than standard QSP.


\subsection{Preliminaries on Quantum Signal Processing}
Quantum Signal Processing (QSP) \cite{gilyen} is a quantum framework for realizing polynomials over the interval $[-1,1]$, originally introduced in \cite{chuang}. More generally, the expressive power of QSP extends beyond polynomial generation. Through polynomial approximation theory, QSP can be used to approximate arbitrary continuous functions in the space $C[-1,1]$. In particular, by the Stone--Weierstrass theorem \cite{deBranges1959StoneWeierstrass}, every continuous function defined on a closed interval can be uniformly approximated to arbitrary accuracy by polynomial functions. Consequently, QSP provides a general mechanism for implementing continuous function transformations and has emerged as a fundamental primitive in quantum algorithms, Hamiltonian simulation, singular value transformations \cite{gilyen} and quantum neural networks\cite{Daskin2024QuantumKAN}.

In quantum circuit model of computation,  Quantum signal processing\cite{daskin2024quantum,gilyen,GQSP} is a quantum circuit with a corresponding unitary operator $U^{\mathrm{QSP}}_{\boldsymbol{\Phi}}(x)$. The circuit $U^{\mathrm{QSP}}_{\boldsymbol{\Phi}}(x)$ is constructed using a sequence of phase parameters $\boldsymbol{\Phi} = (\phi_0, \phi_1, \ldots, \phi_d)^T\in \mathbb{R}^{d+1}$ and takes the following form.

\begin{eqnarray}\label{qspdef}
    U^{\mathrm{QSP}}_{\boldsymbol{\Phi}}(x)=\left(\prod_{j=1}^{d}  R_z(\phi_{d-j+1})U(x)\right)R_z(\phi_0)
\end{eqnarray} where $R_z(\theta)=\exp{(i\theta\sigma_z)}=\bmatrix{\exp{(i\theta)}&0\\0&\exp{(-i\theta)}}$ and $U(x)$ is a rotation operator \cite{gilyen} with the following form.
\[
U(x)
=
\bmatrix{x & i\sqrt{1 - x^2} \\
i\sqrt{1 - x^2} & x}
=
\exp\!\left(i \arccos(x)\, \sigma_x\right),
\]
where $x \in [-1,1]$.  The resulting circuit generates a complex polynomial $P(x)\in \mathbb{C}[x]$ which yields the following form.\begin{eqnarray}\label{qsp}
     U^{\mathrm{QSP}}_{\boldsymbol{\Phi}}(x)=\bmatrix{P(x)&iQ(x)\sqrt{1-x^2}\\ iQ(x)^*\sqrt{1-x^2} & P(x)^*}
\end{eqnarray} where $*$ is the Hermitian operator. Clearly, $U_{\boldsymbol{\Phi}}^{\mathrm{QSP}}(x)$ acts as a block-encoding \cite{chakraborty_et_al:LIPIcs.ICALP.2019.33} of a polynomial $P(x)\in \mathbb{C}[x]$. In other words $
    \|P(x)-\bra{0} U^{\mathrm{QSP}}_{\boldsymbol{\Phi}}(x)\ket{0}\|=0$.  Further, the polynomials $P,Q\in \mathbb{C}[x]$ holds the following properties. 
\begin{itemize}
    \item[(i)] $|P(x)|\leq 1\forall x\in [-1,1]$.
    \item[(ii)]$P(x)$ is a polynomial with degree at most $d$ with $\mathrm{deg}(Q)\leq d-1$.
    \item[(iii)]$P$ and $Q$ have fixed but opposite parity.
    \item[(iv)] $|P(x)|^2+(1-x^2)|Q(x)|^2=1 \forall x\in [-1,1]$
\end{itemize}
Indeed, the coefficients of $P$ and $Q$ vary as components of $\boldsymbol{\Phi}$ are changed.  An alternative formulation\cite{gilyen} also employs
\begin{eqnarray}\label{altdefqsp}
U(x)&& =\bmatrix{\cos{\theta}&i\sin{\theta}\\i\sin{\theta}&\cos{\theta}}\\\nonumber &&=\bmatrix{x & i\sqrt{1 - x^2} \\
i\sqrt{1 - x^2} & x}\\\nonumber && =\exp(i \theta\, \sigma_x)=R_x(\theta)    
\end{eqnarray}

which is valid for real-valued $\theta$ and leads to a linear combination of trigonometric polynomials in $x$ where $x=\cos{\theta}$. In both cases, $U(x)$ can be implemented using a single-qubit rotations about the $x$ axis. Without loss of generality, we shall consider this formulation mentioned in Equation \ref{altdefqsp} throughout the paper.  The quantum circuit corresponding to a $d$-degree polynomial represented by QSP is as follows \begin{align}\label{QSP}\hspace{-0.8cm}
    {\Qcircuit @C=1em @R=.7em {
 &\lstick{}&\gate{R_z(\phi_0)}&\gate{R_x(\theta)} &\gate{R_z(\phi_1)}&\hdots&\hdots&\gate{R_x(\theta)}&\gate{R_z(\phi_d)}&\qw\\}}
\end{align}.

Given a fixed-parity polynomial $P(x)$ of degree $d$ satisfying the conditions discussed above, the objective of QSP is to determine the components of the phase vector $\mathbf{\Phi}\in \mathbb{R}^{d+1}$. In the literature, there exists an algorithm with complexity $O(d^2)$ for computing the corresponding phase angles, based on the leading coefficient of the polynomial and successive polynomial reductions \cite{gilyen}.

Alternatively, one may determine the phase vector through optimization. In practice, many implementations \cite{martyn, linlinenergy} adopt a heuristic or optimization-based approach to search for a suitable phase vector $\mathbf{\Phi}$ by solving
\begin{eqnarray*}
&&\min_{\mathbf{\Phi}} \|P(x)-\mathrm{Re}(\bra{0}U^{\mathrm{QSP}}{\Phi}(x)\ket{0})\|{L_2}\\
&&=\min_{\mathbf{\Phi}} \sqrt{\int_{-1}^{1}\left(P(x)-\mathrm{Re}(\bra{0}U^{\mathrm{QSP}}_{\Phi}(x)\ket{0})\right)^2dx}.
\end{eqnarray*}

This approach avoids repeatedly performing explicit phase constructions based on leading coefficients and recursive matrix operations by instead presenting the determination of $\mathbf{\Phi}$ as an optimization problem. Standard optimization techniques such as Nelder--Mead, gradient descent, and Levenberg--Marquardt \cite{kochenderfer2019algorithms,LM_method} can then be employed to obtain suitable phase parameters.

Such a formulation establishes a learning-based framework in which the underlying polynomial transformation is obtained through optimization rather than constructed solely through deterministic procedures. This perspective is particularly useful when the polynomial representation of a target continuous function is unknown, while still allowing the fitting of predefined polynomial functions. Consequently, the framework is well suited for constructing activation functions, which play a central role in machine learning frameworks like quantum assisted Kolmogorov-Arnold Networks\cite{liu2024kan,ivashkov2024qkan}.

A fundamental limitation of QSP is its inherent parity structure. Standard QSP generates polynomials of fixed parity, implying that approximating general functions containing both even and odd components often requires decomposing the target into separate parity contributions and implementing multiple QSP constructions. For high-degree polynomial approximations and continuous function representations, this separation can significantly increase circuit complexity and further aggravate error accumulation. In this regard, research has been carried out on Generalized Quantum Signal Processing (GQSP) \cite{GQSP}, where the parity constraints on polynomial realization are alleviated by employing a quantum circuit based on general $\mathrm{SU}(2)$ rotations as signal processing operators, rather than restricting to rotations in a single basis. This construction also utilizes additional ancilla qubits and is useful for approximating functions of a Hamiltonian $H$ without requiring the Linear Combination of Unitaries (LCU) framework that appears in certain applications of QSP\cite{chuang}.

Another significant limitation of standard QSP is that the number of trainable parameters scales linearly with the degree of the target polynomial. In particular, it has been shown in \cite{gilyen} that realizing a polynomial of degree $d$ requires at least $d+1$ phase parameters. As the polynomial degree increases, this directly translates into deeper quantum circuits and greater resource requirements. Increased circuit depth not only raises execution time but also amplifies the accumulation of hardware noise and gate imperfections, making implementations more susceptible to errors and reducing overall reliability on near-term quantum devices\cite{wilson2021empirical,explim}. These observations naturally motivate the question of whether QSP admits a more compact representation. In particular, one may ask whether QSP circuits contain structural redundancies that can be pruned to produce shallower circuits with fewer parameters while preserving the same expressive power. The pursuit of this perceptible goal motivates the formulation of the extension introduced in this work.



\subsection{Fundamentals of Weighted Quantum Signal Processing}

We consider an extension of QSP, coined as Weighted Quantum Signal Processing by integrating weight parameters into the former framework. Let us consider $\mathbf{w}=(w_1,\hdots,w_k)^T\in \mathbb{R}^{k}$. Then we define WQSP as \begin{eqnarray}\label{WQSP}
    U^{\mathrm{WQSP}}_{\boldsymbol{\Phi}_\mathbf{w}}(x)=\left(\prod_{j=1}^{k} R_z(\phi_j)U(x)^{w_{j}} \right)R_z(\phi_0)
\end{eqnarray}. 
WQSP as formalized here operates in the classical-input, function-generation regime where $x$ is a classical scalar. This allows the weight $w$ to be absorbed directly into a single gate rotation. The problem of extending this framework to a weighted black-box oracle access inside full QSVT remains an {open}. Clearly, when all weights satisfy $\mathbf{w}=\mathbf{1}$, that is, when each weight equals $1$, the formulation reduces to standard Quantum Signal Processing as defined in Equation~\ref{qspdef}. A simple observation further shows that restricting $\mathbf{w}$ to $\mathbb{N}^k$ leads to an interesting interpretation of WQSP as a structured and pruned variant of QSP. Specifically, from Equation~\ref{qspdef}, setting selected phase angles $\phi_j$ to zero causes adjacent applications of the operator $U(x)$ to combine, effectively producing powers of the form $U(x)^{w_j}$ for some $w_j\in\mathbb{N}$. Consequently, the corresponding QSP circuit can be viewed as undergoing a pruning procedure in which certain intermediate $R_z$ rotations, and in turn, the phase angles $\phi_j\in \mathbf{\Phi}$ are removed, yielding an equivalent WQSP circuit with natural-valued weights. 


It may initially appear that eliminating $R_z$ rotation gates has a significant consequence viz. a reduction in the available degrees of freedom required to realize arbitrary polynomials. In other words, this restriction suggests that certain polynomials may no longer be representable within this regime due to the loss of intermediate rotation parameters. However, we show that this perspective reveals a more nuanced and structurally rich phenomenon.

First, we demonstrate using WQSP that standard QSP contains a substantial amount of redundant parameters, where not all phase rotations are essential for polynomial generation. By exploiting this redundancy, we demonstrate that the proposed framework achieves reductions in both parameter count and circuit depth without compromising the approximation accuracy of the resulting polynomial transformations.

Second, for sufficiently high-degree polynomials $P(x)$, we show that, depending on the choice of weight functions, one can construct a polynomial $P'(x)$ such that $P'(x)$ lies close to $P(x)$ in the $L_2([-1,1])$ norm. In particular, we quantify their separation using $\|P - P'\|_{L_2([-1,1])} $. Furthermore, we derive explicit upper bounds governing this norm, thereby characterizing the effective approximation accuracy achievable within the WQSP framework. We also identify classes of weight functions for which this norm converges to zero. Consequently, for large classes of high-degree polynomials, we achieve linear, quadratic, and even exponential reductions in parameter count and circuit depth. Consequently, WQSP serves as a natural extension of standard QSP, extending its expressive capability from approximating polynomials to continuous function classes.


In this work, we show how, for a given polynomial, one can recover the phase vector $\mathbf{\Phi}$ in WQSP regime. In particular, we present a deterministic framework as well as a learning-based framework for realizing one-dimensional polynomials and continuous functions.

\section{Weighted Quantum Signal Processing: Function Generation}\label{polygen}
We now proceed to analyze the structural form of functions generated within the WQSP regime. However, we first establish several preliminaries required for the ensuing analysis.

Without loss of generality, we consider for some $k\in \mathbb{N}$, $\mathbf{w}=(w_1,\hdots,w_k)^T\in \mathbb{R}^k$ such that $w_1\geq \hdots \geq w_k$.  Let us denote the Hadamard gate $H=\frac{1}{\sqrt{2}}\bmatrix{1 & 1\\1&-1}$ and let us also consider the matrix 
\begin{eqnarray}\label{M_k}
M_k=\bmatrix{M_{k-1}&\mathbf{1}\\ M_{k-1}&-\mathbf{1}}    
\end{eqnarray}
 where $k\in \mathbb{N}$, $\mathbf{1}$ is the all one vector of size $2^{k-1}\times 1$ and $M_1=\bmatrix{1}, M_{2}=\sqrt{2}H=\bmatrix{1&1\\1&-1}$. Thus, $M_k$ is of size $2^{k-1}\times k$. 
\begin{definition}\label{Reach}
    We define the \textit{Reach} of $\mathbf{w}\in \mathbb{R}^k$ to be the set $\mathcal{R}_{\mathbf{w}}=\{|\sum_{j=1}^k\eta_jw_j|:\eta_p\in\{1,-1\}\forall j\in\{1,\hdots,L\}\}$.
\end{definition}
Clearly, $|\mathcal{R}_{\mathbf{w}}|\leq 2^{k-1}$ and all elements of $\mathcal{R}_{\mathbf{w}}$ have the same parity when $\mathbf{w}\in\mathbb{N}^k$. Then we have the following lemma.
\begin{lemma}\label{basic}
Let $k \in \mathbb{N}$ and let $\mathbf{w} \in \mathbb{R}^k$ be arranged as $w_1 \geq \cdots \geq w_k$. Let $\mathcal{R}_{\mathbf{w}}$ denote the Reach of $\mathbf{w}$ as defined in Definition~\ref{Reach}. Then the elements of $\mathcal{R}_{\mathbf{w}}$ are given by the absolute values of the components of the vector $M_k \mathbf{w}$ where $M_k$ is defined in Equation \ref{M_k}.
\end{lemma}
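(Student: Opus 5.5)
The plan is to show by induction on $k$ that the rows of $M_k$ are exactly the sign vectors $\boldsymbol{\eta}=(\eta_1,\dots,\eta_k)\in\{1,-1\}^k$ with first entry $\eta_1=1$, each such vector appearing exactly once. The first step is to note that $M_k$, as literally written in Equation~\ref{M_k}, has $2^{k-1}$ rows and $k$ columns: the block $M_{k-1}$ is $2^{k-2}\times(k-1)$, and stacking it twice and appending the column $\pm\mathbf{1}$ produces a $2^{k-1}\times k$ matrix. This matches the stated size. The base cases are direct: $M_1=[1]$ has the single row $(1)$, and $M_2$ has rows $(1,1)$ and $(1,-1)$.

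For the inductive step, assume the rows of $M_{k-1}$ enumerate all $\boldsymbol{\eta}'\in\{\pm1\}^{k-1}$ with $\eta'_1=1$. The rows of $M_k$ are then the vectors $(\boldsymbol{\eta}',1)$ and $(\boldsymbol{\eta}',-1)$. These are exactly all vectors in $\{\pm1\}^k$ with first entry $1$, and they are distinct, giving $2\cdot 2^{k-2}=2^{k-1}$ rows. It follows that the components of $M_k\mathbf{w}$ are precisely the sums $\sum_j\eta_jw_j$ over sign patterns with $\eta_1=1$.

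The second step uses the symmetry $\boldsymbol{\eta}\mapsto-\boldsymbol{\eta}$. Since $\bigl|\sum_j(-\eta_j)w_j\bigr|=\bigl|\sum_j\eta_jw_j\bigr|$, every element of $\mathcal{R}_{\mathbf{w}}$ is attained by some sign vector with $\eta_1=1$. Hence $\mathcal{R}_{\mathbf{w}}=\{|(M_k\mathbf{w})_i|: 1\le i\le 2^{k-1}\}$, which is the claim. As a by-product this also gives the bound $|\mathcal{R}_{\mathbf{w}}|\le 2^{k-1}$ stated before the lemma.

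The ordering $w_1\ge\cdots\ge w_k$ is not needed for this set equality. It only becomes useful later, for example to see that $(M_k\mathbf{w})_i\ge0$ in certain rows.

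The main obstacle is not mathematical but notational. The recursion in Equation~\ref{M_k} and the index range written in Definition~\ref{Reach} (``$j\in\{1,\dots,L\}$'') have to be read consistently: the sign choice runs over $j=1,\dots,k$, and the appended column is the new last coordinate. I would state this reading explicitly before running the induction.
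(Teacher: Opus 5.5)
Your proof is correct and follows essentially the same route as the paper: induction on $k$ through the block recursion, which splits $M_{m+1}\mathbf{w}$ into the two halves $M_m\mathbf{w}^{(m)}\pm w_{m+1}\mathbf{1}$. The difference is one of explicitness: you track the rows of $M_k$ as the sign patterns with $\eta_1=1$ and then invoke the flip $\boldsymbol{\eta}\mapsto-\boldsymbol{\eta}$ openly, and that flip is exactly the step the paper's ``the rest of the proof follows immediately'' leaves implicit when it runs the induction through absolute values.
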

\pf For $k=1$, the proof is trivial. Let, for $k=m$, the hypothesis holds for $\mathbf{w^{(m)}}\in \mathbb{R}^{m}$. Then for $k=m+1$, we have $M_{m+1}\mathbf{w^{(m+1)}}=\bmatrix{M_{m}&\mathbf{1}\\ M_{m}&-\mathbf{1}}\mathbf{w^{(m+1)}}$. 
Then, $M_{m+1}\mathbf{w}$ leads us to 
\begin{eqnarray}\label{mkw}
M_{m+1}\mathbf{w^{(m+1)}}=\bmatrix{M_m\mathbf{w^{(m)}}+\mathbf{1}w_{m+1}\\M_m\mathbf{w^{(m)}}-\mathbf{1}w_{m+1}}    
\end{eqnarray}
So for all the elements of $\mathcal{R}_{\mathbf{w^{(m)}}}$ obtained from $M_m\mathbf{w^{(m)}}$, an additional $w_{m+1}$ is added and then subtracted. The rest of the proof follows immediately. $\hfill\square$

Thus elements of the reach are obtained from the weights and the $M_k$ matrix. We also define the following

\begin{definition}\label{expv}
    Let $f:\mathbb{F}\rightarrow\mathbb{F},\mathbb{F}\in \{\mathbb{R},\mathbb{C}\}$ be a real or complex analytic function and $\mathbf{v}=\bmatrix{v_1&\hdots&v_k}^T\in \mathbb{F}^k$. Then, define $f_\mathbf{v}:\mathbb{F}^k\rightarrow\mathbb{F}^k, f_{\mathbf{v}}:=[f(v_1), \hdots, f(v_k)]^T$
\end{definition} For example, take $\exp_{(v)}=\bmatrix{\exp{(v_1)}&\hdots&\exp{(v_l)}}^T$ and similarly we define $\cos_{(v)}, \sin_{(v)},$ etc.

 Furthermore, starting from $N_1=1=L_1$, let us define \begin{eqnarray}\label{NkLk}
    N_k&=&\frac{1}{2}\bmatrix{N_{k-1}&L_{k-1}\\N_{k-1}&-L_{k-1}}\\\nonumber
    L_k&=&\frac{1}{2}\bmatrix{N_{k-1}&L_{k-1}\\-N_{k-1}& L_{k-1}}
\end{eqnarray}. In such cases $N_2=\frac{1}{2}\bmatrix{1&1\\1&-1},L_2=\frac{1}{2}\bmatrix{1&1\\-1&1}$ and thus $N_k$ and $L_k$ are matrices of order $2^{k-1}\times 2^{k-1}$. We also have the following proposition about the structure of $N_k$ and $L_k$. For $k\geq 2$, the matrices $N_k$ and $L_k$ are of the form $\tilde{P}(\frac{H}{\sqrt{2}})^{\otimes k-1}$ where $\tilde{P}$ is a permutation matrix. This follows from simple induction by using the definition of $L_k$ and $N_k$. Although, for $k=2$, $N_k=\frac{1}{\sqrt{2}}H$ i.e. $\tilde{P}=I$, for $k\geq 3$, however, $\tilde{P}$ can be calculated in the following way. 
\begin{proposition}\label{permutation}
 Let $k\geq 3$ and $N_k$ be the matrix defined in Equation \ref{NkLk}. Let us also define a cycle for $\alpha\in \mathbb{N},\alpha\neq1$, $\mathcal{C}_l(\alpha,2^{k-1})=\{\alpha+l\beta\pmod{2^{k-1}}|\beta\in\{0,1\hdots,\}\}$ and further let for some $m\in \mathbb{N}$, the set $\{\alpha_1,\alpha_2,...,\alpha_m\}$ be such that $2=\alpha_1<\hdots<\alpha_m<2^{k-1}$ and $\forall j\neq i, j\in\{1,\hdots,m\}, \alpha_j\not\in\mathcal{C}_{2^i}(\alpha_i,2^{k-1})$. Then $N_k=\tilde{P}(\frac{1}{\sqrt{2}}H)^{\otimes k-1}$ where $\tilde{P}$ is a permutation matrix such that $\tilde{P}=\prod_{j=1}^{k-2}P_{(\mathcal{C}_{2^j}(\alpha_j,2^{k-1}))}$   
\end{proposition}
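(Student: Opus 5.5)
The plan is to split the proof into two parts. The first part shows that $N_k$ is a row permutation of the normalized Sylvester--Hadamard matrix $(\tfrac{1}{\sqrt2}H)^{\otimes k-1}$ and gives an explicit formula for the row map $\sigma_k$. The second part decomposes $\sigma_k$ into the cycles $\mathcal{C}_{2^j}(\alpha_j,2^{k-1})$ named in the statement.

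\textbf{Part 1: $N_k$ is a row-permuted Hadamard matrix.} I would prove, by simultaneous induction on $k$, that every row of $N_k$ and of $L_k$ has the form $2^{-(k-1)/2}\big((-1)^{s\cdot t}\big)_{t\in\{0,1\}^{k-1}}$ for some $s\in\{0,1\}^{k-1}$. Here rows and columns are indexed by binary strings in the same big-endian order used for $H^{\otimes k-1}$.
\begin{itemize}
\item The base case $k=2$ is $N_2=\tfrac{1}{\sqrt2}H$ and $L_2=\tfrac{1}{\sqrt2}\,XH$, where $X$ swaps the two rows.
\item For the inductive step, consider a row of $N_k$ or $L_k$ that is built from row $r$ of $N_{k-1}$ and row $r$ of $L_{k-1}$. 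Suppose these two rows carry the labels $s=\sigma_{k-1}(r)$ and $s'=\tau_{k-1}(r)$.
\item Such a row of the form $[\,\cdot\,\mid\,\pm\cdot\,]$ is a Hadamard row only if $s=s'$, with the new leading bit set by the sign in front of the right block.
\item So the key inductive invariant to prove is that $N_{k-1}$ and $L_{k-1}$ have the \emph{same} row labels up to a fixed relabelling. Concretely, I expect $L_{k-1}$ to equal $N_{k-1}$ with its top and bottom halves interchanged. That interchange maps label $s$ to the label with a complementary top bit, and it is compensated by the block structure at the next level.
\end{itemize}
I would pin this invariant down on $k=2,3$ and then carry it through the induction. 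From the recursion, this gives an explicit rule: $\sigma_k(r)$ is obtained from the binary expansion of $r$ by a fixed bit operation, probably a bit reversal composed with prefix XORs (a Gray-code-type map). This already establishes $N_k=\tilde P(\tfrac{1}{\sqrt2}H)^{\otimes k-1}$ with $\tilde P=P_{\sigma_k}$.

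\textbf{Part 2: cycle decomposition.} I would write $\sigma_k$ arithmetically on $\{0,\dots,2^{k-1}-1\}$. The aim is to show that its orbits are residue classes of the form $\{\alpha+l\beta \bmod 2^{k-1}\}$ with step $l=2^j$. To do this, I would compute how $\sigma_k$ acts on numbers sharing their lowest $j$ bits: there the map should advance by $2^j$ modulo $2^{k-1}$. Then, choosing the least representative $\alpha_j$ of each new orbit, starting from $\alpha_1=2$ (index $1$ being a fixed point, which explains the exclusion $\alpha\neq1$), yields exactly the disjointness condition $\alpha_j\notin\mathcal{C}_{2^i}(\alpha_i,2^{k-1})$. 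Since the orbits are disjoint, the cycle permutations commute, so their product is well defined. Counting gives one orbit for each $j=1,\dots,k-2$, and comparing with $\sigma_k$ on these orbits gives $\tilde P=\prod_{j=1}^{k-2}P_{(\mathcal{C}_{2^j}(\alpha_j,2^{k-1}))}$.

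\textbf{Main obstacle.} I expect the hardest step to be Part 2. The cycle description in the statement is informal: the index range of $\beta$ and the meaning of $P_{(\mathcal{C})}$ as a cyclic shift along the orbit must be fixed. The orbit structure of a bit-level permutation also need not be arithmetic progressions in general. My first attempt would be to tabulate $\sigma_k$ explicitly for $k=3,4,5$, read off the orbits, and adjust the closed form of $\sigma_k$ from Part 1 so that the progression structure becomes visible. Only then would I prove it in general by induction on $k$, using $\sigma_{k}(r)$ versus $\sigma_{k}(r+2^{k-2})$, which is the top/bottom block relation in Equation~\ref{NkLk}.
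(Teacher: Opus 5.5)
\textbf{The gap.} Your Part 2 cannot be completed, because the cycle formula it is meant to prove is false for every $k\ge 5$. It holds only for $k=3,4$, and the paper's own $k=5$ example is also wrong. Your Part 1, once its invariant is corrected, is exactly what shows this.

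\textbf{Correcting Part 1.} The invariant you guess is wrong. In fact $L_2=\tfrac12\begin{pmatrix}1&1\\-1&1\end{pmatrix}=\mathrm{diag}(1,-1)\,\tfrac{1}{\sqrt2}H$, not $\tfrac{1}{\sqrt2}XH$. More generally, Equation~\ref{NkLk} gives immediately $L_k=\mathrm{diag}(I,-I)\,N_k$: the bottom half is negated, not swapped. So the rows of $L_k$ are \emph{signed} Hadamard rows, and the normalisation is $2^{-(k-1)}$, not $2^{-(k-1)/2}$. Substituting into the recursion gives
\[ N_k=\tfrac12\begin{pmatrix} N_{k-1} & D\,N_{k-1}\\ N_{k-1} & -D\,N_{k-1}\end{pmatrix},\qquad D=\mathrm{diag}(I,-I). \]
Hence row $r=(r_1\cdots r_{k-1})_2$ of $N_k$ (0-indexed, big-endian) is row $\sigma_k(r)$ of $(\tfrac{1}{\sqrt2}H)^{\otimes k-1}$, where $\sigma_k(r)_i=r_i\oplus r_{i+1}$ for $i<k-1$ and the last bit is kept. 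Equivalently, $\sigma_k(r)=r\oplus(2r \bmod 2^{k-1})$. This proves the first claim, $N_k=\tilde P(\tfrac{1}{\sqrt2}H)^{\otimes k-1}$. It also reproduces $(2\,4)$ for $k=3$ and $(2\,4\,6\,8)(3\,7)$ for $k=4$. It is not a bit reversal.

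\textbf{Why Part 2 fails.} The map $\sigma_k$ does not ``advance by $2^j$'' on residue classes. Identify $r$ with a polynomial over $\mathbb{F}_2$ modulo $x^{k-1}$, with bit $j$ as the coefficient of $x^j$. Then $\sigma_k$ is multiplication by $1+x$, and $(1+x)^{2^m}=1+x^{2^m}$.

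Consequently, every $r$ whose lowest set bit is at position $j$ lies in an orbit of length $2^{\lceil\log_2(k-1-j)\rceil}$. That class has $2^{k-2-j}$ elements, so it forms a single cycle only when $k-1-j\le 3$.

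For $k=5$, in 1-indexed labels, this gives $\tilde P=(2\,4\,6\,16)(8\,10\,12\,14)(3\,7\,11\,15)(5\,13)$. There are four nontrivial cycles rather than $k-2=3$, and the even indices split into two $4$-cycles instead of forming $\mathcal{C}_2(2,16)$.

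You can check this directly. Row $6$ of $N_5$ is $\tfrac1{16}\big((-1)^{t_1+t_2+t_3+t_4}\big)_t$, which is row $16$ of $(\tfrac1{\sqrt2}H)^{\otimes 4}$. Any permutation containing the cycle $(2,4,\dots,16)$ would instead match it with row $4$ or row $8$. Reversing the orientation of the cycles does not change the cycle type.

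So both the ``advance by $2^j$'' step and the ``one orbit per $j$'' count fail. Tabulating $k=5$, as you planned in your obstacle paragraph, refutes the statement rather than refining it. The paper's one-line appeal to induction does not address this either. What can actually be proved is Part 1 plus the orbit description above, with the stated product formula valid only for $k\in\{3,4\}$.
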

\pf Follows from simple induction and definition of $N_k$. $\hfill\square$.\\

\textbf{Examples:}\begin{enumerate}
    \item When $k=3$, then $\tilde{P}=P_{(2,4)}$.(Verify that $N_3=\frac{1}{4}\bmatrix{1&1&1&1\\1&-1&-1&1\\1&1&-1&-1\\1&-1&1&-1}$)
    \item When $k=4$, then $\tilde{P}=P_{(2,4,6,8)}P_{(3,7)}$.
 \item When $k=5$, then $\tilde{P}=P_{(2,4,6,8,10,12,14,16)}P_{(3,7,11,15)}P_{(5,13)}$ and so on where $P_{(a_1,\hdots,a_m)}$ denotes a permutation cycle.
\end{enumerate}

Now, consider $\mathcal{S}(M_k\mathbf{w},l)=\{j \mid |e_j^T M_k\mathbf{w}|=l\}, \; l\in\mathcal{R}_{\mathbf{w}}$, where $e_j$ is the standard basis vector with $1$ in the $j$-th component and $0$ elsewhere. $\mathcal{S}(M_k\mathbf{w},l)$ can be considered inducing a value-based equivalence relation on the index set ${1,\dots,2^{k-1}}$, which partitions indices according to equal entries of $M_k\mathbf{w}$.

We arrange the elements of $\mathcal{R}_{\mathbf{w}}$ in decreasing order as $\{r_1,r_2,\ldots\}$. Denote $\mathcal{R}_{\mathbf{w}}(m)$ to be the $m$-th element of $\mathcal{R}_{\mathbf{w}}$. Further, consider a matrix $B_{M_k\mathbf{w}}$ of size $p\times 2^{k-1}$, where $p$ is the cardinality of $\mathcal{R}_{\mathbf{w}}$. The rows of $B_{M_k\mathbf{w}}$ are defined as follows:
\begin{eqnarray}\label{bk}
B_{M_k\mathbf{w}}(r,:) = \sum_{j\in \mathcal{S}(M_k\mathbf{w},\mathcal{R}_{\mathbf{w}}(r))} e_j^T
\end{eqnarray}
for $r\in \{1,\ldots,p\}$. Then the matrix $B_{M_k\mathbf{w}}$, when multiplied with $N_k$, sums and aggregates the appropriate rows of $N_k$ corresponding to a specific $T_{r_j}$, $r_j \in \mathcal{R}_{\mathbf{w}}$, collapsing them into a single row for each such $r_j$ appearing in the components of $M_k\mathbf{w}$.

For example, let $k=3, \mathbf{w}=[1,1,1]^T$. Then $M_3\mathbf{w}=[3,1,1,-1]^T$ and $\mathcal{R}_{\mathbf{w}}=\{3,1\}$ and thus, $B_{M_3\mathbf{w}}N_k$ is a $2\times 4$ matrix where \begin{eqnarray}
    B_{M_3\mathbf{w}}N_k&&=\bmatrix{1&0&0&0\\0&1&1&1}\frac{1}{4}\bmatrix{1&1&1&1\\1&-1&-1&1\\1&1&-1&-1\\1&-1&1&-1}\\\nonumber
    &&=\frac{1}{4}\bmatrix{1&1&1&1\\3&-1&-1&-1}
\end{eqnarray} When all of the elements of $M_k\mathbf{w}$ are distinct, $B_{M_k\mathbf{w}}=I$ (Identity matrix). This happens when all components of $\mathbf{w}$ are distinct and the proof is quite direct. We also have the following lemma.

\begin{lemma}\label{bklemma}
 Let $B_{M_k\mathbf{w}}$ be the matrix defined in Equation \ref{bk}. Then $B_{M_k\mathbf{w}}$ has a full row rank and singular values of $B_{M_k\mathbf{w}}$ are $\mathcal{S}(M_k\mathbf{w},\mathcal{R}_{\mathbf{w}}{(r)})$ where $1\leq r\leq |\mathcal{S}(M_k\mathbf{w},\mathcal{R}_{\mathbf{w}})|$ with the minimum singular value equal to $1$. 
\end{lemma}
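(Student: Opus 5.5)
The plan is to exploit the fact that $B_{M_k\mathbf{w}}$ is a $0/1$ "indicator" matrix whose rows have pairwise disjoint supports. By the definition in Equation~\ref{bk}, the $r$-th row is the indicator vector of the index set $\mathcal{S}(M_k\mathbf{w},\mathcal{R}_{\mathbf{w}}(r))$. Because the relation $|e_j^TM_k\mathbf{w}|=l$ is a value-based equivalence relation on $\{1,\dots,2^{k-1}\}$, these sets for distinct $r$ are disjoint. By Lemma~\ref{basic}, every element of $\mathcal{R}_{\mathbf{w}}$ occurs as some $|e_j^TM_k\mathbf{w}|$, so each set is nonempty. Their union is the entire index set. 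Hence each column of $B_{M_k\mathbf{w}}$ contains exactly one $1$, and each row contains at least one $1$.

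The key computation is the Gram matrix $B_{M_k\mathbf{w}}B_{M_k\mathbf{w}}^T$. Its $(r,s)$ entry is the size of the intersection of the $r$-th and $s$-th index sets. By disjointness this equals $\delta_{rs}\,|\mathcal{S}(M_k\mathbf{w},\mathcal{R}_{\mathbf{w}}(r))|$, so
\[
B_{M_k\mathbf{w}}B_{M_k\mathbf{w}}^T=\diag\big(|\mathcal{S}(M_k\mathbf{w},\mathcal{R}_{\mathbf{w}}(1))|,\dots,|\mathcal{S}(M_k\mathbf{w},\mathcal{R}_{\mathbf{w}}(p))|\big),
\]
with $p=|\mathcal{R}_{\mathbf{w}}|$. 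Every diagonal entry is a positive integer, so this $p\times p$ matrix is invertible and $B_{M_k\mathbf{w}}$ has full row rank $p$.

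The singular values of $B_{M_k\mathbf{w}}$ are the square roots of the eigenvalues of this diagonal Gram matrix, namely $\sqrt{|\mathcal{S}(M_k\mathbf{w},\mathcal{R}_{\mathbf{w}}(r))|}$ for $1\le r\le p$. The statement as written identifies the singular values with the sets $\mathcal{S}(\cdot)$ themselves. I will therefore make precise that this refers to the cardinalities, up to the square root. The squared singular values are exactly the class sizes, and the two descriptions coincide whenever the class sizes are $1$. In particular, when all entries of $M_k\mathbf{w}$ are distinct, every class size is $1$ and $B_{M_k\mathbf{w}}=I$.

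The main obstacle is the claim that the minimum singular value equals $1$. Since every class has size at least $1$, the lower bound $\sigma_{\min}\ge 1$ is immediate. For equality one needs some $r$ whose class is a singleton. The natural candidate is $r=1$, the largest reach value $r_1=\sum_j|w_j|$. For $w_1\ge\dots\ge w_k\ge 0$, $r_1$ is attained only by the sign pattern $\eta=(1,\dots,1)$, which is the first row of $M_k$. Its negation is not a row of $M_k$, since the first column of $M_k$ is all ones by the recursion in Equation~\ref{M_k}.

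Care is needed with degenerate cases. If some $w_j=0$, two sign patterns give the same value, and the singleton claim can fail; for instance $k=2$, $\mathbf{w}=(1,0)$ gives $M_2\mathbf{w}=(1,1)$ and $\sigma_{\min}=\sqrt2$. I would therefore state the equality $\sigma_{\min}=1$ under the natural standing hypothesis that the weights are positive, which holds in the integer-weight setting. Under that hypothesis the argument above shows the class of $r_1$ is a singleton. In the general case I would only record $\sigma_{\min}\ge 1$.
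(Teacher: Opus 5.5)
Your argument is correct and is essentially the paper's proof: disjoint row supports make $B_{M_k\mathbf{w}}B_{M_k\mathbf{w}}^T$ diagonal with the class sizes on the diagonal, which gives full row rank. The fact that the top reach value $r_1=\sum_j w_j$ is attained only by the all-ones first row of $M_k$ then gives $\sigma_{\min}=1$. Your two refinements are also right and go beyond what the paper checks. First, the singular values are $\sqrt{|\mathcal{S}(M_k\mathbf{w},\mathcal{R}_{\mathbf{w}}(r))|}$, not the sets themselves or their sizes. Second, the singleton claim needs strictly positive weights, as your example $\mathbf{w}=(1,0)$ shows, so your sentence with $w_k\ge 0$ should read $w_k>0$.
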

\pf From the definition of $B_{M_k\mathbf{w}}$, it is immediately clear that all the rows are orthogonal to each other. Hence, it has a full row-rank. Thus, $BB^T$ yields a diagonal matrix with entries as sum of the row elements. Since $M_k$ has only one column and row with all $1$'s, the first row of $B_{M_k\mathbf{w}}$, from definition is always $e_1^T$. Thus the minimum singular value of $B_{M_k\mathbf{w}}$ is 1. $\hfill\square$

Furthermore, for any $\alpha\in \mathbb{R}$, let
\begin{eqnarray}\label{cheb1}
    T_\alpha(\cos{\theta})&&=\cos{(\alpha\theta)}
\end{eqnarray}
and similarly,

\begin{eqnarray}\label{cheb2}
    S_{\alpha}(\cos{\theta }) =\frac{\sin((\alpha+1)\theta)}{\sin\theta} 
\end{eqnarray}

For $\alpha\in \mathbb{N}$, $T$ and $S$ become Chebyshev polynomials of first and second kind respectively with degree $\alpha$.  Then we have the following result.



\begin{theorem}\label{WQSPsum}
    Let $k\in \mathbb{N}$ and $\mathbf{w} = (w_1, \dots, w_k)^T \in \mathbb{N}^k$ such that $w_1\geq \hdots \geq w_k$and \begin{eqnarray*}
    U^{\mathrm{WQSP}}_{\boldsymbol{\Phi}_\mathbf{w}}(x)=\left(\prod_{j=1}^{k}  R_z(\phi_{j})U(x)^{w_j}\right)R_z(\phi_0)
\end{eqnarray*}. Let $P(x)=\bra{0} U^{\mathrm{WQSP}}_{\boldsymbol{\Phi}_\mathbf{w}}(x)\ket{0}, Q(x)=\bra{0} U^{\mathrm{WQSP}}_{\boldsymbol{\Phi}_\mathbf{w}}(x)\ket{1}$. Further, let $\mathcal{R}_{\mathbf{w}}=\{r_1>r_2>\cdots >r_p\},\text{such that }p\leq 2^{k-1}$ and denote $\mathcal{R}_{\mathbf{w}}(m):=r_m$ denotes the $m$-th largest distinct element of $\mathcal{R}_{\mathbf{w}}$ Then \begin{enumerate}
   \item $P(x)\in \mathbb{C}[x]$ such that $P=\sum_{j=1}^{p}c_{r_j}\exp{(i\phi_0)}T_{r_j}$ where $r_j\in \mathcal{R}_{\mathbf{w}}$ as described in Definition (\ref{Reach}) and $T_{r_j}$ is the Chebyshev polynomial of the first kind with degree $r_j$. 
        \item $Q(x)=i\sqrt{1-x^2}\sum_{j=1}^{p}d_{r_j}\exp{(-i\phi_0)}S_{r_j-1}\in \mathbb{C}[x]$ where $S_{r_j-1}$ is the Chebyshev polynomial of the second kind with degree $r_j-1$, 
    \end{enumerate} where $c_{r_j},d_{r_j}\in \mathbb{C}\forall r_j\in \mathcal{R}_{\mathbf{w}} $. 
   Moreover, \begin{enumerate}
        \item $\mathbf{c}(\mathcal{R}_{\mathbf{1}})=(c_{r_1},\hdots,c_{r_{p}})^T$,
        \item $\mathbf{d}(\mathcal{R}_{\mathbf{1}})=(d_{r_1},\hdots,d_{r_{p}})^T$
        \item $\Phi=\bmatrix{\phi_1&\phi_2&\hdots&\phi_{k-1}&\phi_{k}}^T\in \mathbb{R}^k$
    \end{enumerate}  . Then, \begin{eqnarray}\label{cd2}
    \mathbf{c}(\mathcal{R}_{\mathbf{w}})&=&B_{M_k\mathbf{w}}N_k\exp_{(M_k(i\mathbf{\Phi}_{\mathbf{w}}))}\\\nonumber
    \mathbf{d}(\mathcal{R}_{\mathbf{w}})&=&B_{M_k\mathbf{w}}L_k\exp_{(M_k(i\mathbf{\Phi}_{\mathbf{w}}))}
\end{eqnarray} where $N_k,L_k$ is defined in Equation \ref{NkLk}, $\exp_{(M_k(i\mathbf{\Phi}_{\mathbf{w}}))}$ is defined in Definition \ref{expv} and $B_{M_k\mathbf{w}}$ is a $p\times 2^{k-1}$ matrix is defined in Equation \ref{bk}.
\end{theorem}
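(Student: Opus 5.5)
The plan is to diagonalize the signal operator and expand the product as a sum over sign patterns. Since $U(x)=R_x(\theta)=\exp(i\theta\sigma_x)$, we have $U(x)^{w_j}=\exp(iw_j\theta\sigma_x)=H\exp(iw_j\theta\sigma_z)H$. Writing $R_z(\phi_j)=\exp(i\phi_j\sigma_z)$, the circuit becomes a product of factors that are diagonal either in the computational basis or in the Hadamard basis.

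The cleanest route is to expand each factor as
\[
U(x)^{w_j}=\cos(w_j\theta)I+i\sin(w_j\theta)\sigma_x
\]
and to track the off-diagonal flips. Equivalently, write $U(x)^{w_j}=\sum_{\eta_j=\pm1}\tfrac12 e^{i\eta_j w_j\theta}(I+\eta_j\sigma_x)$. Multiplying out over $j=1,\dots,k$ then produces $2^k$ terms indexed by $\boldsymbol\eta\in\{\pm1\}^k$, each with phase $e^{i\theta\sum_j\eta_jw_j}$.

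Commuting the $R_z$ rotations through the projectors $\tfrac12(I+\eta_j\sigma_x)$, using $\sigma_x R_z(\phi)=R_z(-\phi)\sigma_x$, gives the $\ket{0}\bra{0}$ entry as a sum of terms $e^{i\theta\,\boldsymbol\eta\cdot\mathbf w}\,e^{i\,\boldsymbol\epsilon\cdot\boldsymbol\Phi}$ with coefficients $\pm2^{-k}$. Here the signs $\boldsymbol\epsilon$ are determined by the flip pattern. Pairing $\boldsymbol\eta$ with $-\boldsymbol\eta$ combines the exponentials into $\cos(r\theta)=T_r(x)$ and $i\sin(r\theta)=i\sqrt{1-x^2}\,S_{r-1}(x)$, where $r=|\boldsymbol\eta\cdot\mathbf w|$. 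By Lemma~\ref{basic}, these values $r$ are exactly the absolute entries of $M_k\mathbf w$. The $\phi_0$ dependence factors out as $e^{\pm i\phi_0}$ because $R_z(\phi_0)$ acts on $\ket0$ directly.

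The next step is to prove the coefficient formula by induction on $k$, mirroring the recursive definitions of $M_k$, $N_k$ and $L_k$. Suppose that after $m$ layers the top-left entry is $\sum_{s}a^{(m)}_s T_{|(M_m\mathbf w)_s|}$ and the top-right entry is $i\sqrt{1-x^2}\sum_s b^{(m)}_s S_{|(M_m\mathbf w)_s|-1}$. Here we take the unaggregated vectors $\mathbf a^{(m)}=N_m\exp_{(M_m(i\boldsymbol\Phi))}$ and $\mathbf b^{(m)}=L_m\exp_{(M_m(i\boldsymbol\Phi))}$. Appending $R_z(\phi_{m+1})U(x)^{w_{m+1}}$ requires the product formulas
\[
2T_aT_b=T_{a+b}+T_{|a-b|},\qquad (1-x^2)S_{a-1}S_{b-1}\ \text{in terms of } T_{a\pm b},
\]
together with their mixed analogues. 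These send each index $r$ to $r+w_{m+1}$ and $r-w_{m+1}$, which is exactly the block structure $\bmatrix{M_m&\mathbf1\\M_m&-\mathbf1}$. The block structure of $N_{m+1}$ and $L_{m+1}$ then encodes how the old $T$ and $S$ coefficients mix with sign $\pm$ and the factor $\tfrac12$. The exponent vector gains $\pm\phi_{m+1}$ in the corresponding halves, matching $M_{m+1}(i\boldsymbol\Phi)$. Finally, entries of $M_k\mathbf w$ sharing the same absolute value produce the same $T_r$ or $S_{r-1}$. Summing those coefficients is exactly left multiplication by $B_{M_k\mathbf w}$ from Equation~\eqref{bk}, which yields \eqref{cd2}. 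Negative entries can be handled via $T_{-r}=T_r$ and $S_{-r-1}=-S_{r-1}$.

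I expect the main obstacle to be the bookkeeping in the inductive step. One has to verify that the sign and phase conventions produce precisely $N_{m+1}$ and $L_{m+1}$ as defined, with the $L$ block having $-N_{m}$ in the lower left, and not some row permutation of them. The row ordering must also stay consistent with that of $M_{m+1}$. I would first check this by hand for $k=2$ and for $k=3$ against the explicit $N_3$, and only then write the general induction. If the ordering fails, the fallback is to absorb the discrepancy into a permutation, as in Proposition~\ref{permutation}.
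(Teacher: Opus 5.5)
Your route is the same as the paper's: induct on $k$ by appending $R_z(\phi_{m+1})U(x)^{w_{m+1}}$, apply product-to-sum identities, and collapse equal reach values with $B_{M_k\mathbf{w}}$. For $\mathbf{c}$ this works, because cosine is even.

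\textbf{The gap is the final step for $\mathbf{d}$.} You claim that summing the coefficients of entries with equal $|(M_k\mathbf{w})_s|$ ``is exactly left multiplication by $B_{M_k\mathbf{w}}$''. Your own identity $S_{-r-1}=-S_{r-1}$ contradicts this. Since $\sin(v\theta)=\mathrm{sgn}(v)\sqrt{1-x^2}\,S_{|v|-1}(x)$, indices with $(M_k\mathbf{w})_s<0$ must enter the $\mathbf{d}$-aggregation with a minus sign. What the induction actually yields is
\[
\mathbf{d}(\mathcal{R}_{\mathbf{w}})=B_{M_k\mathbf{w}}\,\mathrm{diag}\big(\mathrm{sgn}_{(M_k\mathbf{w})}\big)\,L_k\exp_{(M_k(i\mathbf{\Phi}_{\mathbf{w}}))}
\]
(the value of $\mathrm{sgn}(0)$ is irrelevant since $S_{-1}\equiv 0$). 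The unsigned formula in the statement fails in general once $M_k\mathbf{w}$ has negative entries, which already happens for $\mathbf{w}=\mathbf{1}$ with $k\ge 3$.

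Here is a concrete counterexample. Take $\mathbf{w}=(1,1,1)^T$, $\phi_0=\phi_1=\phi_3=0$, $\phi_2=\alpha$. Then $M_3\mathbf{w}=(3,1,1,-1)^T$ and $L_3\exp_{(M_3(i\mathbf{\Phi}))}=(\cos\alpha,0,0,-i\sin\alpha)^T$, so the stated formula gives $d_1=-i\sin\alpha$. Multiplying out $U(x)R_z(\alpha)U(x)^2$ directly gives $Q=i\sqrt{1-x^2}\,(\cos\alpha\,S_2+i\sin\alpha\,S_0)$, i.e.\ $d_1=+i\sin\alpha$.

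The paper's proof has the same oversight. It carries $\sin((r_j-w_{m+1})\theta)$ with a possibly negative argument and then reads it as an $S$-term without the sign. So the $\mathbf{d}$-identity cannot be proved as stated; you should prove the signed version above instead.

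For the same reason, your inductive hypothesis is mis-stated. With top-right entry $i\sqrt{1-x^2}\sum_s b^{(m)}_s S_{|(M_m\mathbf{w})_s|-1}$, it is already false at $m=3$ by the example above. State it with signed sines, $i\sum_s b^{(m)}_s\sin\big((M_m\mathbf{w})_s\theta\big)$, and convert to $S_{r-1}$ only at the very end.

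With that change, the bookkeeping obstacle you anticipate does not arise. Put $\mathbf{e}=\exp_{(M_m(i\mathbf{\Phi}))}$, $\mathbf{a}=N_m\mathbf{e}$, $\mathbf{b}=L_m\mathbf{e}$. The product-to-sum step gives:
\begin{itemize}
\item on the block with values $M_m\mathbf{w}+w_{m+1}\mathbf{1}$: $\tfrac12(\mathbf{a}e^{i\phi_{m+1}}+\mathbf{b}e^{-i\phi_{m+1}})$ for both $P$ and $Q$;
\item on the block with values $M_m\mathbf{w}-w_{m+1}\mathbf{1}$: $\tfrac12(\mathbf{a}e^{i\phi_{m+1}}-\mathbf{b}e^{-i\phi_{m+1}})$ for $P$, and $\tfrac12(-\mathbf{a}e^{i\phi_{m+1}}+\mathbf{b}e^{-i\phi_{m+1}})$ for $Q$.
\end{itemize}
This is exactly $N_{m+1}$, resp.\ $L_{m+1}$, applied to $\exp_{(M_{m+1}(i\mathbf{\Phi}))}=(e^{i\phi_{m+1}}\mathbf{e}^T,\,e^{-i\phi_{m+1}}\mathbf{e}^T)^T$, with no permutation needed.

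Keeping the vectors unaggregated until the end, as you do, is better than the paper's split into a ``distinct weights'' case. Distinct weights do not force distinct $|(M_k\mathbf{w})_s|$: for example, $\mathbf{w}=(4,3,2,1)^T$ produces the value $4$ twice.
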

\pf We start off with the case where all $w_j$'s are distinct. i.e. $w_1>\hdots >w_k$ and thus $B_{M_k\mathbf{w}}=I, p=2^{k-1}$ and $M_k\mathbf{w}$ possess no repeated components. Let $k=1$. Then $ U^{\mathrm{WQSP}}_{\boldsymbol{\Phi}_\mathbf{w}}(x)= R_z(\phi_1)U(x)^{w_{1}} R_z(\phi_0)$. $U(x)=\exp{(i\theta\sigma_x)}\implies U^{w_1}(x)=\exp{(iw_1\theta\sigma_x)}$ where $x=\cos{\theta}$. Thus, a simple calculation gives us the following. \begin{eqnarray*}
    U(x)^{w_1}&&=\bmatrix{\cos{(w_1\theta)}&i\sin{(w_1\theta)}\\i\sin{(w_1\theta)}&\cos{(w_1\theta)}}\\&&=\bmatrix{T_{w_1}(x)&i\sqrt{1-x^2}S_{w_1-1}(x)\\i\sqrt{1-x^2}S_{w_1-1}(x)&T_{w_1}(x)}
\end{eqnarray*} where $T$ is the Chebyshev polynomial of first kind where $T$ and $S$ are defined in Equation  \ref{cheb1} and \ref{cheb2} respectively. Clearly, $\bra{0}R_z(\phi_1)U(x)^{w_1}R_z(\phi_0)\ket{0}$ yields $T_{w_1}\exp{i(\phi_1+\phi_0)}$ and  $\bra{0}R_z(\phi_1)U(x)^{w_1}R_z(\phi_0)\ket{1}$ yields $i\sqrt{1-x^2}S_{w_1-1}\exp{i(\phi_1-\phi_0)}$. $c_{r_1}=N_1\exp_{(iM_1\phi_1)},d_{r_1}=L_1\exp_{(iM_1\phi_1)}$. So the all of the induction conditions are satisfied. \\

Now let for $k=m$, the result holds. i.e. $P=\sum_{j=1}^{2^{m-1}}c_{r_j}\exp{(i\phi_0)}T_{r_j}=\sum_{j=1}^{2^{m-1}}c_{r_j}\cos{({r_j}\theta)}$ and $Q=i\sqrt{1-x^2}\sum_{j=1}^{2^{m-1}}d_{r_j}\exp{(-i\phi_0)}S_{r_j-1}=i\sum_{j=1}^{2^{m-1}}d_{r_j}\sin{({r_j}\theta)}$ for the ordered weights $(w_1,\hdots,w_m)$, such that the coefficients $\mathbf{c}(\mathcal{R}_{\mathbf{w}})$ and $\mathbf{d}(\mathcal{R}_{\mathbf{w}})$ satisfies Equation (\ref{cd2}). Clearly, without the right multiplication by $R_z(\phi_0)$, we have
\begin{eqnarray*}
    &&P=\sum_{j=1}^{2^{m-1}}c_{r_j}\cos{({r_j}\theta)}\\
    &&Q=i\sum_{j=1}^{2^{m-1}}d_{r_j}\sin{({r_j}\theta)}
\end{eqnarray*}

Now for $\mathbf{w}=(w_1,\hdots,w_{m+1})^T$, the first row of the product $\left(\prod_{j=1}^{m} R_z(\phi_j)U(x)^{w_{j}} \right)$ is $\bmatrix{P(x) & Q(x)}$. Multiplying this with the first column of $R_z(\phi_{m+1})U^{w_{m+1}}R_z(\phi_0)$ leads to the following\begin{widetext}
    \begin{eqnarray*}
\bra{0}\left(\prod_{j=1}^{m+1} R_z(\phi_j)U(x)^{w_{j}} \right)R_z(\phi_0)\ket{0}=\Biggl(P\cos{(w_{m+1}\theta)}\exp{(i\phi_{m+1})}+iQ\sin{(w_{m+1}\theta)}\exp{(-i\phi_{m+1})}\Biggr)\exp(i\phi_0)
\end{eqnarray*} 

We know \begin{eqnarray*}
    P\cos(w_{m+1}\theta)\exp{(i\phi_{m+1})}=\sum_{j=1}^{2^{m-1}}c_{r_j}\exp{(i\phi_{m+1})}\cos(r_j\theta)\cos{(w_{m+1}\theta)}\\
    iQ\sin(w_{m+1}\theta)\exp{(-i\phi_{m+1})}=-\sum_{j=1}^{2^{m-1}}d_{r_j}\exp{(-i\phi_{m+1})}\sin(r_j\theta)\sin{(w_{m+1}\theta)}
\end{eqnarray*}. 
\end{widetext} By using the simple trigonometric formula of $\cos{(A+B)}+\cos{(A-B)}=2\cos{(A)}\cos{(B)}$ and $\cos{(A-B)}-\cos{(A+B)}=2\sin{(A)}\sin{(B)}$, we see $\mathcal{R}_{\mathbf{w}}$ becomes $\{\mathcal{R}_{\mathbf{w}}+w_{m+1}\}\cup\{\mathcal{R}_{\mathbf{w}}-w_{m+1}\}$ and thus the cardinality i.e. $|\mathcal{R}_{\mathbf{w}}|$ becomes $2^m$. Hence, we get \begin{widetext}
\begin{eqnarray*}
&&\bra{0}\left(\prod_{j=1}^{m+1} R_z(\phi_j)U(x)^{w_{j}} \right)R_z(\phi_0)\ket{0}\\&=& \frac{1}{2}\exp(i\phi_0)\Biggl(\sum_{j=1}^{2^{m-1}}\Biggl(c_{r_j}\exp{(i\phi_{m+1})}+d_{r_j}\exp{(-i\phi_{m+1})}\Biggr)\cos((r_j+w_{m+1})\theta))+\\&&\Biggl(c_{r_j}\exp{(i\phi_{m+1})}-d_{r_j}\exp{(-i\phi_{m+1})}\Biggr)\cos((r_j-w_{m+1})\theta))\Biggr)
\\&=& \frac{1}{2}\exp(i\phi_0)\Biggl(\sum_{j=1}^{2^{m-1}}(c_{r_j}\exp{(i\phi_{m+1})}+d_{r_j}\exp{(-i\phi_{m+1})})T_{r_j+w_{m+1}}+\\&&(c_{r_j}\exp{(i\phi_{m+1})}-d_{r_j}\exp{(-i\phi_{m+1})})T_{r_j-w_{m+1}}\Biggr)
\end{eqnarray*}. 
\end{widetext}
We also get the following using the identities $\sin{(A+B)}+\sin{(A-B)}=2\sin{(A)}\cos{(B)}$ and $\sin{(A+B)}-\sin{(A-B)}=2\cos{(A)}\sin{(B)}$. 
\begin{widetext}
    \begin{eqnarray*}
&&\bra{0}\left(\prod_{j=1}^{m+1} R_z(\phi_j)U(x)^{w_{j}} \right)R_z(\phi_0)\ket{1}\\&=& i\frac{1}{2}\exp(-i\phi_0)(\sum_{j=1}^{2^{m-1}}(c_{r_j}\exp{(i\phi_{m+1})}+d_{r_j}\exp{(-i\phi_{m+1})})\sin((r_j+w_{m+1})\theta))+\\&&(-c_{r_j}\exp{(i\phi_{m+1})}+d_{r_j}\exp{(-i\phi_{m+1})})\sin((r_j-w_{m+1})\theta)))
\end{eqnarray*}. 
\end{widetext}

Clearly, the updated coefficients $c_{r_j}$ and $d_{r_j}$ are components of $N_k\exp_{(iM_{m+1}\Phi)}$ and $L_k\exp_{(iM_{m+1}\Phi)}$ respectively and further, $T_{r_j-w_{m+1}}$ and $T_{r_j+w_{m+1}}$ have fixed parity. Thus using induction hypothesis, we get our proof.

For the second case, if $w_j=w_l$ for some $j$ and $l$ then $M_k\mathbf{w}$ possess repeated components and the proof follows similar to the previous case. However, the terms $c_{r_j}T_{r_j},c_{r_l}T_{r_l}$ are added up for all such $j$ and $l$. Thus,  coefficients $c_{r_j}$ and $d_{r_j}$ are components of $B_{M_k\mathbf{w}}N_k\exp_{(iM_{m+1}\Phi)}$ and $B_{M_k\mathbf{w}}L_k\exp_{(iM_{m+1}\Phi)}$ since $B_{M_k\mathbf{w}}$, by construction when multiplied, aggregates the rows of $N_k$ where $M_k\mathbf{w}$ have same components and the rest of the proof follows immediately. $\hfill\square$\\




\begin{remark}
  When all components of $\mathbf{w}\in\mathbb{R}^k$ are equal to $1$, the system reduces to QSP and the reach satisfies $\mathcal{R}_{\mathbf{1}}=\{k,k-2,\hdots,(k\pmod 2)\}$. In such cases, it is easy to observe that the vector $M_k\mathbf{1}$ contains repeated components. Hence, from Theorem \ref{WQSPsum}, the coefficients corresponding to $T_{r_j}$ for all repeated values $r_j$, i.e. components belonging to the same value class, are accumulated in QSP.

For example, let $k=3$ and $\mathbf{w}=(1,1,1)^T$. Then, by Theorem~\ref{WQSPsum}, we obtain
\begin{eqnarray*}
P=c_3T_3+c_1T_1+c_1T_1+c_1T_1=c_3T_3+3c_1T_1.
\end{eqnarray*}

This follows because $1+1+1=3$, whereas all other combinations, namely $|1+1-1|$, $|1-1+1|$, and $|1-1-1|$, evaluate to $1$. Consequently, $M_k\mathbf{1}$, and therefore $\mathcal{R}_{\mathbf{1}}=\{3,1\}$.

Further, a direct calculation yields
\begin{eqnarray*}
\bmatrix{c_3\\3c_1}
=
\frac{1}{4}
\bmatrix{
1&1&1&1\\
3&-1&-1&-1
}
\exp_{(iM_k\Phi)}.
\end{eqnarray*}
The matrix $\frac{1}{4}\bmatrix{1&1&1&1\\3&-1&-1&-1}$ is clearly constructed by adding up the last $3$ rows of $N_3$. In particular

\begin{eqnarray*}
    \frac{1}{4}
\bmatrix{
1&1&1&1\\
3&-1&-1&-1
}=\underbrace{\bmatrix{1&0&0&0\\0&1&1&1}}_{B_{M_3\mathbf{1}}}\underbrace{\frac{1}{4}\bmatrix{1&1&1&1\\1&-1&-1&1\\1&1&-1&-1\\1&-1&1&-1}}_{N_3}
\end{eqnarray*}

The construction of $B_{M_3\mathbf{1}}$ easily follows from Equation \ref{bk}. First of all $M_3\mathbf{1}=\bmatrix{3&1&1&-1}^T$. Thus, from Lemma \ref{basic}, elements of the reach are obtained from absolute values of $B_{M_3\mathbf{1}}$ which gives $\mathcal{R}_{\mathbf{1}}=\{3,1\}$.  Thus,  $\mathcal{S}(M_3\mathbf{1},3)=\{1\}, \mathcal{S}(M_3\mathbf{1},1)=\{2,3,4\}$  and thus, $\bmatrix{1&0&0&0\\0&1&1&1}=B_{M_3\mathbf{1}}$ and we get our result.

 For another example, let $\mathbf{w}=[1,1,1,1]^T$. Then $M_k\mathbf{w}=[4,2,2,0,2,0,0,2]^T$. Thus $B_{M_k\mathbf{1}}N_K$ is a $3\times 8$ matrix where the first row $B_{M_k\mathbf{1}}N_K(1,:)= N_k(1,:)$, second row of $B_{M_k\mathbf{1}}N_K$ i.e. $B_{M_k\mathbf{1}}N_K(2,:)=N_k(2,:)+N_k(3,:)+N_k(5,:)+N_k(8,:)$ and $B_{M_k\mathbf{1}}N_K(3,:)= N_k(4,:)+N_k(6,:)+N_k(7,:)$. From this fact, we can find out what kind of polynomial a QSP circuit generates.
\end{remark}

\begin{corollary}\label{QSPsum}
Let $k\in \mathbb{N}$ such that \begin{eqnarray*}
    U^{\mathrm{QSP}}_{\boldsymbol{\Phi}}(x)=\left(\prod_{j=1}^{k}  R_z(\phi_{j})U(x)\right)R_z(\phi_0)
\end{eqnarray*}.Then, $\mathcal{R}_{\mathbf{1}}=\{r_1>r_2>\cdots>r_p\}=\{k,k-2,\ldots\},p\leq 2^{k-1}$. Denote $\mathcal{R}_{\mathbf{1}}(m):=r_m$ denotes the $m$-th largest distinct element of $\mathcal{R}_{\mathbf{1}}$ and let $P(x)=\bra{0} U^{\mathrm{QSP}}_{\boldsymbol{\Phi}}(x)\ket{0}, Q(x)=\bra{0} U^{\mathrm{QSP}}_{\boldsymbol{\Phi}}(x)\ket{1}$. Then \begin{enumerate}
        \item $P(x)\in \mathbb{C}[x]$ such that $P=\sum_{j=1}^{p}c_{r_j}\exp{(i\phi_0)}T_{r_j}$ where $r_j\in \mathcal{R}_{\mathbf{1}}$ and $p$ is the cardinality of $\mathcal{R}_{\mathbf{1}}$.
        \item $Q(x)=i\sqrt{1-x^2}\sum_{j=1}^{p}d_{r_j}\exp{(-i\phi_0)}U_{r_j-1}$ where $U_{r_j-1}$.
    \end{enumerate} where $c_{r_j},d_{r_j}\in \mathbb{C}\forall r_j\in \mathcal{R}_{\mathbf{1}} $. 
    Further, let \begin{enumerate}
        \item $\mathbf{c}(\mathcal{R}_{\mathbf{1}})=(c_{r_1},\hdots,c_{r_{p}})^T$,
        \item $\mathbf{d}(\mathcal{R}_{\mathbf{1}})=(d_{r_1},\hdots,d_{r_{p}})^T$
        \item $\Phi=\bmatrix{\phi_1&\phi_2&\hdots&\phi_{k-1}&\phi_{k}}^T\in \mathbb{R}^k$
    \end{enumerate}  . Then, \begin{eqnarray}\label{cd}
    \mathbf{c}(\mathcal{R}_{\mathbf{1}})&=&B_{M_k\mathbf{1}}N_k\exp_{(M_k(i\mathbf{\Phi}_{\mathbf{w}}))}\\\nonumber
    \mathbf{d}(\mathcal{R}_{\mathbf{1}})&=&B_{M_k\mathbf{1}}L_k\exp_{(M_k(i\mathbf{\Phi}_{\mathbf{w}}))}
\end{eqnarray} where $B_{M_k\mathbf{1}}$ is defined in Equation \ref{bk} and $N_k,L_k$ are defined in Equation \ref{NkLk}.
\end{corollary}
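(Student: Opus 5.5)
This corollary is the specialization of Theorem \ref{WQSPsum} to the weight vector $\mathbf{w}=\mathbf{1}\in\mathbb{N}^k$. The plan has three steps: check that the hypotheses of the theorem hold, identify the Reach $\mathcal{R}_{\mathbf{1}}$ explicitly, and read off the coefficient formulas.

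\textbf{Hypotheses.} The weights $w_1=\cdots=w_k=1$ satisfy the ordering condition $w_1\geq\cdots\geq w_k$ trivially. Moreover, $U(x)^{1}=U(x)$, so $U^{\mathrm{WQSP}}_{\boldsymbol{\Phi}_{\mathbf{1}}}(x)$ coincides with the circuit $U^{\mathrm{QSP}}_{\boldsymbol{\Phi}}(x)$ in the statement, factor by factor. Hence every conclusion of Theorem \ref{WQSPsum} transfers verbatim, with $B_{M_k\mathbf{w}}$ replaced by $B_{M_k\mathbf{1}}$. This gives the expansions of $P$ and $Q$ in $T_{r_j}$ and $S_{r_j-1}$, where $S_{r_j-1}$ is the second-kind Chebyshev polynomial written $U_{r_j-1}$ in the statement. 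It also gives the coefficient identities \eqref{cd} directly from \eqref{cd2}.

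\textbf{The Reach.} By Lemma \ref{basic}, the Reach is the set of absolute values of the entries of $M_k\mathbf{1}$, namely $\{|\sum_{j=1}^k\eta_j| : \eta_j\in\{\pm1\}\}$. If $m$ of the signs are $-1$, the sum equals $k-2m$ with $m\in\{0,\dots,k\}$. Taking absolute values gives $\{k,k-2,\dots,k \bmod 2\}$, so $p=\lfloor k/2\rfloor+1$. Since $k-1\leq 2^{k-1}$, this satisfies $p\leq 2^{k-1}$ for all $k\geq1$. Equivalently, an induction on $k$ via the block recursion \eqref{mkw} works: the entries of $M_{m+1}\mathbf{1}$ are those of $M_m\mathbf{1}$ shifted by $\pm1$, so the set of absolute values moves from $\{m,m-2,\dots\}$ to $\{m+1,m-1,\dots\}$. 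Here $|0-1|=1$ handles the parity wrap at zero.

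\textbf{Main obstacle.} The one nontrivial point is that Theorem \ref{WQSPsum} must hold in the repeated-entry case, because for $\mathbf{w}=\mathbf{1}$ the vector $M_k\mathbf{1}$ has many repeated components. This is exactly where $B_{M_k\mathbf{1}}$ is needed: it sums the rows of $N_k$ and $L_k$ sharing a common value $|e_j^TM_k\mathbf{1}|$. I would confirm this aggregation by tracking the proof of the theorem with all weights equal. In the induction step, the terms $\cos((r\pm1)\theta)$ and $\sin((r\pm1)\theta)$ produced from distinct indices with equal $|r\pm1|$ merge.

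There is also a sign subtlety for negative entries. Since $T_{-r}=T_r$ (cosine is even), the $P$-coefficients simply add. For $Q$, however, $\sin(-r\theta)=-\sin(r\theta)$, so the entries of $L_k$ attached to negative values of $M_k\mathbf{1}$ would pick up a sign. I would check this against the definition \eqref{bk} and the examples in the preceding Remark, for instance $M_3\mathbf{1}=[3,1,1,-1]^T$. If a sign correction is required, I would absorb it into $B_{M_k\mathbf{1}}$ for the $\mathbf{d}$ identity, or equivalently into the convention of the theorem. With that settled, the corollary follows immediately.
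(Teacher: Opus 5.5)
Your route is the paper's own: its proof is a one-line appeal to Theorem~\ref{WQSPsum} with $\mathbf{w}=\mathbf{1}$. Your computation of $\mathcal{R}_{\mathbf{1}}$ is correct, and so is the $\mathbf{c}$-identity, since $\cos$ is even and plain row-summation by $B_{M_k\mathbf{1}}$ is legitimate there. The gap is the sign question you left conditional. It does not resolve in favour of the statement: the $\mathbf{d}$-identity, with the unsigned $B_{M_k\mathbf{1}}$ of \eqref{bk}, is false. So ``every conclusion transfers verbatim'' imports an error from Theorem~\ref{WQSPsum}, whose proof handles repeated entries of $M_k\mathbf{w}$ by simply adding terms.

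Concretely, take $k=3$, $\phi_0=\phi_1=\phi_3=0$, $\phi_2=\pi/2$.
\begin{itemize}
\item \emph{Direct computation.} Since $R_z(\pi/2)=i\sigma_z$ and $\sigma_zU^2=U^{-2}\sigma_z$, the circuit equals $iU\sigma_zU^2=i\,e^{-i\theta\sigma_x}\sigma_z$. Its first row is $(i\cos\theta,\,-\sin\theta)$. Hence $Q=-\sqrt{1-x^2}$, which forces $d_3=0$ and $d_1=i$; these are uniquely determined because $U_2$ and $U_0$ are linearly independent.
\item \emph{Formula.} $M_3\mathbf{1}=(3,1,1,-1)^T$, $\exp_{(iM_3\Phi)}=(i,-i,i,-i)^T$ and $L_3\exp_{(iM_3\Phi)}=(0,0,0,-i)^T$. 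So $B_{M_3\mathbf{1}}L_3\exp_{(iM_3\Phi)}=(0,-i)^T$, i.e.\ $d_1=-i$.
\end{itemize}
The wrong sign comes precisely from the entry $-1$ of $M_3\mathbf{1}$. The $\mathbf{c}$ side does check out: $N_3\exp_{(iM_3\Phi)}=(0,0,0,i)^T$, giving $P=iT_1$.

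This cannot be absorbed into a convention either. $L_k$ is fixed independently of $\mathbf{w}$, and the $d_r$ with $r\ge 1$ are uniquely determined by $Q$.

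What your argument actually proves comes from running the induction of Theorem~\ref{WQSPsum} over the $2^{k-1}$ signed frequencies $s=M_k\mathbf{1}$. The recursions \eqref{NkLk} are exactly the product-to-sum identities at frequencies $s_i\pm 1$, which hold for either sign. This gives
\[
P=e^{i\phi_0}\sum_{i}(N_kE)_i\cos(s_i\theta),\qquad Q=ie^{-i\phi_0}\sum_i(L_kE)_i\sin(s_i\theta),\qquad E=\exp_{(iM_k\Phi)}.
\]
Aggregating with $\cos(s\theta)=\cos(|s|\theta)$ yields the stated $\mathbf{c}$. For $Q$, however, $\sin(s\theta)=\mathrm{sgn}(s)\sin(|s|\theta)$, so $\mathbf{d}=B^{\pm}L_k\exp_{(iM_k\Phi)}$. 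Here the row of $B^{\pm}$ for $r_m$ is $\sum_{j\in\mathcal{S}(M_k\mathbf{1},r_m)}\mathrm{sgn}(e_j^TM_k\mathbf{1})\,e_j^T$; zero entries are irrelevant since $U_{-1}=0$.

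You should present this as a corrected statement rather than as a proof of the given one; the paper's one-line proof has the same defect. A minor point: $p\le 2^{k-1}$ holds simply because $p$ counts distinct values among $2^{k-1}$ entries, whereas $k-1\le 2^{k-1}$ does not give it for $k\le 2$.
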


\pf Follows immediately from the discussions above and Theorem \ref{WQSPsum}. $\hfill\square$

\begin{remark}
\begin{enumerate}
   Indeed, for QSP the maximum element of $\mathcal{R}_{\mathbf{1}}=k$. Thus,  \begin{eqnarray*}
    U^{\mathrm{QSP}}_{\boldsymbol{\Phi}}(x)=\left(\prod_{j=1}^{k}  R_z(\phi_{1})U(x)\right)R_z(\phi_0)
\end{eqnarray*} generates at most $k$-degree polynomials. 
    
\item In Theorem~\ref{WQSPsum}, and Corollary~\ref{QSPsum}, the coefficients $c_{r_j}$ and $d_{r_j}$ are symbolic in nature. In other words, one may replace $c_{r_j}\rightarrow c_{r_j}\exp(-i\phi_0)$ and $d_{r_j}\rightarrow d_{r_j}\exp(i\phi_0)$ without altering the resulting polynomial. Hence, without loss of generality, one may set $\phi_0=0$ whenever $k\geq 1$.

This observation is particularly interesting since it shows that, for WQSP and consequently QSP, one parameter is already redundant. Thus, for $k\geq 1$,
\begin{eqnarray}\label{QSPnew}
U^{\mathrm{QSP}}_{\boldsymbol{\Phi}}(x)
=
\left(
\prod_{j=1}^{k}
R_z(\phi_j)U(x)
\right)
\end{eqnarray}
and
\begin{eqnarray}\label{WQSPnew}
U^{\mathrm{WQSP}}_{\boldsymbol{\Phi}_{\mathbf{w}}}(x)
=
\left(
\prod_{j=1}^{k}
R_z(\phi_j)U(x)^{w_j}
\right).
\end{eqnarray}

Therefore, only $d$ parameters (rather than $d+1$) are required for a QSP implementation of a polynomial of degree at least $d>0$. However, for the trivial case $d=0$, the additional rotation $R_z(\phi_0)$ remains necessary. In the later sections of this work, we show how WQSP allows this parameter count to be reduced further.

    \item For any weight vector $\mathbf{w}$, whenever $M_k\mathbf{w}$ or $\mathcal{R}_{\mathbf{w}}$ don't produce distinct elements, the main idea is to always find the rows of $N_k$ corresponding to the similar elements $c_{r_j}$ and collapse them into a single row vector via addition.
    \item It is also of note that the polynomials in Theorem \ref{WQSPsum} and as a consequence polynomials in Corollary \ref{QSPsum} harbors some interesting properties already observed in \cite{gilyen}.
\end{enumerate}

\end{remark}

\begin{corollary}\label{wqspcoro}
    Let $\mathbf{w} = (w_1, \dots, w_k) \in \mathbb{N}^k$ be a fixed weight vector and $N = \sum_{j=1}^k w_j$. Let $P(x)=\bra{0}U^{\mathrm{WQSP}}_{\boldsymbol{\Phi},\mathbf{w}}(x)\ket{0}$ and $R(x)=\frac{1}{i\sqrt{1-x^2}}\bra{0}U^{\mathrm{WQSP}}_{\boldsymbol{\Phi},\mathbf{w}}(x)\ket{1}$. Then, 
\begin{enumerate}
    \item $P(x)\in \mathbb{C}[x]$ and $|P(x)| \le 1$ for all $x \in [-1, 1]$.
    \item $P(x)$ has parity $N \pmod 2$ and $R(x)\in \mathbb{C}[x]$ has parity $N-1 \pmod 2$.
    \item  $|P(x)|^2+(1-x^2)|R(x)|^2=1$.
\end{enumerate}
\end{corollary}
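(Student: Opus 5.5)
The plan is to deduce all three items from Theorem~\ref{WQSPsum}, together with the unitarity and $SU(2)$ structure of the circuit. A more robust alternative is to observe that for $\mathbf{w}\in\mathbb{N}^k$ the WQSP circuit is literally a standard QSP circuit with $N=\sum_j w_j$ signal operators $U(x)$, in which the intermediate phases are set to zero. This holds because $U(x)^{w_j}$ is a product of $w_j$ copies of $U(x)$ and $R_z(0)=I$. The standard QSP properties (i)--(iv) listed for Equation~\ref{qsp} then apply directly with degree $N$. I will present the argument through Theorem~\ref{WQSPsum} and use the pruning viewpoint as a cross-check.

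\textbf{Polynomiality and parity.} By Theorem~\ref{WQSPsum}, $P=\sum_j c_{r_j}e^{i\phi_0}T_{r_j}$ with every $r_j\in\mathcal{R}_{\mathbf{w}}$. Each $r_j=|\sum_l \eta_l w_l|$ is a nonnegative integer congruent to $N$ modulo $2$, because flipping the sign of $w_l$ changes the sum by $2w_l$. Hence each $T_{r_j}$ is a genuine Chebyshev polynomial of parity $N \pmod 2$, so $P\in\mathbb{C}[x]$ has parity $N\pmod 2$.

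Similarly, $R(x)=\sum_j d_{r_j}e^{-i\phi_0}S_{r_j-1}$, and $S_{r_j-1}$ has parity $r_j-1\equiv N-1\pmod 2$. One edge case needs care: if some $r_j=0$, then $S_{-1}(\cos\theta)=\sin(0)/\sin\theta=0$, so that term simply vanishes. The second claim follows.

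\textbf{Normalization and the bound on $|P|$.} The entries follow from the matrix structure. Every factor $R_z(\phi)$ and $U(x)^{w}=\exp(iw\theta\sigma_x)$ lies in $SU(2)$, and $SU(2)$ is a group, so the product has the form $\bmatrix{a & b\\ -b^* & a^*}$ with $|a|^2+|b|^2=1$. Here $a=P(x)$ and $b=i\sqrt{1-x^2}\,R(x)$, so $|b|^2=(1-x^2)|R(x)|^2$ for $x\in[-1,1]$. This gives item~3. Item~1's bound $|P|\le 1$ is then immediate, since $|P|^2\le |P|^2+(1-x^2)|R|^2=1$.

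\textbf{Main obstacle.} The only delicate point is to verify that the $\ket{1}$ entry really equals $i\sqrt{1-x^2}$ times a polynomial. This must hold at the endpoints $x=\pm1$ and for terms with $r_j=0$, and the Theorem's coefficient bookkeeping is terse there. If that step looks shaky, I would fall back on the pruning argument. Writing the WQSP circuit as an $N$-step QSP circuit makes the form in Equation~\ref{qsp} hold by the standard QSP result, which yields all of items~1--3 at once.
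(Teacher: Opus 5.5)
Your proposal is correct and follows essentially the paper's own route. The paper's proof simply cites Theorem~\ref{WQSPsum} for the Chebyshev structure, and your observation that every element of $\mathcal{R}_{\mathbf{w}}$ is congruent to $N$ modulo $2$ is exactly what turns this into the parity claims; it then uses the fact that $P$ and $i\sqrt{1-x^2}\,R$ form a row of a unitary for items 1 and 3. Your pruning cross-check (reading the circuit as an $N$-step QSP circuit with zero intermediate phases) is also a complete proof, and it does not depend on the theorem's ordering hypothesis $w_1\ge\cdots\ge w_k$ or on its sign bookkeeping for the $\ket{1}$ entry, where negative entries of $M_k\mathbf{w}$ flip the sign of the corresponding $\sin$ terms; neither issue affects the parity conclusion anyway.
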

\pf Follows directly from Theorem \ref{WQSPsum} and using the fact that $P$ and $R$ are row elements of a unitary matrix. $\hfill\square$

So far, we have characterized the class of polynomials that can be generated by WQSP circuits. However, a natural question arises: given a bounded polynomial, can one construct a suitable WQSP framework that realizes it? For standard QSP, the answer is affirmative as established in \cite{gilyen}.

For WQSP, however, this question remains open. In the following, we address this question and investigate the extent to which QSP admits parameter redundancies, demonstrating how WQSP can provide a more efficient realization framework with advantages in parameter count, and circuit depth over standard QSP. 
\section{Weighted Quantum Signal Processing: Polynomial Fitting}\label{polyfit}
In this section, we address the following question: given a polynomial $P(x)$ satisfying conditions of Corollary~\ref{wqspcoro}, can one construct a corresponding WQSP circuit with suitable parameters? Equivalently, the objective is to recover the phase vector $\mathbf{\Phi}_{\mathbf{w}}$ for a given polynomial. Clearly, we have observed so far that WQSP produces parity-fixed polynomials. Thus, we shall analyze how WQSP can realize parity-fixed polynomials with less resources. In this regime, we present both deterministic and learning-based approaches for constructing the corresponding circuits. Particularly, we see that the learning-based approaches naturally extend beyond polynomial realization and can be applied to approximate univariate bounded continuous functions.

\subsubsection{Weighted Quantum Signal Processing: Polynomial fitting Algorithm}

In this section, we shall deal with realizing polynomials with a fixed parity. This is because WQSP generates fixed parity polynomials from Theorem \ref{WQSPsum}. Further, any arbitrary polynomial $P$ can be written as $P(x)=P_{\mathrm{even}}(x)+P_{odd}(x)$ where $P_{\mathrm{even}}(x)=\frac{1}{2}(P(x)+P(-x))$ and $P_{\mathrm{odd}}(x)=\frac{1}{2}(P(x)-P(-x))$ and thus, for an arbitrary continuous function or a polynomial, we need $2$- quantum circuits to approximate them through the WQSP regime.  

Let us consider a fixed-parity polynomial $P(x)$ of maximum degree $d$. Without loss of generality, we assume that $d$ is even. The case where $d$ is odd follows analogously, and all subsequent constructions and results remain unchanged.

\begin{enumerate}
    \item Let, \begin{eqnarray*}
    P(x)=a_dx^d+a_{d-2}x^{d-2}+\hdots+a_0\in \mathbb{C}[x]
\end{eqnarray*} such that $|P(x)|\leq 1$. We would like to change the basis from $\{1,x,x^2,\hdots\}$ to $\{T_0,T_1,T_2,\hdots,\}$ such that \begin{eqnarray}
    P(x)=c_{d}T_{d}(x)+c_{d-2}T_{d-2}(x)+\hdots + c_0T_0(x)
\end{eqnarray}.
\item  Let, $\mathrm{nz}(\mathbf{c})=\{j|c_j\neq 0,j\in\{d,d-2,\hdots,0\}\}$ (For odd $d$, $\mathrm{nz}(\mathbf{c})=\{j|c_j\neq 0,j\in\{d,d-2,\hdots,1\}\}$). Without loss of generality let $c_d\neq 0$.  Clearly $|c_j|\leq \frac{1}{\mathrm{nz}(\mathbf{c})}\forall j$ since $|P(x)|\leq 1$. 

Let us order $\mathrm{nz}(\mathbf{c})=\{j_1,\hdots,j_{m}\}$ such that $j_1=d\geq\hdots\geq j_m$. We choose a partition of $d$ say $\mathbf{w}=(w_1,\hdots,w_k)^T$ such that $w_1\geq\hdots\geq w_k,\sum_{j=1}^kw_j=d$ and $\mathcal{R}_{\mathbf{w}}=\mathrm{nz}(\mathbf{c})$. If such a partition can't be found or it doesn't exist, choose $\mathbf{w}$ such that $w_1\geq\hdots\geq w_k,\sum_{j=1}^kw_j=d$ and $\mathrm{nz}(\mathbf{c})\subset\mathcal{R}_{\mathbf{w}}$. 
\item Denote $\mathcal{S}(M_k\mathbf{w},l)=\{j||e_j^TM_k\mathbf{w}|=l\}, l\in\mathrm{nz}(\mathbf{c})(\text{ or }\mathcal{R}_{\mathbf{w}})$ and construct $B_{M_k\mathbf{w}}$, a $p\times 2^{k-1}$ matrix from Equation \ref{bk}, such that \begin{eqnarray}\label{bk2}
    B_{M_k\mathbf{w}}(r,:)=\sum_{j\in \mathcal{S}(M_k\mathbf{w},(\mathrm{nz}(\mathbf{c}))_{(r)})}e_j^T
\end{eqnarray} where $r\in\{1,\hdots,p\}$, $p=|\mathrm{nz}(\mathbf{c})|$, and $(\mathrm{nz}(\mathbf{c}))_{(r)}$ is the $r$-th element of the set $\mathrm{nz}(\mathbf{c})$. If no suitable partition is found or it doesn't exist, then \begin{eqnarray}\label{bk3}
    B_{M_k\mathbf{w}}(r,:)=\sum_{j\in \mathcal{S}(M_k\mathbf{w},\mathcal{R}_{\mathbf{w}}{(r)})}e_j^T
\end{eqnarray} where $p=|\mathcal{R}_{\mathbf{w}}|$, and $\mathcal{R}_{\mathbf{w}}{(r)}$ is the $r$-th element of the set $\mathcal{R}_{\mathbf{w}}$. 
\item Solve the linear system \begin{eqnarray}\label{syst}
    \mathbf{c}&=& B_{M_k\mathbf{w}}N_k\mathbf{y}
\end{eqnarray} where $\mathbf{y}=\exp_{(M_k(i\mathbf{\Phi}_{\mathbf{w}}))}$. It is of note that $B_{M_k\mathbf{w}}N_k$ is a matrix of size $p\times 2^{k-1}$. Clearly, $p\leq 2^{k-1}$ and when all components of $\mathbf{w}$ are distinct, $p=2^{k-1}$ and $B_{M_k\mathbf{w}}=I$ (Identity matrix because no rows can be collapsed into other as all elements of $M_k\mathbf{w}$ are distinct).


\begin{enumerate}
    \item \textbf{Case-I}: $B_{M_k\mathbf{w}}=I$
    In  this case, $B_{M_k\mathbf{w}}N_k=N_k$ and $N_k=\tilde{P}(\frac{1}{\sqrt{2}}H)^{\otimes k-1}\implies N_k^{-1}=({\sqrt{2}}H)^{\otimes k-1}\tilde{P}^T$ where $\tilde{P}$ is a permutation matrix defined in Proposition \ref{permutation}. This follows directly from the orthogonality and symmetric property of rows of $\frac{1}{\sqrt{2}}H$.
    \item \textbf{Case-II}: $B_{M_k\mathbf{w}}\neq I$
    In  this case, $B_{M_k\mathbf{w}}N_k$ is an under-determined system and there are infinitely many solutions. 
\end{enumerate}
In any case, a solution exists and let us denote it by $\mathbf{b}=\exp_{(M_k(i\mathbf{\Phi}_{\mathbf{w}}))}$. If the polynomial is real, then $\mathbf{b}=\cos_{(M_k(\mathbf{\Phi}_{\mathbf{w}}))}$ (This follows from taking out the real parts in Theorem \ref{WQSPsum}).  If the polynomial is purely imaginary, 
$\mathbf{b}=i\sin_{(M_k(\mathbf{\Phi}_{\mathbf{w}}))}$. For the sake of simplicity, we stick to real polynomials since the complex case will follow similarly. 
\begin{algorithm}[H]
\caption{Approximating a real polynomial of degree $d$ with WQSP}\label{algo1}
\textbf{Provided:} Polynomial $P(x)\in \mathbb{R}[x]$ such that $P(x)$ satisfies conditions of Corollary \ref{wqspcoro} and $\mathrm{deg}P(x)=d$ (WLOG, d is even), $P(x)=c_dT_d+c_{d-2}T_{d-2}+\hdots+c_0T_0$ and precision parameter $\epsilon$

\textbf{Input:} $k\in \mathbb{N},$ and weight vector $\mathbf{w}\in\mathbb{N}^k$ such that $w_1\geq\hdots\geq w_k$ and $\sum_{j=1}^kw_j=d$.

\textbf{Output:} Phase vector $\Phi_\mathbf{w}\in \mathbb{R}^k$ such that $U^{\mathrm{WQSP}}_{\boldsymbol{\Phi}_{\mathbf{w}}}(x) = \left( \prod_{j=1}^{k} R_z(\phi_j)U(x)^{w_j} \right)$

\begin{algorithmic}[1]
\Procedure{}{WQSP Polynomial Fitting}    
\State $\mathbf{c}\rightarrow (c_d,c_{d-2},\hdots,c_0)^T$.
\State $\mathrm{nz}(\mathbf{c})\rightarrow\emptyset$
\For{$j=d;d=d-2;d\geq 0$}
\If{$c_j\neq 0$}
\State $\mathrm{nz}(\mathbf{c})\rightarrow \mathrm{nz}(\mathbf{c})\cup \{j\}$
\EndIf
\EndFor
\State $\mathcal{R}_{\mathbf{w}}\rightarrow\emptyset$
\State Construct $M_k,N_k,L_k$
\State Calculate $M_k\mathbf{w}$
\For{$j=1;j++;j\leq 2^{k-1}$}
\State $\mathcal{R}_{\mathbf{w}}\rightarrow \mathcal{R}_{\mathbf{w}}\cup \{|e_j^TM_k\mathbf{w}|\}$
\EndFor
\If{$\mathrm{nz}(\mathbf{c})=\mathcal{R}_{\mathbf{w}}$}
\State $p\rightarrow|\mathrm{nz}(\mathbf{c})|$
\State $B_{M_k\mathbf{w}}\rightarrow O$
\For{$r=1,r++,r\leq p$}
\State Construct $\mathcal{S}(M_k\mathbf{w},\mathrm{nz}(\mathbf{c})_{(r)})$
\State Construct $B_{M_k\mathbf{w}}(r,:)$ from Eq. \ref{bk2}
\EndFor
\Else
\State $p\rightarrow|\mathcal{R}_\mathbf{w}|$
\State $B_{M_k\mathbf{w}}\rightarrow O$
\For{$r=1,r++,r\leq p$}
\State Construct $\mathcal{S}(M_k\mathbf{w},\mathcal{R}_\mathbf{w}{(r)})$
\State Construct $B_{M_k\mathbf{w}}(r,:)$ from Eq. \ref{bk3}
\EndFor
\EndIf
\State Solve for $\mathbf{y}$ the system $\mathbf{c}= B_{M_k\mathbf{w}}N_k\mathbf{y}$ in Eq. \ref{syst}.
\State $\mathbf{y}\rightarrow \mathbf{b}$
\State Solve for $\mathbf{\Phi_w}$ the system ${M_k\mathbf{\Phi}_{\mathbf{w}}}=\arccos_{\mathbf{b}}$ in Eq. \ref{important}.
\If{Equation \ref{important} is not consistent}
\State Use Least Squares to Solve Eq. \ref{important}
\State $\mathbf{\mathbf{\Phi_w}}\rightarrow \mathbf{\mathbf{\Phi_w}}^{\mathrm{LS}}$
\EndIf
\State Construct $\mathbf{c}^{\mathrm{LS}}$ from Eq. \ref{syst} and $P_{\mathrm{LS}}(x)$.
\If{$\|P(x)-P_{\mathrm{LS}}(x)\|_{L^2}\leq \epsilon$}
\State Break
\Else
\State Choose another $k,\mathbf{w}$
\EndIf
\EndProcedure
\end{algorithmic}
\end{algorithm}
\item From $\mathbf{b}=\cos_{(M_k(i\mathbf{\Phi}_{\mathbf{w}}))}$ we get \begin{eqnarray}\label{important}
{M_k\mathbf{\Phi}_{\mathbf{w}}}=\arccos_{\mathbf{b}}
\end{eqnarray} which needs to be solved. 

\end{enumerate}

However, it is easy to observe that Equation~\ref{important} defines an overdetermined system since $M_k$ has size $2^{k-1}\times k$. Consequently, an exact solution need not exist unless Equation~\ref{important} is consistent. At first glance, this may suggest that WQSP is restrictive. However, one may instead solve Equation~\ref{important} in the least-squares sense to obtain an approximate solution and reconstruct a polynomial $P_{\mathrm{LS}}(x)$. We illustrate this through Algorithm \ref{algo1}

It is worth noting that Algorithm~\ref{algo1} applies analogously for odd $d$, in which case the iteration terminates at $d=1$. Further, it is interesting to notice that Equation~\ref{important} becomes consistent in the case of QSP. We show that, by appropriately choosing integer weights, one can recover behavior analogous to QSP while achieving the same polynomial realization using significantly shallower circuits. Further, we derive an upper bound on the distance between $P(x)$ and $P_{\mathrm{LS}}(x)$.
\subsubsection{WQSP Error Bounds for Polynomial Fitting Algorithm}
In this section, we derive error bounds for deterministic algorithm to extract the phase vectors in WQSP used in the realization of real-valued polynomials. The complex-valued case follows analogously. We further show that, by choosing appropriate weight functions in WQSP, exact polynomial realization can be achieved in certain cases with fewer parameters compared to QSP. For arbitrary weights, we show that the approximation error becomes negligible whenever the polynomial coefficients are sufficiently small.

Let us consider $k\geq 2\in \mathbb{N}$, and denote $\nu=2^{k-1}$. From Equation \ref{syst}, consider \begin{eqnarray}\label{W}
    W=N_k^{-1}B_{M_k\mathbf{w}}^{\dagger}
\end{eqnarray} where $B_{M_k\mathbf{w}}^{\dagger}$ is the Moore-Penrose pseudo-inverse \cite{datta2010numerical} of $B_{M_k\mathbf{w}}$ defined as \begin{eqnarray*}
    B_{M_k\mathbf{w}}^{\dagger}=B_{M_k\mathbf{w}}^{T}(B_{M_k\mathbf{w}}B_{M_k\mathbf{w}}^T)^{-1}
\end{eqnarray*} 
Indeed, the pseudo-inverse provides a least square solution of the Equation \ref{syst}. Further we know from Proposition \ref{permutation} that $N_k=\tilde{P}(\frac{1}{\sqrt{2}}H)^{\otimes k-1}$. Hence, $N_k^{-1}=({\sqrt{2}}H)^{\otimes k-1}\tilde{P}^T$. Clearly, from the discussions so far $\|\|_{op}$ be the operator norm. Then
\begin{eqnarray*}
\|W\|_{op}\leq \|({\sqrt{2}}H^{\otimes k-1})\|_{op}\|\tilde{P}^T\|_{op}\|B_{M_k\mathbf{w}}^{\dagger}\|_{op}\leq\sqrt{\nu}    
\end{eqnarray*}
 
Moreover, since, we are dealing with real polynomials, 
\begin{eqnarray*}
    \mathbf{c}&=& B_{M_k\mathbf{w}}N_k\mathbf{y}
\end{eqnarray*} where $\mathbf{y}=\cos_{M_k\mathbf{\Phi}_{\mathbf{w}}}$ leads to  
\begin{eqnarray}\label{WW}
   M_k\mathbf{\Phi_w}=\arccos_{W\mathbf{c}}
\end{eqnarray} 

The phases of WQSP is given by $ \mathbf{y} = M_k\mathbf{\Phi}_{\mathbf{w}}$ and we know from Lemma \ref{basic} that $M_k$ is of size $\nu\times k$ and typically \(k \ll \nu\) and $\mathbf{\Phi}_{\mathbf{w}}\in \mathbb{R}^k$ is the underlying phase vector. By definition, a phase vector is admissible (i.e. the system is consistent) if $\exists \mathbf{\Phi}_{\mathbf{w}}\text{ such that }  \mathbf{y} = M_k\mathbf{\Phi}_{\mathbf{w}}$. But the set of all vectors of the form \(M_k\mathbf{\Phi}_{\mathbf{w}}\) is precisely the column space of \(M_k\). Hence the admissible phase vectors are exactly lie $\mathrm{col}(M_k).$ Thus, given a vector $\mathbf{y}\in\mathbb R^\nu$, the reconstruction problem consists of finding the admissible vector closest to \(\mathbf{y}\) in the Euclidean norm. This is done through solving the least--squares problem \small
\begin{align}\label{Lsprob}
\min_{\mathbf{\Phi}_{\mathbf{w}}\in\mathbb R^k}
\|\mathbf{y}-M_k\mathbf{\Phi}_{\mathbf{w}}\|_2^2=\min_{\mathbf{\Phi}_{\mathbf{w}}\in\mathbb R^k}
\|\arccos_{\mathbf{Wc}}-M_k\mathbf{\Phi}_{\mathbf{w}}\|_2^2.    
\end{align}\normalsize
where the unique minimizer is
\begin{eqnarray*}
\mathbf{\Phi}_{\mathbf{w}}^\mathrm{LS}
=
(M_k^TM_k)^{-1}M_k^T\mathbf{y}=M_K^\dagger\mathbf{y},    
\end{eqnarray*}

and the corresponding reconstructed vector is
\begin{eqnarray}\label{haty}
{\mathbf{y}}^\mathrm{LS}
=
M_k\mathbf{\Phi}_{\mathbf{w}}^\mathrm{LS}
=
P_{M_k}\mathbf{y},    
\end{eqnarray}

where $P_{M_k}=M_k(M_k^TM_k)^{-1}M_k^T$ is the orthogonal projector onto $\mathrm{col}(M_k).$ This leads to the corresponding polynomial $P_\mathrm{LS}(x)$. If $\mathbf{y}\in\operatorname{col}(M_k)$, then the system becomes consistent and there exists \(\mathbf{\Phi}_{\mathbf{w}}\) such that $\mathbf{y}=M_k\mathbf{\Phi_w}$. Hence, the least--squares residual vanishes and we get $P_{M_k}\mathbf{y}=\mathbf{y}$. Consequently, $W\mathbf{c}=\cos_{(M_k\mathbf{\Phi}_{\mathbf{w}})}$ exactly, and therefore the error is zero. This leads to the following theorem.

\begin{theorem}\label{bounds}
   Let WLOG, $P(x)\in \mathbb{R}[x], P(x)=c_dT_d(x)+c_{d-2}T_{d_2}(x)+\hdots+c_0T_0, |P(x)|\leq 1 $ be an even parity polynomial which is also a linear sum of Chebyshev polynomials with coefficients $\mathbf{c}$. Let $\mathbf{w}\in\mathbb{N}^k$ be a suitable weight vector such that $\mathrm{nz}(\mathbf{c})\subseteq\mathcal{R}_{\mathbf{w}}$ and $\nu=2^{k-1}$. Further, let $\mathbf{\Phi}_{\mathbf{w}}^\mathrm{LS}$ be the least square solution to the problem defined in Equation \ref{Lsprob} such that ${\mathbf{y}}^\mathrm{LS}=M_k\mathbf{\Phi}_{\mathbf{w}}^\mathrm{LS}$ and ${\mathbf{c}}^\mathrm{LS}=B_{M_k\mathbf{w}}N_k{\mathbf{y}^\mathrm{LS}}$ and denote $P_{\mathrm{LS}}(x)={c}^\mathrm{LS}_dT_d+{c}^\mathrm{LS}_{d-2}T_{d-2}+\hdots +{c}^\mathrm{LS}_0T_0$ to be the constructed polynomial through the Least-Squares problem in Algorithm \ref{algo1}. If $D_{PP_{\mathrm{LS}}}$ be the distance between $P$ and $P_{\mathrm{LS}}$, then \begin{eqnarray*}
       D_{PP_{\mathrm{LS}}}^2\le2\pi\|B_{M_k\mathbf{w}}\|_{op}^2(\|c\|_2^2+ C(\rho)^2).
   \end{eqnarray*} 
   where, $C(\rho):=
\sup_{|x|\le\rho}|r(x)|$ such that $r(x):=\arccos(x)-\frac{\pi}{2}+x,$ and $\rho:=\|W\mathbf{c}\|_\infty$ such that $W$ is defined in Equation \ref{W}. 
\end{theorem}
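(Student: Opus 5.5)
We first convert the function-space distance into a coefficient-space distance, then bound the coefficient error using the least-squares projection together with the linearization $\arccos(x)=\tfrac{\pi}{2}-x+r(x)$. We interpret $D_{PP_{\mathrm{LS}}}$ as the $L^2$ distance with respect to the Chebyshev weight $(1-x^2)^{-1/2}$. Under this weight the $T_j$ are orthogonal, with $\|T_0\|^2=\pi$ and $\|T_j\|^2=\pi/2$ for $j\ge1$. Hence
\[
D_{PP_{\mathrm{LS}}}^2=\sum_j \|T_j\|^2\,|c_j-c^{\mathrm{LS}}_j|^2\le \pi\,\|\mathbf{c}-\mathbf{c}^{\mathrm{LS}}\|_2^2 .
\]
The entries of $\mathbf{c}$ indexed by $\mathcal{R}_{\mathbf{w}}\setminus \mathrm{nz}(\mathbf{c})$ are zero, so $\mathbf{c}$ is naturally a vector indexed by $\mathcal{R}_{\mathbf{w}}$. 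By construction $W\mathbf{c}$ satisfies $B_{M_k\mathbf{w}}N_k W\mathbf{c}=\mathbf{c}$, since $B_{M_k\mathbf{w}}$ has full row rank by Lemma~\ref{bklemma}, so $B_{M_k\mathbf{w}}B_{M_k\mathbf{w}}^\dagger=I$. We write $\mathbf{c}=B_{M_k\mathbf{w}}N_k\cos_{(\mathbf{y})}$ with $\mathbf{y}=\arccos_{(W\mathbf{c})}$, and $\mathbf{c}^{\mathrm{LS}}=B_{M_k\mathbf{w}}N_k\cos_{(P_{M_k}\mathbf{y})}$. This gives
\[
\|\mathbf{c}-\mathbf{c}^{\mathrm{LS}}\|_2\le \|B_{M_k\mathbf{w}}\|_{op}\,\|N_k\|_{op}\,\|\cos_{(\mathbf{y})}-\cos_{(P_{M_k}\mathbf{y})}\|_2 .
\]
By Proposition~\ref{permutation}, $N_k=\tilde P(\tfrac{1}{\sqrt2}H)^{\otimes k-1}$. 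Its operator norm is $2^{-(k-1)/2}\le 1$, so that factor can be dropped.

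Next we bound the cosine difference. Since $\cos$ is $1$-Lipschitz, $\|\cos_{(\mathbf{y})}-\cos_{(P_{M_k}\mathbf{y})}\|_2\le \|(I-P_{M_k})\mathbf{y}\|_2$. We then use the decomposition $\mathbf{y}=\tfrac{\pi}{2}\mathbf{1}-W\mathbf{c}+r_{(W\mathbf{c})}$. The first column of $M_k$ is $\mathbf{1}$ by the recursion (\ref{M_k}), so $\tfrac{\pi}{2}\mathbf{1}\in\mathrm{col}(M_k)$ and this term is annihilated by $I-P_{M_k}$. This is precisely why $r$ is centered at $\tfrac{\pi}{2}-x$. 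Because $I-P_{M_k}$ is an orthogonal projector, hence a contraction, we obtain
\[
\|(I-P_{M_k})\mathbf{y}\|_2^2\le 2\|W\mathbf{c}\|_2^2+2\|r_{(W\mathbf{c})}\|_2^2 .
\]
The inequality $(a+b)^2\le 2a^2+2b^2$ supplies the factor $2$, which combined with $\pi$ gives the stated $2\pi$.

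The main obstacle is normalizing these last two terms so that they become $\|\mathbf{c}\|_2^2$ and $C(\rho)^2$ rather than picking up factors of $\nu$. For the remainder we have the entrywise bound $|r((W\mathbf{c})_j)|\le C(\rho)$, where $\rho=\|W\mathbf{c}\|_\infty\le1$ is needed for $\arccos$ to be defined. This naively gives $\nu C(\rho)^2$. Similarly, $\|W\mathbf{c}\|_2\le\|W\|_{op}\|\mathbf{c}\|_2\le\sqrt{\nu}\,\|\mathbf{c}\|_2$ naively introduces $\nu$. To remove these factors, I would keep the factor $\|N_k\|_{op}^2=1/\nu$ from the previous paragraph instead of discarding it, so that it cancels the $\nu$ arising from $\|W\|_{op}\le\sqrt\nu$ and from summing $\nu$ entries each bounded by $C(\rho)$. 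This requires ordering the estimates so that $N_k$ multiplies the vector before the entrywise bounds are applied. Concretely, we bound $\|N_k(\cdot)\|_2\le \nu^{-1/2}\|\cdot\|_2$ and then $\|(\cdot)\|_2^2\le \nu\,\|(\cdot)\|_\infty^2$ for the remainder, and $\|W\mathbf{c}\|_2^2\le \nu\|\mathbf{c}\|_2^2$ for the linear part. Assembling these gives $D_{PP_{\mathrm{LS}}}^2\le 2\pi\|B_{M_k\mathbf{w}}\|_{op}^2(\|\mathbf{c}\|_2^2+C(\rho)^2)$.
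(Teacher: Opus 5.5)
Your proposal is correct and follows the paper's proof essentially step for step. Both arguments pass to Chebyshev coefficients under the weight $(1-x^2)^{-1/2}$, factor $\mathbf{c}-\mathbf{c}^{\mathrm{LS}}=B_{M_k\mathbf{w}}N_k(\cos_{\mathbf{y}}-\cos_{\mathbf{y}^{\mathrm{LS}}})$, apply the $1$-Lipschitz bound for cosine, annihilate $\tfrac{\pi}{2}\mathbf{1}$ with $I-P_{M_k}$, and cancel the $1/\nu$ from $\|N_k\|_{op}^2$ against the factors of $\nu$ coming from $\|W\|_{op}\le\sqrt{\nu}$ and from the $\nu$ entries of $r_{(W\mathbf{c})}$; just delete your earlier remark that the $\|N_k\|_{op}$ factor ``can be dropped,'' since the final assembly depends on keeping it.
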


\pf From standard theory of Chebyshev polynomials, \begin{eqnarray}\label{distanc}
    D_{PP_{\mathrm{LS}}}^2&&=\|P(x)-P_{\mathrm{LS}}(x)\|^2_{L_{\omega}^2}\\\nonumber&&=\int_{-1}^1\frac{1}{\sqrt{1-x^2}}(P(x)-P_{\mathrm{LS}}(x))^2dx\\\nonumber
    &&={\pi}(c_0-{c}^\mathrm{LS}_0)^2+\frac{\pi}{2}\sum_{j\geq1}(c_j-{c}^\mathrm{LS}_j)^2\\\nonumber&&=\|\mathbf{c}-{\mathbf{c}}^\mathrm{LS}\|_{L_{\omega}^2}^2
\end{eqnarray} where $\|\|_{L_{\omega}^2}$ is the weighted $L^2$ norm\cite{kreyszig1978introductory}. Clearly, $\|\mathbf{c}-{\mathbf{c}}^\mathrm{LS}\|_{L_{\omega}^2}^2\leq \pi\|\mathbf{c}-{\mathbf{c}}^\mathrm{LS}\|_2^2$  and thus \begin{eqnarray}
    D_{PP_{\mathrm{LS}}}^2\leq \pi\|\mathbf{c}-{\mathbf{c}}^\mathrm{LS}\|_2^2.
\end{eqnarray}. Now, from Equation \ref{syst},

\begin{eqnarray*}
    \mathbf{c}-{\mathbf{c}^\mathrm{LS}}=B_{M_k\mathbf{w}}N_k(\cos_{\mathbf{y}}-\cos_{{\mathbf{y}}^\mathrm{LS}})
\end{eqnarray*} Now, 
\begin{eqnarray*}\hspace{-0.5cm}
 \|B_{M_k\mathbf{w}}N_k\cos_{\mathbf{y}}\|^2_2&&=\|B_{M_k\mathbf{w}}\Tilde{P}(\frac{1}{\sqrt{2}}H)^{\otimes k-1}\cos_{\mathbf{y}}\|^2_2\\&& \|B_{M_k\mathbf{w}}\|_{op}^2\|\Tilde{P}\|_{op}^2\|(\frac{1}{\sqrt{2}}H)^{\otimes k-1}\cos_{\mathbf{y}}\|^2_2\\&&\leq\frac{1}{\nu} \|B_{M_k\mathbf{w}}\|_{op}^2 \|\cos_{\mathbf{y}}\|_2^2 
\end{eqnarray*} where $\Tilde{P}$ is a permutation defined in Proposition \ref{permutation}.
Thus, 
\begin{eqnarray*}
    D_{PP_{\mathrm{LS}}}^2\leq \pi\|\mathbf{c}-{\mathbf{c}}^\mathrm{LS}\|_2^2\leq \frac{\pi}{\nu}\|B_{M_k\mathbf{w}}\|_{op}^2\|\cos_\mathbf{y}-\cos_{{\mathbf{y}}^\mathrm{LS}}\|_2^2.
\end{eqnarray*}. Now cosine is Lipschitz-$1$ continuous and therefore,
\begin{eqnarray*}
  \|\cos_\mathbf{y}-\cos_{{\mathbf{y}}^\mathrm{LS}}\|_2^2\leq  \|\mathbf{y}-{{\mathbf{y}}^\mathrm{LS}}\|_2^2. 
\end{eqnarray*}
Now, from definition of ${\mathbf{y}}^\mathrm{LS}$ in Equation \ref{haty}, we have 
\begin{eqnarray}\label{Dsquare}
 &&\mathbf{y}-{{\mathbf{y}}^\mathrm{LS}}=(I-P_{M_k\mathbf{w}})\mathbf{y}\\\nonumber
 &&\implies  D_{PP_{\mathrm{LS}}}^2\leq \frac{\pi}{\nu}\|B_{M_k\mathbf{w}}\|_{op}^2\|(I-P_{M_k})\mathbf{y}\|_2^2
\end{eqnarray} 
Further, from Equation \ref{WW},\begin{eqnarray*}
    \mathbf{y}=\arccos_{W\mathbf{c}}=\frac{\pi}{2}\mathbf{1}-W\mathbf{c}+r(W\mathbf{c})
\end{eqnarray*} where $W=N_k^{-1}B_{M_k\mathbf{w}}^\dagger$ as defined in Equation \ref{W} and $\mathbf{1}$ is the all one vector. From the construction of $M_k$, clearly, $\mathbf{1}\in \mathrm{col}(M_k)$. Thus, $(I-P_{M_k\mathbf{w}})\mathbf{1}=0$ as $P_{M_k}=M_k(M_k^TM_k)^{-1}M_k^T$ is the orthogonal projector onto the column space $\mathrm{col}(M_k).$ Furthemore, Since \(P_M\) is an orthogonal projection, \(I-P_{M_k}\) is also an orthogonal projection (onto \(\operatorname{col}(M_k)^\perp\)). Therefore,$\|I-P_{M_k}\|_{op}=1$. Hence,
\begin{eqnarray*}
    (I-P_{M_k})\mathbf{y}=-(1-P_{M_k})W\mathbf{c}+ (I-P_{M_k})r(W\mathbf{c})\end{eqnarray*} Taking norms and using the fact that $\|I-P_{M_k}\|_{op}=1$, we get 
\begin{eqnarray*}
    \|(I-P_{M_k})\mathbf{y}\|_2=\|W\mathbf{c}\|_2+ \|r(W\mathbf{c})\|_2\end{eqnarray*} Squaring both sides and using the fact that $\|a+b\|^2_2\leq 2\|a\|_2^2+2\|b\|_2^2$, we get,

\begin{eqnarray*}
    \|(I-P_{M_k})\mathbf{y}\|^2_2=2\|W\mathbf{c}\|^2_2+ 2\|r(W\mathbf{c})\|^2_2\end{eqnarray*} Now, from the definition of $W$ in Equation \ref{W} and Lemma \ref{bklemma}, we see $\|W\mathbf{c}\|^2_2\leq\nu\|\mathbf{c} \|_2^2$

Also, a simple check shows that $|r((Wc)_j)|\le C(\rho)\forall j$, and thus we obtain
\begin{eqnarray*}
\|r(Wc)\|_2^2=\sum_{i=1}^n |r((Wc)_i)|^2\le\nu C(\rho)^2.  
\end{eqnarray*}

Hence\begin{eqnarray*}
\|(I-P_M)y\|_2^2
\le
2\nu\|c\|_2^2
+
2\nu C(\rho)^2.    
\end{eqnarray*} Plugging this in Equation \ref{Dsquare},
the result follows immediately. $\hfill\square$

\begin{remark}
\begin{enumerate}
\item Theorem \ref{bounds} give the exact same result for odd parity polynomials where $T_j$ starts from $T_1$ instead of $T_0$. Further, for a complex polynomial $P(x)\in\mathbb{C}[x]$, we write $P(x)=A(x)+iB(x)$ where $A(x),B(x)\in \mathbb{R}[x]$. In such cases, we apply Theorem \ref{bounds} to $A$ and $B$ and using triangle inequalty, we get $D^2_{PP_{\mathrm{LS}}}\le D^2_{AA_{\mathrm{LS}}}+D^2_{BB_{\mathrm{LS}}}$.
    \item Assume that $\|\mathbf{c}\|_\infty = \nu^{-1}=\frac{1}{2^{k-1}}$. $\|W\mathbf{c}\|_{\infty}\leq \|W\mathbf{c}\|_{2}\leq \|W\|_{op}\|\mathbf{c}\|_2\leq \sqrt{\nu}\|W\|_{op}\|\mathbf{c}\|_{\infty}=1$. Recall that 
$C(\rho) = \sup_{|x|\le \rho} \left|\arccos(x) - \frac{\pi}{2} + x\right|$ where $r(x) = \arccos(x) - \frac{\pi}{2} + x$ and $\rho:=\|W\mathbf{c}\|_{\infty}$. Since  
$r'(x) = 1 - \frac{1}{\sqrt{1-x^2}} \le 0$,  
the extrema on $[-1,1]$ occur at the endpoints. Thus  
$r(1) = 1 - \frac{\pi}{2}$ and $r(-1) = \frac{\pi}{2} - 1$, so $C(1) = \frac{\pi}{2} - 1$. Therefore  
$C(\|W \mathbf{c}\|_\infty) \le \frac{\pi}{2} - 1$. Thus \begin{eqnarray*}
    D_{PP_{\mathrm{LS}}}^2 \le 2\pi \|B\|_{op}^2 \left(\frac{1}{n} + \left(\frac{\pi}{2}-1\right)^2\right). 
\end{eqnarray*} Assuming, all $\mathbf{w}$ components are unique implies $B_{M_k}=I$ and $D_{PP_{\mathrm{LS}}}\sim O(1)$.  
\item However, if we assume $B_{M_k}=I$ and $\|\mathbf{c}\|_\infty = \nu^{-1-m}$, for some $m>0$. Then , so $\|W\mathbf{c}\|_\infty \le \nu^{-m}$.

Using the expansion
$r(x)=\arccos(x)-\frac{\pi}{2}+x = -\frac{x^3}{6}+O(x^5)$,
we have for small $\rho$ that $C(\rho)=O(\rho^3)$, hence
$C(\|Wc\|_\infty)=O(\nu^{-3m})$ and $C(\|Wc\|_\infty)^2=O(\nu^{-6m})$.

Also,
$\|\mathbf{c}\|_2^2 \le \nu\|\mathbf{c}\|_\infty^2 \le \nu\cdot \nu^{-2-2m}=\nu^{-1-2m}$.

Therefore,
\[
D^2 \le 2\pi(\left(\nu^{-1-2m} + O(\nu^{-6m})\right).
\] Hence, for any coefficient vector $|\mathbf{c}|$ whose components are bounded above by $2^{1-k-m}$ for some $m\geq 0$, and for distinct weight components in $\mathbf{w}\in\mathbb{N}^k$, WQSP provides a sufficiently accurate approximation of the target polynomial.

This observation becomes particularly useful for high-degree polynomials expressed as linear combinations of a large number of Chebyshev polynomials, where finding an appropriate weight-vector partition can become computationally expensive. In such cases, one may choose the weight vector $(1,2,3,\hdots)^T$ for even-parity polynomials. For odd-parity polynomials of large degree, one may instead choose the weights $(1,1,2,3,\hdots)^T$. This construction yields a quadratic reduction in the number of parameters while maintaining relatively small approximation errors.

\item When $k=2$, then $M_k=\bmatrix{1&1\\1&-1}$ i.e. $\mathrm{col}(M_2)=\mathbb{R}^2$ i.e. the column space of $M_2$ spans entire $\mathbb{R}^2$. Thus for any $\mathbf{y}$, $P_{M_k}\mathbf{y}=y$. Hence, $(I-P_{M_2})\mathbf{y}=\mathbf{0}$ and thus $D_{PP_{\mathrm{LS}}}=0$. This is very significant and show cases a primary advantage of WQSP to QSP. We shall demonstrate with an example. \\
\noindent\textbf{Example:}
Let $P(x)\in\mathbb{R}[x]$ be a fixed parity polynomial such that $P(x)=c_{1}T_{n_1+n_2}(x)+c_{2}T_{n_1-n_2}(x)$ for a large $n=n_1+n_2$. In the regime of QSP, as seen from \cite{gilyen}, in order to calculate the phases, an algorithm of $O(4n^2)$ is required. This is because finding out angles of QSP works in normal basis unlike Chebyshev. However, for WQSP, we choose the weight $\mathbf{w}=[n_1,n_2]^T$ because the $\mathcal{R}_{\mathbf{w}}=\{n_1+n_2,n_1-n_2\}$. In such case, $N_2=\frac{1}{2}\bmatrix{1&1\\1&-1}$ $M_2\mathbf{\Phi}_{\mathbf{w}}=\bmatrix{\phi_1+\phi_2\\\phi_1-\phi_2}$. Thus, and we solve for \begin{eqnarray*}
  N_2\bmatrix{\cos{(\phi_1+\phi_2)}\\\cos{(\phi_1-\phi_2)}}=\bmatrix{c_1\\c_2}  
\end{eqnarray*}. This generates the solution $\phi_1=\arccos{(c_1+c_2)}+\arccos{(c_1-c_2)}$ and $\phi_2=\arccos{(c_1+c_2)}-\arccos{(c_1-c_2)}$ and thus WQSP produces an exact solution. The conversion from a standard polynomial to a Chebyshev basis can be done using Fast Chebyshev Transformation \cite{cheb1,cheb2,cheb3} with complexity $O(n\log n)$ and thus WQSP proves advantageous in such cases. If the polynomial presents itself in a Chebyshev basis from the start, this takes $O(1)$ operations.
\item It is of note that in Theorem \ref{bounds}, when we consider the QSP framework i.e when $\mathbf{w}=\mathbf{1}$,  the system of equations described in Equation \ref{WW}, after obtaining from Equation \ref{syst} becomes consistent. This means for a polynomial fitted through QSP regime, $\exists \mathbf{\Phi}_{\mathbf{w}}$ such that \begin{eqnarray*}
    B_{M_k\mathbf{1}}N_k\cos_{M_k\mathbf{\Phi}_{\mathbf{w}}}=\mathbf{c}
\end{eqnarray*}
In such cases, $D_{PP_{\mathrm{LS}}}=0$ since the least square turns zero due to the system being consistent. This is due to the fact that the elements of $\mathbf{\Phi}_{\mathbf{w}}$ can be calculated from the leading terms of $P(x)$ repeatedly using the method mentioned in \cite{gilyen}.
\item Since WQSP's efficacy depends on the choice of weights, a good choice is always to choose a partition or the weight vector $\mathbf{w}$, such that the resultant system  \begin{eqnarray*}
    B_{M_k\mathbf{1}}N_k\cos_{M_k\mathbf{\Phi}_{\mathbf{w}}}=\mathbf{c}
\end{eqnarray*} resembles a QSP system in order to get a perfect fit. In other words, by choosing appropriate weights, one can map a WQSP problem of large degree polynomials into a QSP setting of smaller degree polynomials. We illustrate this with an example. 
\noindent\textbf{Example:} Let us take a real polynomial of degree $8$ such that $P(x)=c_0T_1+c_4T_4+c_8T_8$. Then If we consider the weight vector to be $(2,2,2,2)$ Then
\begin{eqnarray*}
    \underbrace{B_{M_k\mathbf{w}}N_{k}}_{A_{WQSP}}=\frac{1}{8}\bmatrix{1&0&0&0&0&0&0&0\\0&1&1&0&1&0&1&0\\0&0&0&1&0&1&0&1}N_k.
\end{eqnarray*}A simple matrix calculation shows that $B_{M_k\mathbf{w}}N_{k}$ derived for this problem (denoted as $A_{WQSP}$) is permutationally similar to the matrix $B_{M_k\mathbf{w}}N_{k}$ derived for the problem of fitting $Q(x)=c_0T_0+c_2T_2+c_4T_4$ through QSP (denote it as $A_{QSP}$) i.e. $A_{WQSP}=P_1A_{QSP}P_2$ for some permutations $P_1$ and $P_2$. Thus for $\mathbf{w}=(2,2,2,2)$, $P(x)$ can be exactly fitted using WQSP. In Figure \ref{fig:8poly}, we demonstrate exactly this fact by choosing $c_0,c_4,c_8$ from uniform distribution $[-1/8,1/8]$ since $|P(x)|\leq 1$ must be satisfied. This shows that QSP has a lot of redundant parameters. Further, for B-splines as well, the total number of parameters=$G+d$ where $G$ number of internal grid intervals/knot vectors. Hence, WQSP can exactly fit polynomials with lesser number of parameters compared to QSP and B-splines. This results in design of shallower quantum circuits with lower depth and gate complexity which is useful for modern era quantum systems. 
\item It has also been observed that for other choices of $\mathbf{w}$ such that $M_k\mathbf{w}$ contains more than one repeated element (excluding the first component), an exact solution is obtained in most cases. To study this phenomenon, we conducted experiments on polynomials of degree $d$ ranging from $3$ to $15$. For each weight partition of each degree, approximately $500$ samples were generated, where the coefficients were drawn independently from the uniform distribution $U[-1/d,1/d]$.

A possible explanation for this behavior lies in the effective degrees of freedom induced by the non-zero coefficients in the Chebyshev basis expansion. For example, the polynomial $P(x)=c_1T_1+c_3T_3+c_5T_5+c_7T_7$ contains four independent coefficients and therefore requires at least four effective degrees of freedom in the $R_z$ parameters. The weight partition $(3,2,1,1)$ provides exactly this degree of freedom, resulting in zero error (up to floating-point precision). A similar phenomenon is observed for partitions such as $(2,2,2,1)$ and $(3,1,1,1,1)$, since these also satisfy the minimum degree-of-freedom threshold. In Figure~\ref{fig:7poly}, we present an instance of the same phenomenon for a randomly generated polynomial of degree $7$ with different weight partitions.  

In contrast, partitions such as $(4,2,1)$ exhibit relatively larger approximation errors of order $O(10^{-2})$--$O(10^{-1})$, since the available degrees of freedom are insufficient. Nevertheless, the bound established in Theorem~\ref{bounds} remains satisfied in all such cases.

Another perspective on this phenomenon is through a factorized representation of the polynomial. Suppose the polynomial $P(x)$ admits a decomposition of the form
$P(x)=\prod_{j,l}(c_jT_j+d_lT_l)$ for suitable $j,l\in\mathbb{N}$ with same parity. In such cases, exact solutions can be obtained for each factor $(c_jT_j+d_lT_l)$ individually, following the arguments developed above, and the construction can then be extended recursively across all such factors.

For example, standard QSP implicitly exploits the fact that parity-constrained complex polynomials admit recursive constructions built from lower-order Chebyshev components such as $(c_jT_2+d_lT_0)$ or $(c_jT_3+d_lT_1)$. Since exact solutions can be generated for each pair of coefficients, exact realization of the overall polynomial follows.

However, obtaining such factorizations in practice is generally difficult, making it computationally expensive to identify weight partitions other than those arising from standard QSP that also admit exact polynomial realization.

For example, consider a polynomial of the form
$P(x)=(c_{50}T_{50}+c_{32}T_{32})(c_{12}T_{12}+c_2T_2)$.
In this case, the weight vector $(41,9,7,5)$ yields an exact realization. By contrast, implementing the same polynomial using standard QSP would require deeper circuits, more parameters, and consequently longer execution times.

\end{enumerate}
\end{remark}

\begin{figure*}[ht!]
    \centering
 \subfloat[ \label{2222} Error from WQSP based approximation with pruned circuit and weight vector $(2,2,2,2)$]{\includegraphics[width=\columnwidth]{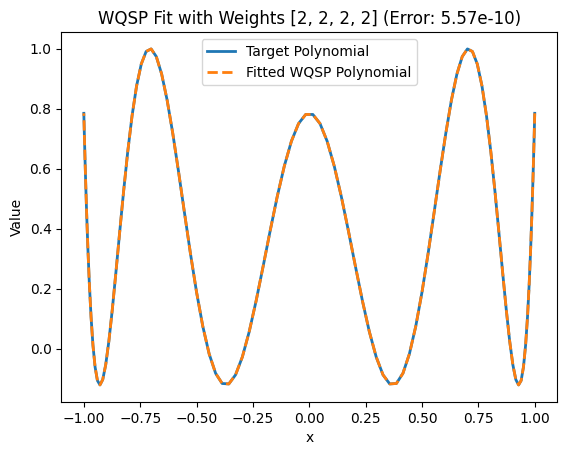}}
 \subfloat[ \label{11111111} Error from QSP based approximation]{\includegraphics[width=\columnwidth]{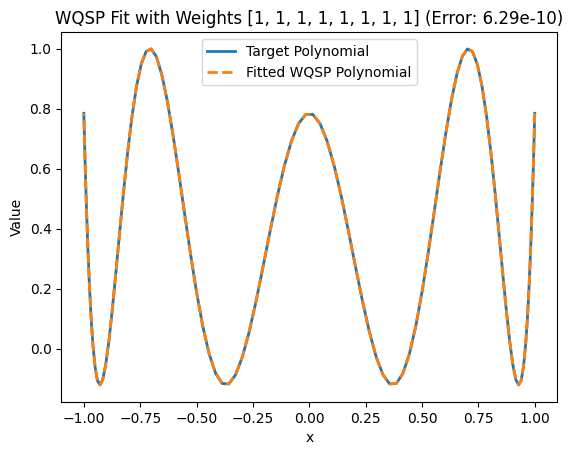}}
    \caption{WQSP fitting of a real $8$ degree target polynomial $P(x)=c_1T_0+c_2T_4+c_3T_8$ with weight partition $(2,2,2,2)$ and whose coefficients are randomly generated. The WQSP fitting in \ref{2222} is compared with the standard QSP fitting as depicted \ref{11111111}.}
    \label{fig:8poly}
\end{figure*}

As stated previously, for very high-degree polynomials, finding a suitable weight partition becomes computationally expensive. Moreover, standard QSP itself becomes difficult to implement due to the increased circuit depth and the $O(d^2)$ complexity required to recover the phase vector $\mathbf{\Phi}$. In such cases, we consider $\mathbf{w}=(1,2,3,\hdots)\in\mathbb{R}^k$ for even-parity polynomials and $\mathbf{w}=(1,1,2,3,\hdots)\in\mathbb{R}^k$ for odd-parity polynomials. Since $\sum_j w_j=d$, it follows that $k\approx\Theta(\sqrt{d})$. From this perspective, WQSP introduces a weight vector as a preprocessing step in the classical encoding stage. Alternatively, if the weights are chosen as $w_j=2^j$ for all $j\in{1,\hdots,k}$ and the degree of $P(x)$ is $d$, then
$d\leq\sum_{j=1}^{k}2^j$,
which implies $k=\Theta(\log_2 d)$. In both cases, the polynomial is converted into a linear combination of Chebyshev polynomials. Further, since $d$ is large, the coefficients satisfy $|\mathbf{c}|{\infty}\leq\frac{1}{2^{k-1}}$, and therefore, by Theorem~\ref{bounds}, WQSP achieves quadratic to exponential reductions in the number of parameters and consequently in circuit depth while incurring only bounded approximation error. For smaller values of $\|\mathbf{c}\|_{\infty}$, the approximation error converges rapidly to $0$, making WQSP highly efficient. Table~\ref{tab:qsp_wqsp_univariate} illustrates these trade-offs. By selecting WQSP weights that reduce the number of trainable parameters quadratically to exponentially relative to QSP, WQSP achieves comparable approximation errors  while requiring significantly fewer parameters, in contrast to QSP, which attains machine-precision accuracy at a substantially higher parameter cost. \begin{table*}[t]
\centering
\caption{Univariate polynomial approximation: Comparison of QSP and WQSP across different polynomial degrees. QSP requires $d+1$ phase parameters per parity component, resulting in $2d+1$ total phases to approximate the even and odd components using separate circuits. In contrast, WQSP uses only $O(2\lceil \log_2(d)) \rceil $ to $O(2\lceil \sqrt{d} \rceil)$ trainable parameters depending on the weights. For each polynomial degree $d$, the results are averaged over $40$ randomly generated polynomials with coefficients sampled independently from the uniform distribution $U[-1/d,\,1/d]$.}
\label{tab:qsp_wqsp_univariate}
\begin{tabular*}{\textwidth}{@{\extracolsep{\fill}} l r r r r l @{}}
\hline
\textbf{Degree ($d$)} & {$\mathbf{RMSE}_{\mathrm{QSP}}$} & {$\mathbf{RMSE}_{\mathrm{WQSP}}$} & {$\mathbf{Parameters}_{\mathrm{QSP}}$} & $\mathbf{Parameters}_{\mathrm{WQSP}}$&\textbf{Partition} \\
\hline
\hline
5  & $3.14\times 10^{-3}$ & $8.97 \times 10^{-3}$ & 11 & {6}& $[2,2,1]_{\mathrm{Odd}};[2,1,1]_{\mathrm{Even}}$  \\
\hline
10 & $1.14 \times 10^{-4}$  & $2.91 \times 10^{-3}$ & 21 & 9& $[4,2,1,1,1]_{\mathrm{Odd}};[4,2,2,2]_{\mathrm{Even}}$ \\
\hline
15 & $1.51 \times 10^{-5}$  & $7.59 \times 10^{-3}$ & 31 & {9}& $[6,4,2,2,1]_{\mathrm{Odd}};[6,4,2,2]_{\mathrm{Even}}$  \\
\hline
20 & $2.16 \times 10^{-3}$  & $7.03 \times 10^{-2}$ & 41 & {11}& $[7,4,4,2,2,1]_{\mathrm{Odd}};[7,4,4,2,2]_{\mathrm{Even}}$ \\
\hline
25 & $6.82 \times 10^{-4}$  & $6.61 \times 10^{-2}$ & 51 & 16 & $[4,4,4,3,3,3,2,2]_{\mathrm{Odd}};[4,4,4,3,3,3,2,1]_{\mathrm{Even}}$ \\
\hline
30 & $1.65 \times 10^{-4}$  & $5.65 \times 10^{-2}$ & 61 & 16& $[6,6,4,4,4,2,2,2]_{\mathrm{Odd}};[6,6,4,4,4,2,2,1]_{\mathrm{Even}}$ \\
\hline
35 & $1.82 \times 10^{-3}$  & $6.09 \times 10^{-2}$ & 71 & 17& $[6,5,5,4,4,4,3,3,1]_{\mathrm{Odd}};[6,5,5,4,4,4,3,3]_{\mathrm{Even}}$ \\
\hline
\end{tabular*}
\end{table*}

The deterministic method described so far provides deeper theoretical insight into QSP and WQSP. However, it becomes computationally infeasible for large systems, as it requires solving two linear systems. Hence, we present a learning-based approach that remains computationally tractable while satisfying the guarantees established in Theorem~\ref{bounds}.

\begin{figure*}[ht!]
    \centering
 \subfloat[ \label{3211} ]{\includegraphics[width=\columnwidth]{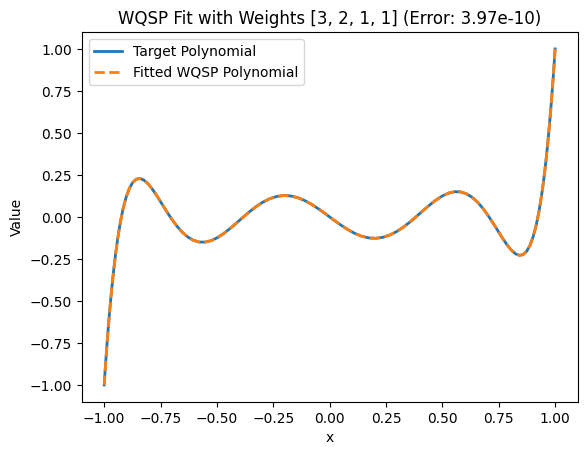}}
 \subfloat[ \label{22111} ]{\includegraphics[width=\columnwidth]{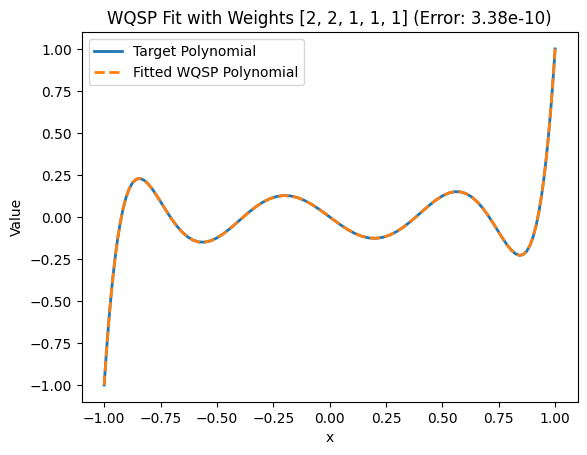}}\\
 \subfloat[ \label{1111111} ]{\includegraphics[width=\columnwidth]{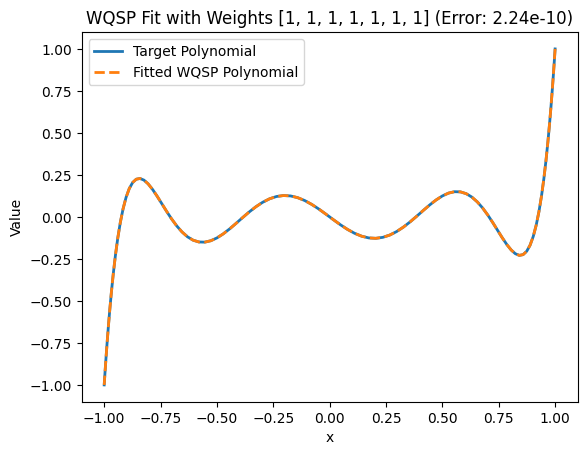}}
 \subfloat[ \label{421} ]{\includegraphics[width=\columnwidth]{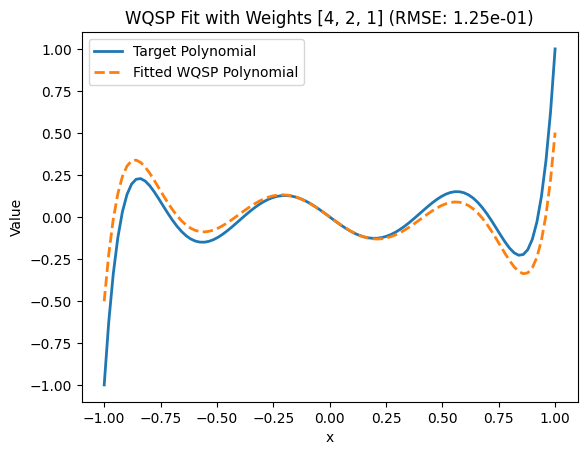}}
   \caption{WQSP fitting of a real $7$ degree target polynomial (fixed odd parity) with different weight partitions whose coefficients are randomly generated. In \ref{3211} and \ref{22111} the weight partitions are respectively $(3,2,1,1)$ and $(2,2,1,1,1)$ whose weight vector lengths are greater than or equal to the coefficient vector length and thus the necessary degrees of freedom are achieved. For \ref{1111111}, we have the QSP case which produces an error of a similar order compared to \ref{3211} and \ref{22111}. In \ref{421}, we see that the weight vector length is less than the coefficient vector length and thus such a setup doesn't cross the degrees of freedom which incurs a higher error.}\label{fig:7poly}
\end{figure*}
\subsubsection{WQSP in a quantum neural network framework}\label{nwqspqnn}
In WQSP, so far we have seen that one can solve two linear systems to get the solutions. In QSP as well, a $O(d^2)$ algorithm is required to construct the vector $\mathbf{\Phi}_{\mathbf{w}}$. Thus, most literature \cite{Soni2022Function,lin2025mathematical,linlinenergy} involving QSP, researchers while implementing the code take a heuristic/ optimization approach in searching for suitable phase vector $\Phi$ for the problem      
\begin{eqnarray}
    &&\mathrm{min}_{\mathbf{\Phi}_{\mathbf{w}}} \|P(x)-\mathrm{Re}(\bra{0}U^{\mathrm{QSP}}_{\Phi}(x)\ket{0})\|_{L_{2}}\\\nonumber&&=\mathrm{min}_{\mathbf{\Phi}_{\mathbf{w}}} \sqrt{\int_{-1}^1 (P(x)-\mathrm{Re}(\bra{0}U^{\mathrm{QSP}}_{\Phi}(x)\ket{0}))^2dx}
\end{eqnarray} i.e. calculating from Root Mean Square Error (RMSE). This alleviates the need to repeatedly perform extensive angle constructions using leading coefficients and matrix multiplications by instead formulating the determination of $\mathbf{\Phi}$ as an optimization problem. Techniques such as Nelder-Mead, Gradient Descent\cite{datta2010numerical},  Levenberg-Marquardt \cite{LM_method} etc., can then be employed to solve it. This establishes a framework in which the underlying algorithm is learned through optimization rather than derived through a purely deterministic procedure. Such an approach is particularly valuable for learning unknown polynomial representations for a given problem, while also enabling the fitting of predefined polynomials. Consequently, this framework becomes highly useful for generating activation functions, which are fundamental components of Kolmogorov-Arnold Networks\cite{liu2024kan}. We note that it is easy to check for $x\in[-1,1]$, $\|.\|_{L_2}^2\leq \|.\|_{L_{\omega}}$ (as defined in Theorem \ref{bounds}), so Theorem \ref{bounds} holds. This is also illustrated in Figure \ref{figmore}, where WQSP approximation error with in $O(10^{-2}/10^{-3})$ has been shown to be achievable with significantly less parameters compared to QSP.
\begin{figure*}[ht!]
    \centering
 \subfloat[ \label{10d} Degree 10  ]{\includegraphics[width=\columnwidth]{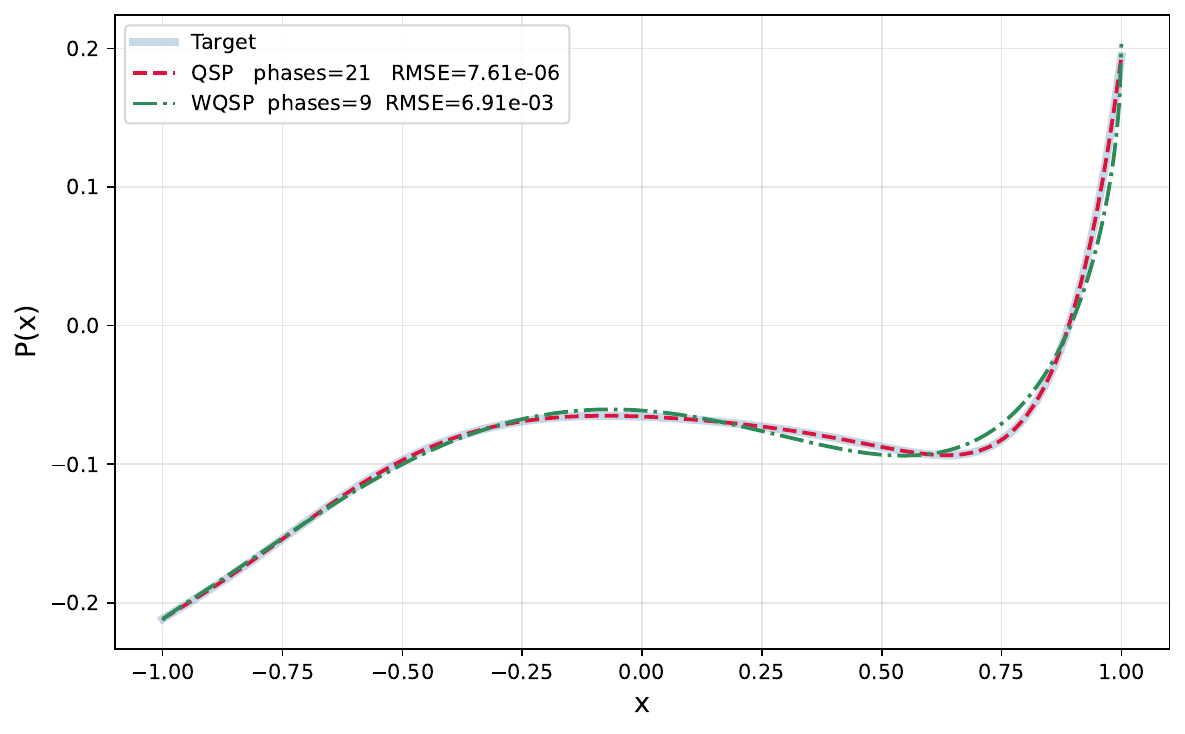}}
 \subfloat[ \label{15d} Degree 15]{\includegraphics[width=\columnwidth]{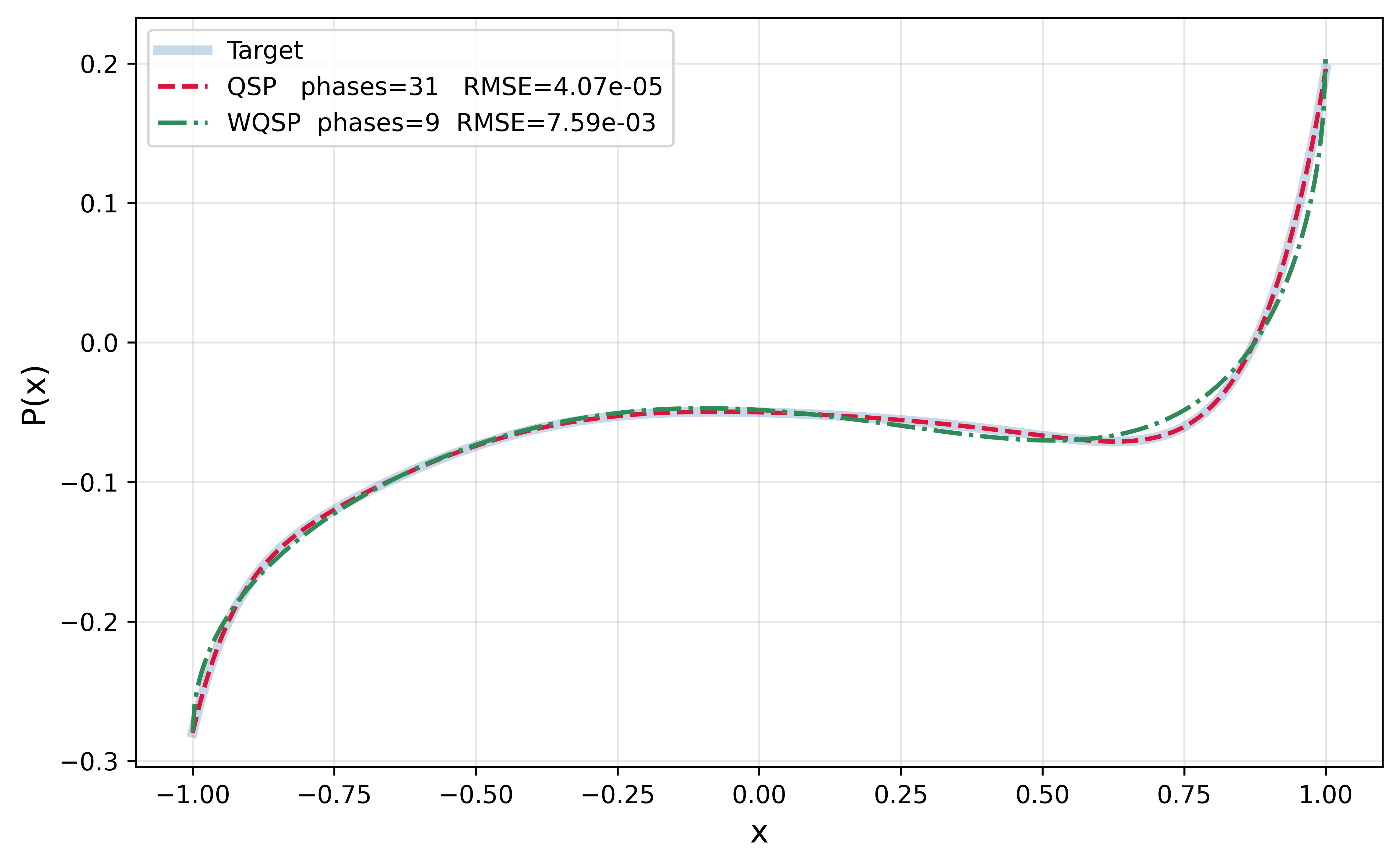}}\\
  \subfloat[ \label{35d} Degree 35 ]{\includegraphics[width=\columnwidth]{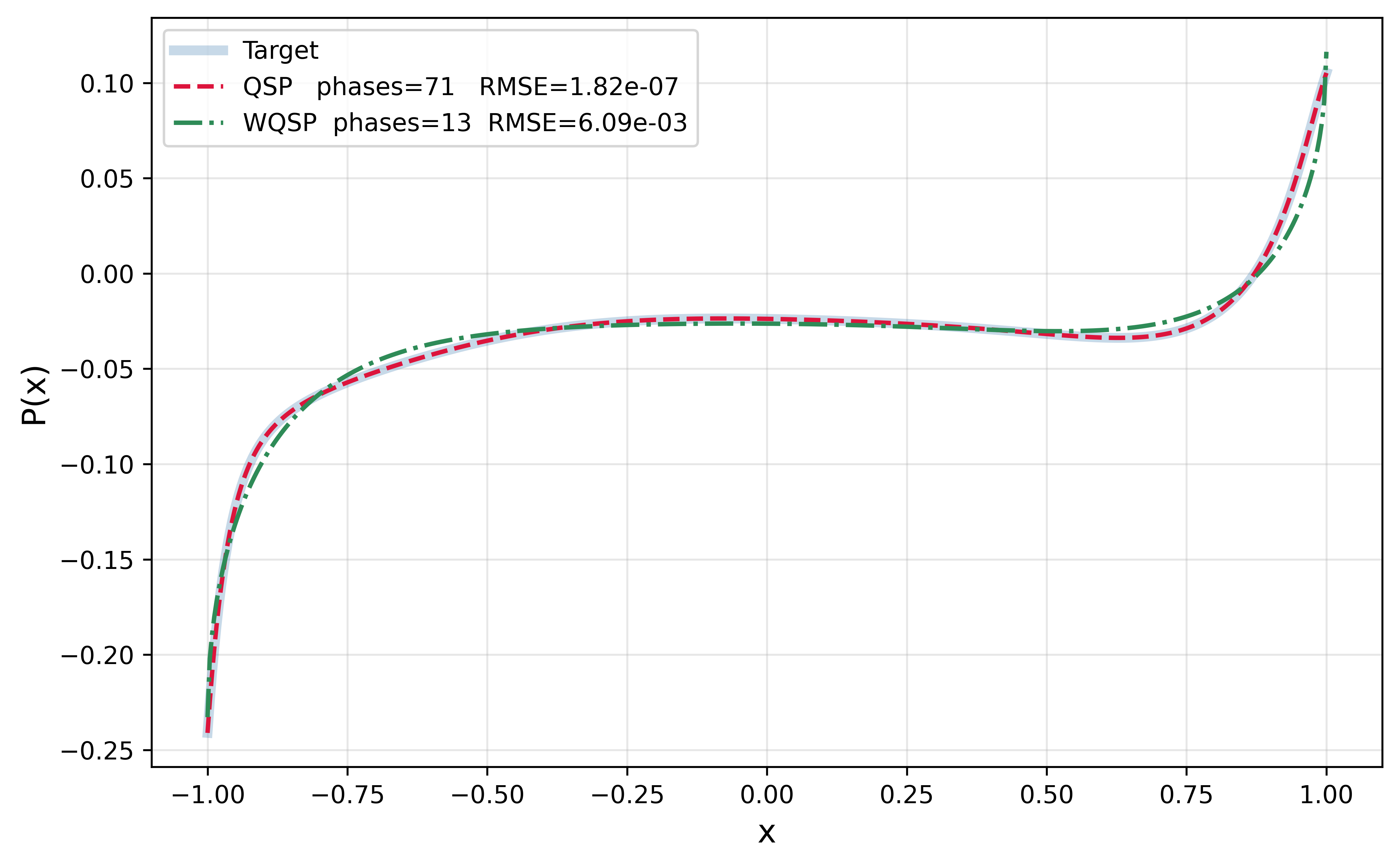}}
   \caption{WQSP fitting results for random real polynomials of degrees $10,15$ and $35$ whose Chebyshev coefficients were independently sampled respectively from the uniform distribution $U[-1/11,1/11],U[-1/16,1/16]$ and $U[-1/36,1/36]$. For both QSP and WQSP, the polynomial was first broken into odd and even components and the corresponding approximation errors were added up. The weight vectors for WQSP are chosen for degree $10: [2,2,2,4]$(even part),$[1,2,2,4]$ (odd part). For degree $15:[1,2,4,8]$(odd part), $[1,1,4,8]$ (even part). For degree $35: [1,2,2,2,4,8,16]$(odd part), $[2,2,2,4,8,16]$(even part).}
    \label{figmore}
\end{figure*}

We incorporate the same approach in the WQSP regime i.e. we find $\mathbf{\Phi}_{\mathbf{w}}$ for a suitable weight partition $\mathbf{w}$ such that we solve the following minimization problem.
\begin{eqnarray}
    \mathrm{min}_{\mathbf{\Phi}_{\mathbf{w}}} \|P(x)-\mathrm{Re}(\bra{0}U^{\mathrm{WQSP}}_{\mathbf{\Phi_w}}(x)\ket{0})\|_{L_{2}}
\end{eqnarray}

This constructs a quantum neural network framework where the parameter vector $\mathbf{\Phi}_\mathbf{w}$, whose components lie interleaved with $U(w_jx)$ or $R_x(w_j\theta)$ in a quantum circuit (see Equation \ref{WQSPcirc}) is optimized repeatedly each round until we achieve a suitable residue or error. In Figure 

\begin{align}\label{WQSPcirc}
    {\Qcircuit @C=1em @R=.7em {
 &\lstick{}&\gate{U(w_kx)} &\gate{R_z(\phi_k)}&\hdots&\hdots&\gate{U(w_1x)}&\gate{R_z(\phi_1)}&\qw\\}}
\end{align}

For complex polynomials, the problem is modified to \begin{eqnarray}\hspace{-0.65cm}
   && \mathrm{min}_{\mathbf{\Phi}_{\mathbf{w}}} \|P(x)-(\bra{0}U^{\mathrm{WQSP}}_{\mathbf{\Phi_w}}(x)\ket{0})\|_{L_{\omega}}=\\\nonumber&&\mathrm{min}_{\mathbf{\Phi}_{\mathbf{w}}}\sqrt{\int_{-1}^1 \frac{1}{\sqrt{1-x^2}}|P(x)-\bra{0}U^{\mathrm{WQSP}}_{\mathbf{\Phi}_{\mathbf{w}}}(x)\ket{0}|^2dx}
\end{eqnarray}.

In Figure \ref{fig:optimizescatter1410}, we employed the method using Levenberg--Marquardt optimization on the parameters for $14$-degree and $10$-degree polynomials whose coefficients were sampled uniformly from $[-1/8,1/8]$ and $[-1/6,1/6]$, respectively, in order to satisfy the boundedness property of the polynomials. We then applied the proposed framework, incorporating the Levenberg--Marquardt method, to approximate these polynomials through WQSP. For each optimization run, $20$ random restarts were performed from different initial points, and the run achieving the least error was selected. The resulting errors were then averaged over each partition. The figure illustrates that WQSP can efficiently approximate polynomials using significantly fewer parameters than standard QSP, indicating that QSP contains substantial parameter redundancy that can be pruned without noticeably affecting the approximation performance of the resulting circuit.
\begin{figure*}[ht!]
    \centering
 \subfloat[ \label{deg14} ]{\includegraphics[width=\columnwidth]{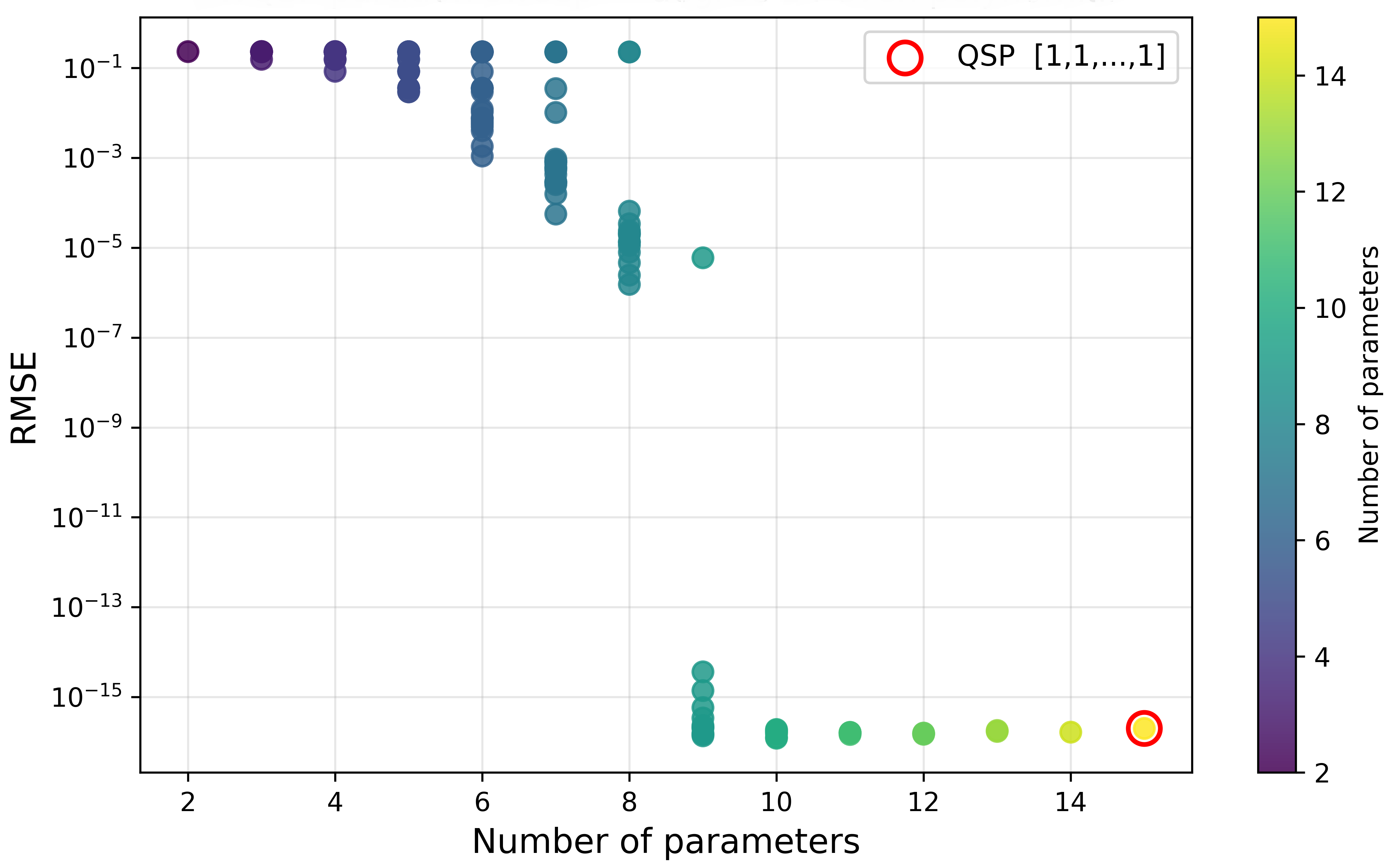}}
 \subfloat[ \label{deg10} ]{\includegraphics[width=\columnwidth]{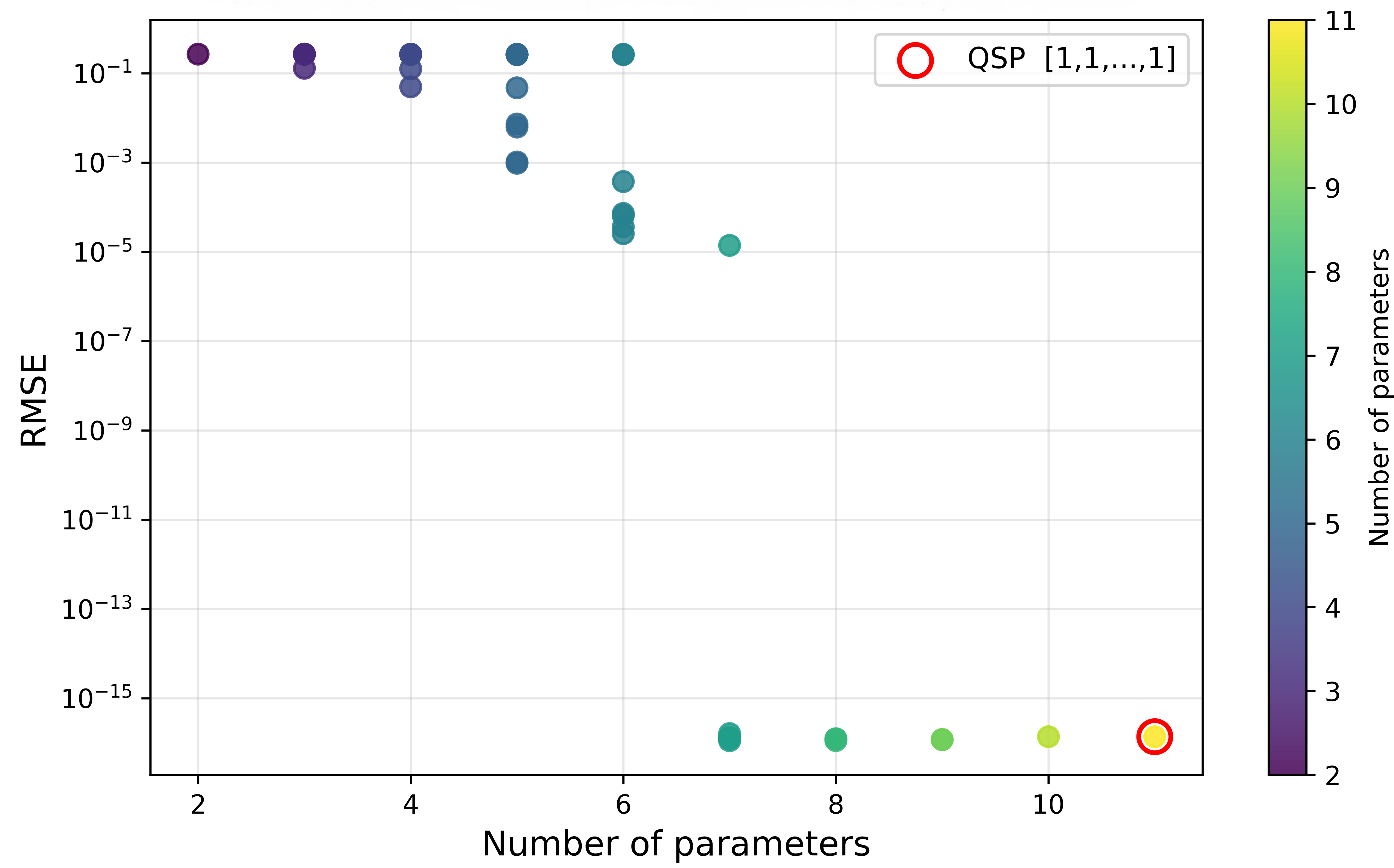}}
   \caption{WQSP fitting error across all partitions for a real degree-$14$ (\ref{deg14}) and degree-$10$ (\ref{deg10}) polynomial. Polynomial coefficients were generated by uniformly sampling from $[-1/8,1/8]$ and $[-1/6,1/6]$, respectively, and approximated using WQSP. Phase parameters were optimized using the Levenberg--Marquardt method with $20$ random restarts, and the solution achieving the lowest approximation error was reported.}\label{fig:optimizescatter1410}
\end{figure*}

We validate the WQSP framework through numerical simulations by training WQSP circuits to approximate univariate polynomial target functions. To evaluate approximation performance across a broad function class, we generate $100$ random target polynomials and train each circuit using $100$ equally spaced samples over $x\in[-1,1]$. For degree-$10$ polynomials, the results in Figure~\ref{tabledeg10} show the approximation error obtained under different weight partitions. We observe that partitions whose lengths exceed the number of nonzero Chebyshev coefficients generally achieve lower approximation error, supporting the theoretical intuition that increased effective degrees of freedom improve approximation quality.

\begin{figure*}[t!]
    \centering
 \subfloat[Weight{=$[1,1,\dots,1]$}, Mean Error= $6.89\times 10^{-5}$ ]{\includegraphics[width=1\columnwidth]{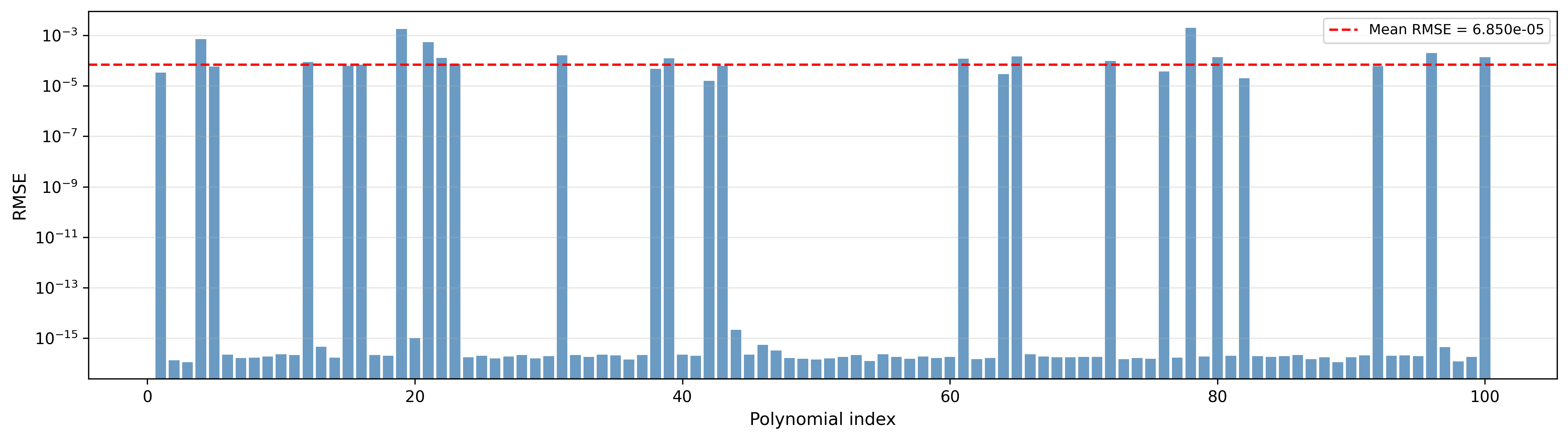}} \hfill
 \subfloat[Weight{=$[2,1,\dots,1]$},  Mean Error= $1.14\times 10^{-4}$]{\includegraphics[width=1\columnwidth]{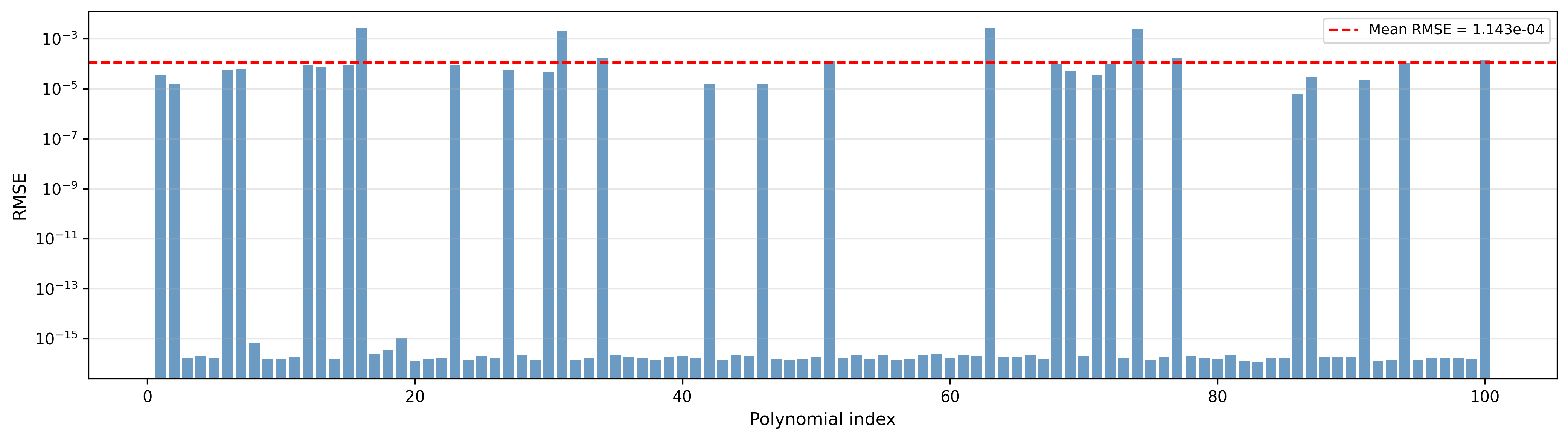}}\\
 \subfloat[Weight{=$[2,2,2,1,\dots,1]$},  Mean Error= $5.08\times 10^{-5}$]{\includegraphics[width=1\columnwidth]{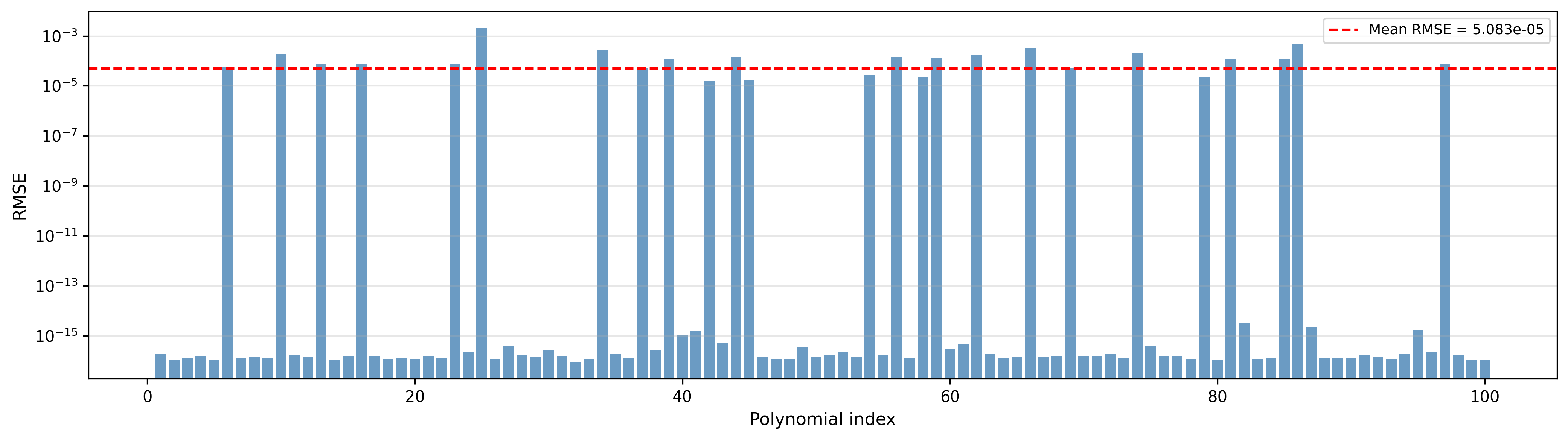}} \hfill
 \subfloat[Weight{=$[3,3,1,\dots,1]$},  Mean Error= $6.81\times 10^{-3}$]{\includegraphics[width=1\columnwidth]{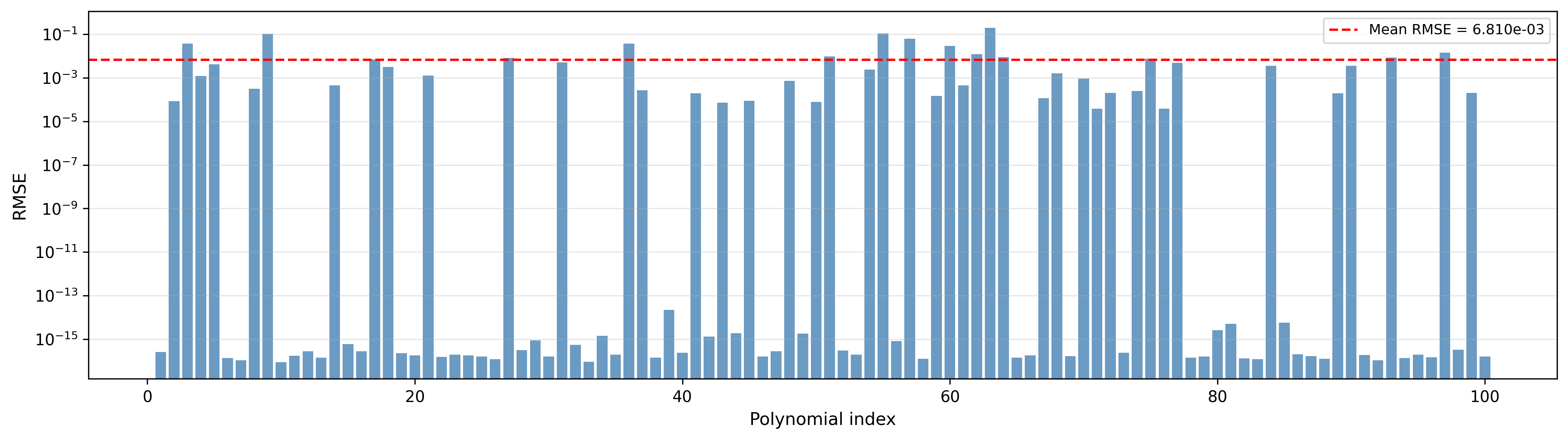}}\\
 \subfloat[Weight{=$[4,2,2,2]$},  Mean Error= $2.05\times 10^{-1}$]{\includegraphics[width=1\columnwidth]{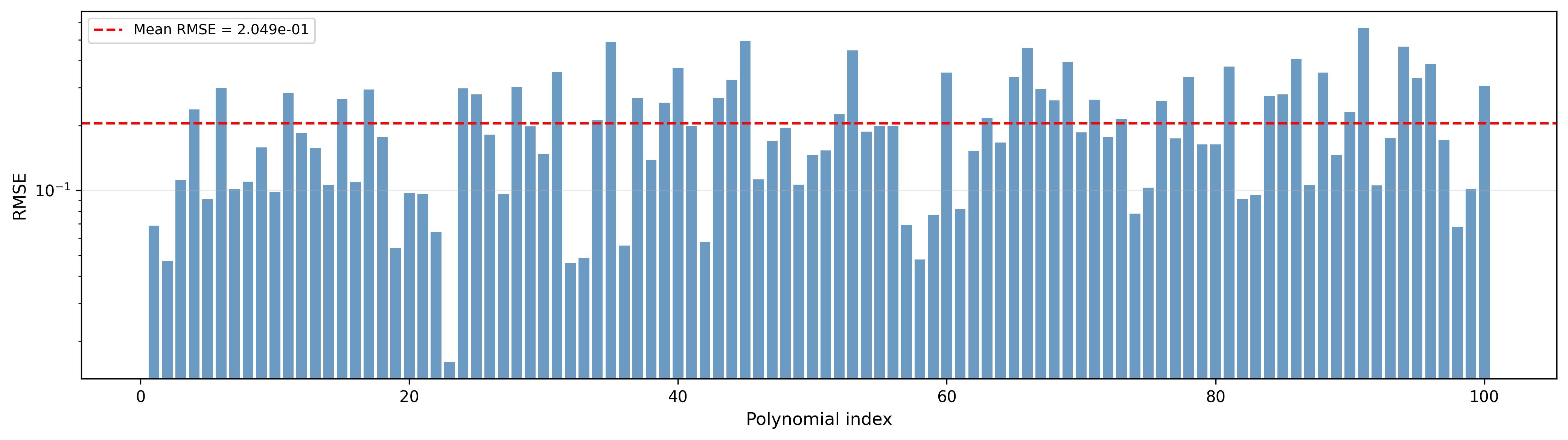}} \hfill
 \subfloat[Weight{=$[4,2,2,1,1]$},  Mean Error= $5.21\times 10^{-2}$]{\includegraphics[width=1\columnwidth]{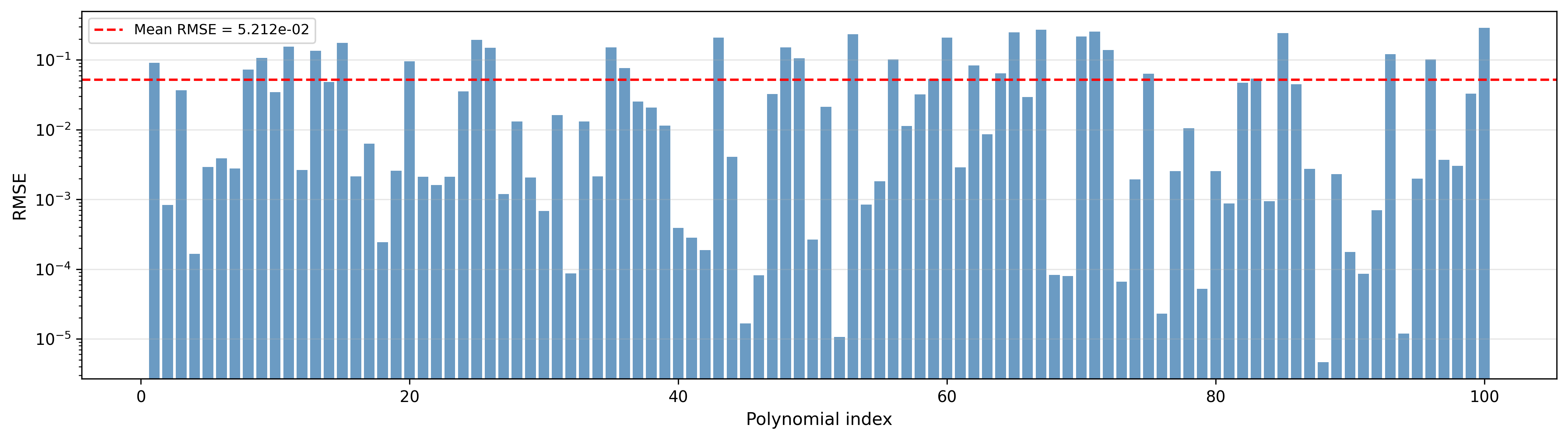}}
  \caption{Illustration of approximating  real degree $10$ polynomials using learning based WQSP under different partitions of $10$. A total of $100$ random polynomials were generated by sampling Chebyshev coefficients uniformly from $[-1/11,1/11]$. For each polynomial and partition, phase optimization was performed using the Levenberg--Marquardt method with $20$ random restarts, and the minimum approximation error was retained. Each subfigure is labeled by the corresponding partition, with the first subfigure representing standard QSP.}\label{tabledeg10}
\end{figure*}

Using the partition labels defined in Table~\ref{tab:wqsp_partitions}, we report the mean, median, and minimum approximation errors across all $100$ randomly generated degree-$10$ polynomials for each partition of $10$ in Figure \ref{fig:10poly}.

 We employ the least-squares solver from \texttt{scipy} with the Levenberg-Marquardt optimizer, minimizing the root mean square error (RMSE) between the target and approximated functions at the training points as the cost function. The Levenberg-Marquardt \cite{LM_method} algorithm is specifically suited to nonlinear least-squares and curve-fitting tasks, and does not generalize to arbitrary machine learning objectives. It switches between gradient descent and the Gauss-Newton method via an adaptive damping term, i.e. when far from a minimum, the damping term is large and the update resembles gradient descent, enabling large steps, whereas near a minimum, the damping term is small and the update reduces to smaller Gauss-Newton steps. The Gauss-Newton method approximates the curvature of the cost landscape to first order by evaluating the Jacobian of the residuals (one per training point) with respect to the trainable parameters, yielding a Jacobian of size $M \times N$, where $M$ is the number of parameters and $N$ is the number of training points.
\begin{figure*}[ht!]
    \centering
 \subfloat[ \label{mean} Mean Error ]{\includegraphics[width=\columnwidth]{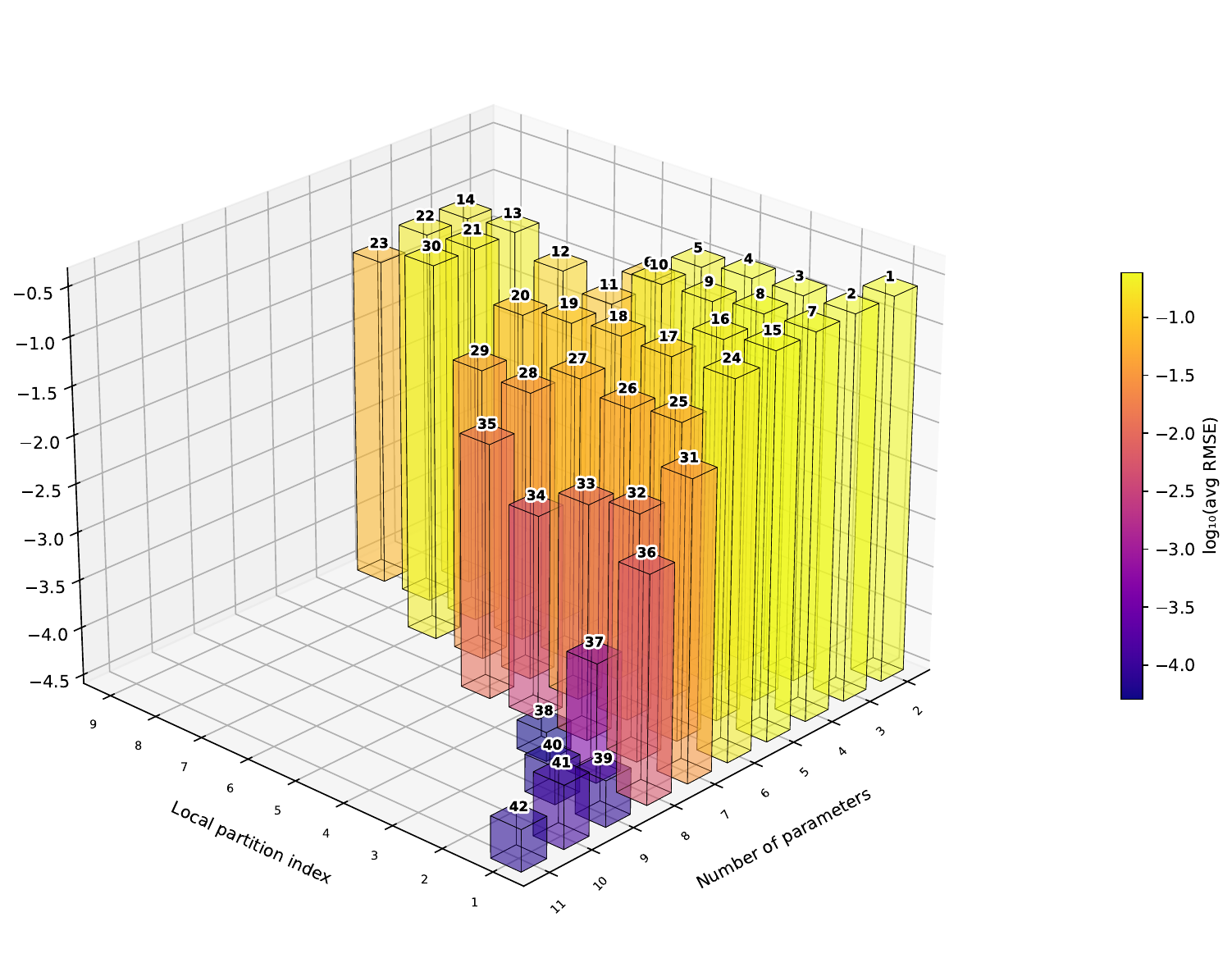}}
 \subfloat[ \label{median} Median Error]{\includegraphics[width=\columnwidth]{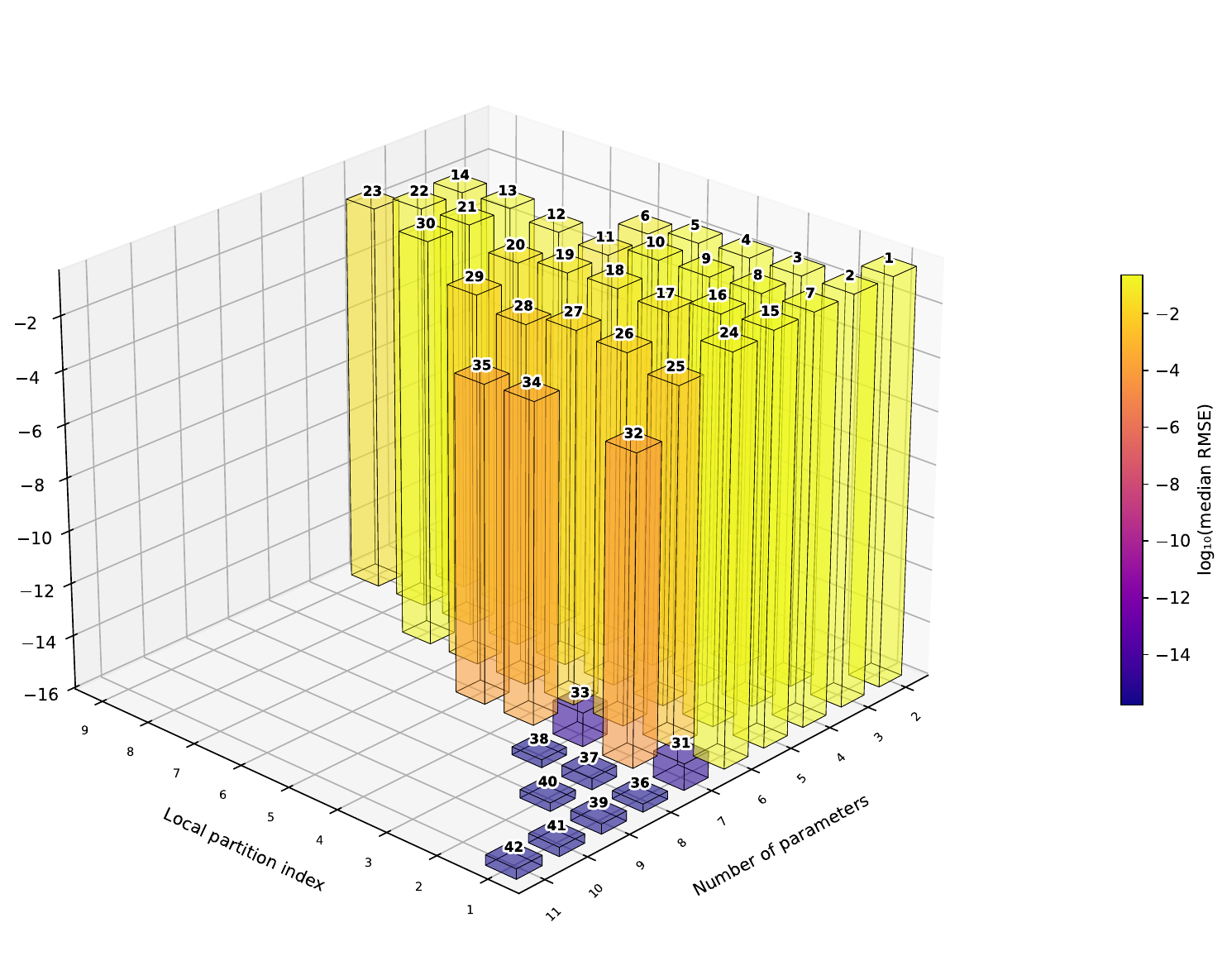}}\\
  \subfloat[ \label{min} Minimum Error ]{\includegraphics[width=\columnwidth]{images_R/wqsp_univariate_median_rmse_3dbar.pdf}}
   \caption{WQSP fitting results for real degree-$10$ target polynomials whose Chebyshev coefficients were independently sampled from the uniform distribution $U[-1/11,1/11]$. A total of $100$ random polynomial instances were generated. For each partition of $10$ (as labeled in Table~\ref{tab:wqsp_partitions}) illustrated by a bar in the figure, optimization was performed with $20$ random restarts and the solution with the lowest error was retained. The reported statistics correspond to the mean \ref{mean}, median\ref{median}, and minimum \ref{min} approximation errors across all $100$ samples. All of the errors have been presented in logarithmic scale.}
    \label{fig:10poly}
\end{figure*}
Although this approach does not scale to high-parameter regimes, in the low-parameter settings characteristic of WQSP circuits simulated in this work, it is known to exhibit substantially faster and more reliable convergence than standard gradient-descent or Nelder-Mead based optimizers used in conventional ML or QML pipelines, often reaching significantly lower cost function values. This training dynamics is particularly depicted in Figure \ref{fig:GDvsLM}. In our experiments, several circuits achieved an RMSE of the order of $10^{-16}$, a level of precision far beyond the reach of gradient-descent-based training.
\begin{figure*}[ht!]
    \centering
    \includegraphics[width=1.5\columnwidth]{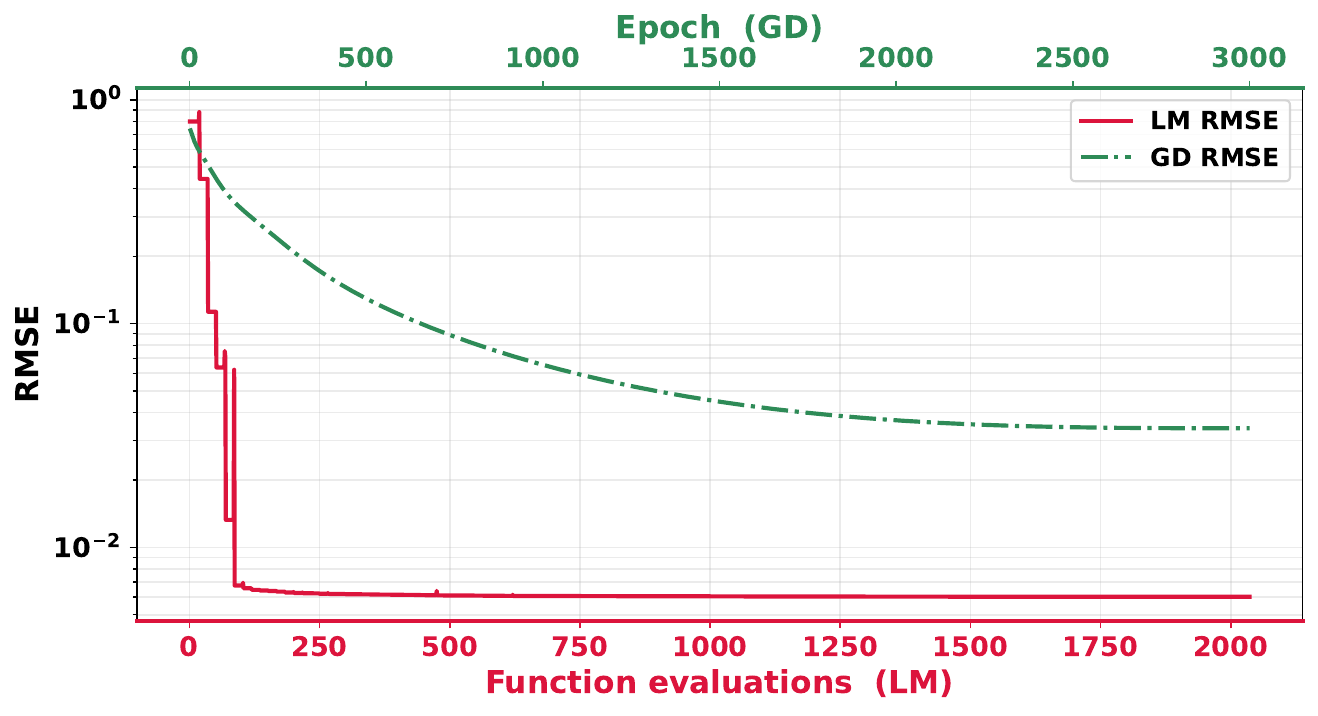}
\caption{Training dynamics of QSP circuit via gradient descent (GD) and Levenberg-Marquardt (LM) for univariate polynomial approximation using a single QSP circuit. LM leads to a much faster convergence to a much smaller value $O(10^{-16})$ as opposed to $O(10^{-4})$ in gradient descent.}
\label{fig:GDvsLM}
\end{figure*}
\begin{table*}[htbp]
\begin{center}
\caption{WQSP Partition Indexing}
\label{tab:wqsp_partitions}
\begin{tabular}{ccc|ccc}
\toprule
\textbf{Idx} & \textbf{Partition} & \textbf{\#Parameters} & \textbf{Idx} & \textbf{Partition} & \textbf{\#Parameters} \\
\midrule
1  & $[10]$              & 1 & 22 & $[3, 3, 3, 1]$       & 4 \\
2  & $[9, 1]$            & 2 & 23 & $[3, 3, 2, 2]$       & 4 \\
3  & $[8, 2]$            & 2 & 24 & $[6, 1, 1, 1, 1]$    & 5 \\
4  & $[7, 3]$            & 2 & 25 & $[5, 2, 1, 1, 1]$    & 5 \\
5  & $[6, 4]$            & 2 & 26 & $[4, 3, 1, 1, 1]$    & 5 \\
6  & $[5, 5]$            & 2 & 27 & $[4, 2, 2, 1, 1]$    & 5 \\
7  & $[8, 1, 1]$         & 3 & 28 & $[3, 3, 2, 1, 1]$    & 5 \\
8  & $[7, 2, 1]$         & 3 & 29 & $[3, 2, 2, 2, 1]$    & 5 \\
9  & $[6, 3, 1]$         & 3 & 30 & $[2, 2, 2, 2, 2]$    & 5 \\
10 & $[6, 2, 2]$         & 3 & 31 & $[5, 1, 1, 1, 1, 1]$ & 6 \\
11 & $[5, 4, 1]$         & 3 & 32 & $[4, 2, 1, 1, 1, 1]$ & 6 \\
12 & $[5, 3, 2]$         & 3 & 33 & $[3, 3, 1, 1, 1, 1]$ & 6 \\
13 & $[4, 4, 2]$         & 3 & 34 & $[3, 2, 2, 1, 1, 1]$ & 6 \\
14 & $[4, 3, 3]$         & 3 & 35 & $[2, 2, 2, 2, 1, 1]$ & 6 \\
15 & $[7, 1, 1, 1]$      & 4 & 36 & $[4, 1, 1, 1, 1, 1, 1]$ & 7 \\
16 & $[6, 2, 1, 1]$      & 4 & 37 & $[3, 2, 1, 1, 1, 1, 1]$ & 7 \\
17 & $[5, 3, 1, 1]$      & 4 & 38 & $[2, 2, 2, 1, 1, 1, 1]$ & 7 \\
18 & $[5, 2, 2, 1]$      & 4 & 39 & $[3, 1, 1, 1, 1, 1, 1, 1]$ & 8 \\
19 & $[4, 4, 1, 1]$      & 4 & 40 & $[2, 2, 1, 1, 1, 1, 1, 1]$ & 8 \\
20 & $[4, 3, 2, 1]$      & 4 & 41 & $[2, 1, 1, 1, 1, 1, 1, 1, 1]$ & 9 \\
21 & $[4, 2, 2, 2]$      & 4 & 42 & $[1, 1, 1, 1, 1, 1, 1, 1, 1, 1]$ & 10 \\
\bottomrule
\end{tabular}
\end{center}
\end{table*}\normalsize
This learning framework implemented through WQSP not only approximates prescribed polynomials but also naturally extends to arbitrary continuous functions. In Figure~\ref{tablesinx}, we provide an example of approximating the continuous function $\sin(2\pi x)$ using WQSP. The choice of this function is motivated by its fixed parity and we consider a $9$-degree polynomial representation of the function. It can be observed that standard QSP yields an exact fit along with other weight choices. However, for non-trivial weight partitions, WQSP achieves the approximation using fewer parameters at the expense of higher approximation error bounded by Theorem \ref{bounds}. Consequently, the WQSP neural network architecture provides a useful foundation for training expressive neural architectures such as Kolmogorov--Arnold Networks\cite{liu2024kan}.
\begin{figure*}[ht!]
\centering
 \subfloat[]{\includegraphics[width=0.31\textwidth]{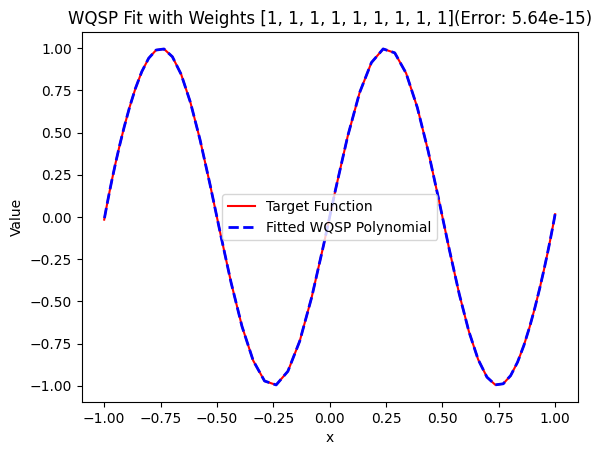}} \hfill
 \subfloat[]{\includegraphics[width=0.31\textwidth]{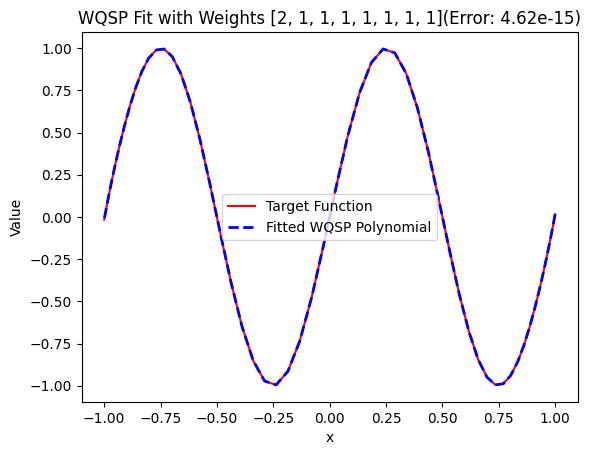}} \hfill
 \subfloat[]{\includegraphics[width=0.31\textwidth]{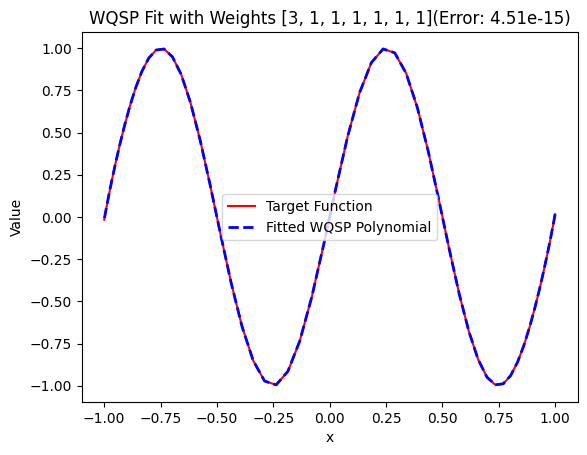}} \\
 \vspace{1ex}
 \subfloat[]{\includegraphics[width=0.31\textwidth]{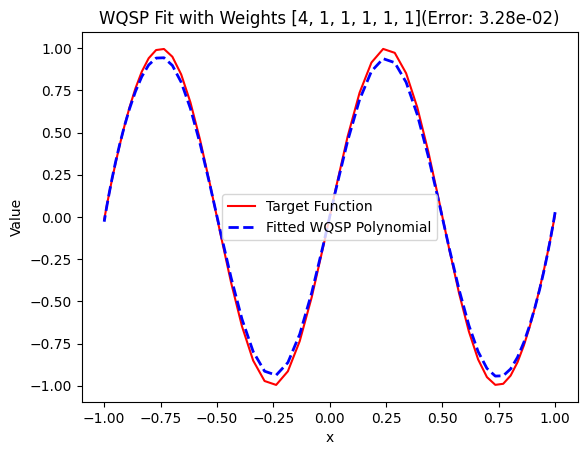}} \hfill
 \subfloat[]{\includegraphics[width=0.31\textwidth]{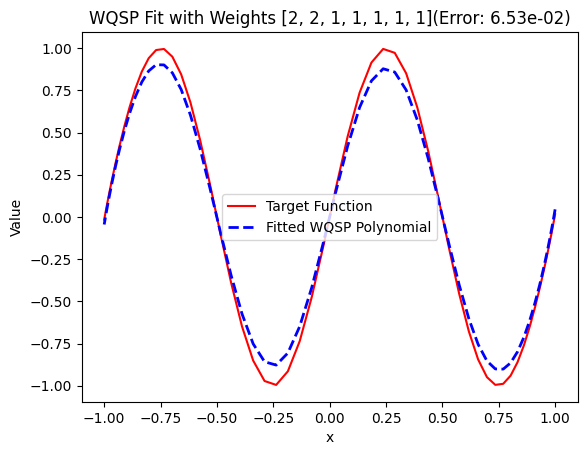}} \hfill
 \subfloat[]{\includegraphics[width=0.31\textwidth]{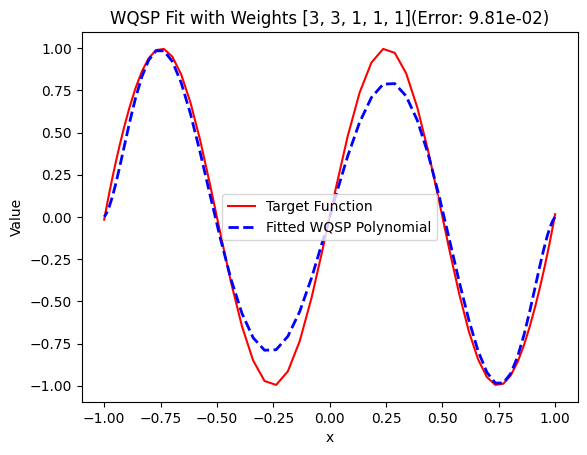}} \\
 \vspace{1ex}
 \subfloat[]{\includegraphics[width=0.31\textwidth]{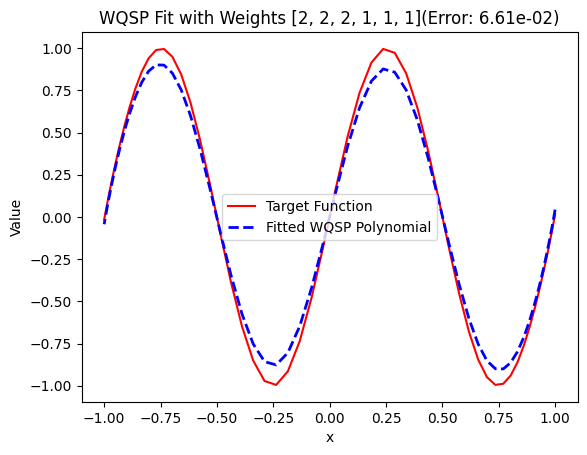}} \hfill
 \subfloat[]{\includegraphics[width=0.31\textwidth]{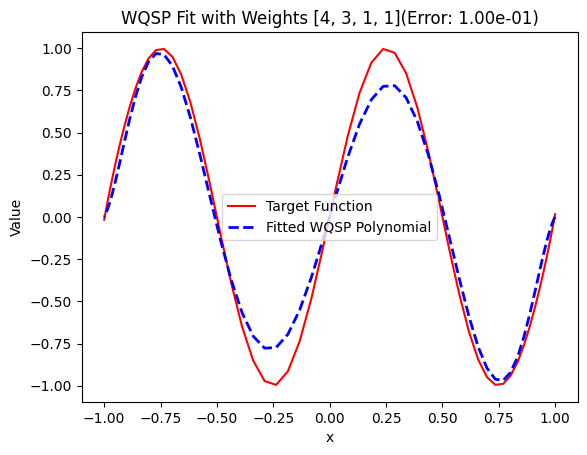}} \hfill
 \subfloat[]{\includegraphics[width=0.31\textwidth]{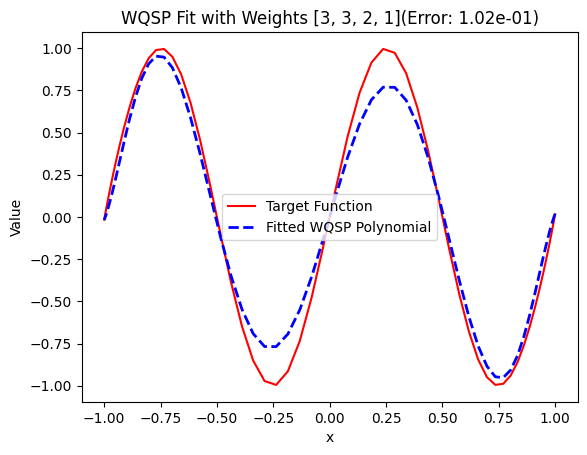}} 
\caption{Illustration of fitting $\sin(2\pi x)$ using WQSP. The function is approximated using a degree-$9$ polynomial, and several partitions of $9$ are selected as weight vectors. Optimization is performed using the Levenberg--Marquardt method with $50$ random restarts for each partition. The first subfigure corresponds to standard QSP and serves as the baseline error output. As observed in the subsequent subfigures, several non-trivial weight partitions achieve exact fitting as well, indicating that QSP contains substantial parameter redundancy. For more generic weight partitions, such as $(3,3,2,1)$, WQSP achieves a significant reduction in the number of parameters at the cost of increased approximation error.}\label{tablesinx}
\end{figure*} 
It is evident that the optimizer does not attain machine-precision accuracy in practice, despite the theoretical guarantees of exact realization under the proposed framework. Therefore, incorporating more specialized phase-recovery and optimization techniques from \cite{Chao2020Finding,lin2025mathematical,linlin2,linlininfinite,Linliniter,linlinenergy,Ying2022} to improve numerical precision, as well as extending the framework to very high-degree ancilla-free parity agnostic polynomial approximation, remains an important direction for future work. So far, we have discussed how to generate and approximate univariate polynomials and continuous functions in the WQSP regime. It is also of interest to understand how to extract these polynomials from their corresponding quantum circuits.

\section{Polynomial extraction from WQSP circuit}\label{polycirc}

It is obvious that for real polynomials, extracting them from WQSP is direct i.e. $|\bra{0}U_{\mathbf{\Phi}_\mathbf{w}}^{\mathrm{WQSP}}\ket{0}|^2$. However, for complex polynomials, the process is nontrivial because the output generates $|P(x)|^2$, $P(x)\in \mathbb{C}[x]$ and the real part and complex part is lost. For $P(x)\in \mathbb{C}[x]$ , one needs to extract the real and imaginary components of $P$. We adapt the Hadamard test approach \cite{NielsenChuang2010} where we consider the quantum circuit:

\begin{eqnarray}\label{realQSPextract}
    {\Qcircuit @C=1em @R=.7em {
 &\lstick{\ket{0}}&\gate{H}&\ctrl{1}&\gate{H}&\qw\\
 &\lstick{\ket{0}}&\qw&\gate{ U^{\mathrm{QSP}}_{\boldsymbol{\Phi}}}&\qw&\qw\\}}
\end{eqnarray}

This circuit in Equation \ref{realQSPextract}, the matrix 
$U^{\mathrm{QSP}}_{\boldsymbol{\Phi}_{\mathbf{w}}}$ acts upon the second qubit provided the first qubit is $\ket{1}$. This qubit also acts as an ancilla. \begin{enumerate}
    \item We begin with the starting state $\ket{00}$. 
    \item After applying $(H\otimes I)$, where $H=\frac{1}{\sqrt{2}}\bmatrix{1&1\\1&-1}$, we get $\ket{00}\rightarrow\frac{1}{\sqrt{2}}(\ket{00}+\ket{10})$.  
    \item Then controlled-$U$ is applied on the second qubit provided the first one is $\ket{1}$. Hence, we get the following
    \begin{eqnarray*}
        &&\frac{1}{\sqrt{2}}(\ket{00}+\ket{1}U\ket{1})\\&&=\frac{1}{\sqrt{2}}(\ket{00}+P(x)\ket{10}+i\sqrt{1-x^2}Q\ket{11})
    \end{eqnarray*}.
    \item We further apply $(H\otimes I)$ and we get the state
    \begin{eqnarray*}
        &&\frac{1}{2}\ket{0}\left((1+P(x))\ket{0}+i\sqrt{1-x^2}Q(x)\ket{1}\right)\\&&+\frac{1}{2}\ket{1}\left((1-P(x))\ket{0}-i\sqrt{1-x^2}Q(x)\ket{1}\right)
    \end{eqnarray*}
    \item Measuring $\ket{0}$ on the first qubit is \begin{eqnarray*}\hspace{-0.6cm}
        \text{Pr}(0)&=&\frac{1}{4}\|(1+P(x))\ket{0}+i\sqrt{1-x^2}Q(x)\ket{1}\|^2\\
        &=&\frac{1}{4}\left(1+2\text{Re}(P(x))+\underbrace{|P(x)|^2+(1-x^2)|Q(x)|^2}_{=1}\right)\\
        &=&\frac{1}{4}\left(2+2\text{Re}(P(x))\right)
    \end{eqnarray*}This implies
    \begin{eqnarray*}
       \text{Re}(P(x))={2\text{Pr}(0)-1}
    \end{eqnarray*}
\end{enumerate}

For Imaginary part of $P(x)$, one uses the quantum circuit

\begin{eqnarray}\label{imagQSPextract}
    {\Qcircuit @C=1em @R=.7em {
 &\lstick{\ket{0}}&\gate{H}&\gate{S}&\ctrl{1}&\gate{H}&\qw\\
 &\lstick{\ket{0}}&\qw&\qw&\gate{ U^{\mathrm{QSP}}_{\boldsymbol{\Phi}}}&\qw&\qw\\}}
\end{eqnarray} where $S=\bmatrix{1 & 0\\0&-i}$. In such a case $\text{Im}(P(x))=2\text{Pr}(0)-1$.
\section{Application of WQSP in Kolmogorov-Arnold Networks}\label{wqspKAN}
\begin{figure*}[t!]
    \centering
    \includegraphics[width=1\linewidth]{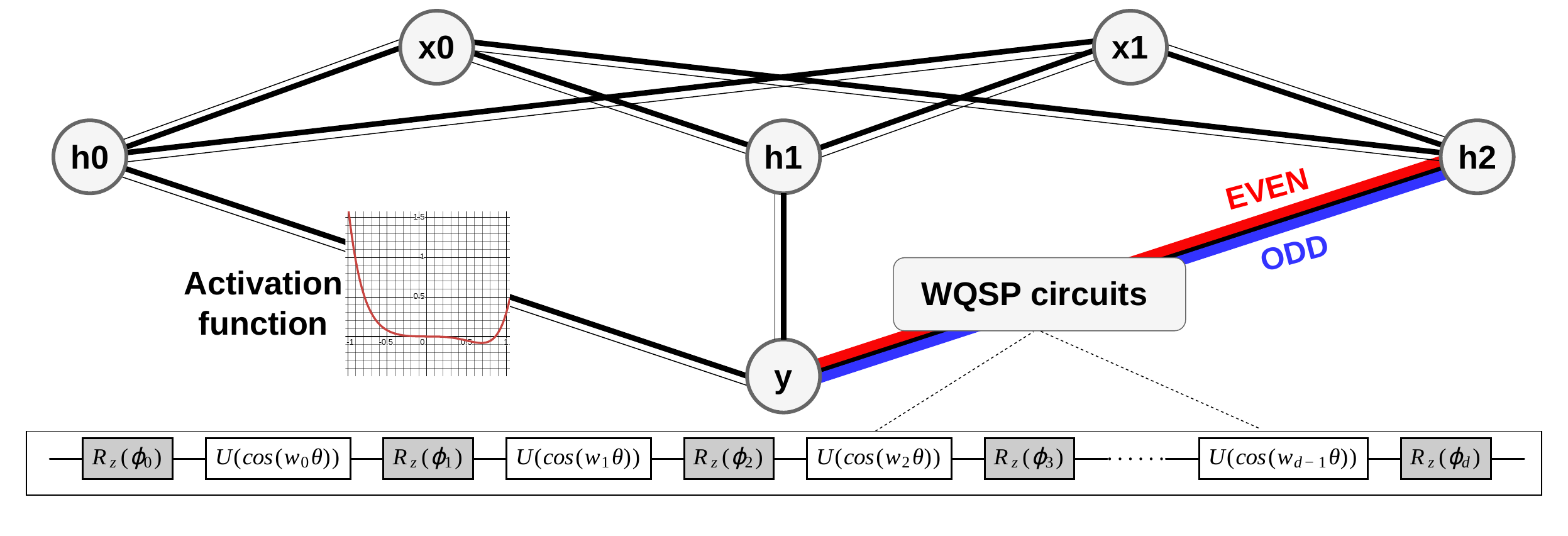}
 
    \caption{The proposed WQSP-based QKAN architecture. Each activation edge carries two WQSP circuits of depths $L$ and $L-1$ approximating the even and odd parity components of the activation function respectively, whose outputs are summed to produce an arbitrary univariate activation.}
    \label{fig:qkan_arch}
\end{figure*}
To showcase the applicability of WQSP in machine learning architectures, we apply this approach to realize activation functions in Kolmogorov–Arnold Networks (KANs) for multivariate function approximation. Kolmogorov–Arnold Networks (KANs) \cite{liu2024kan} have recently emerged as a compelling alternative architecture in this context. They are inspired by the Kolmogorov–Arnold Representation Theorem (KART) \cite{SCHMIDTHIEBER2021119}, also known as the Kolmogorov Superposition Theorem, originating from Hilbert’s 13th problem. KART states that any continuous function $f$ on a compact subset of $\mathbb{R}^n$  can be represented exactly as a finite superposition of continuous univariate functions. Motivated by this decomposition, KANs replace fixed activation functions with learnable univariate functions defined on edges, while aggregation is performed at nodes. 
We employ WQSP to parameterize learnable activation functions within Kolmogorov--Arnold Networks (KANs), resulting in a quantum-native architecture for multivariate function approximation. Each edge activation is realized as a WQSP-generated polynomial, integrating the compositional and interpretable structure of KANs with the expressive power of WQSP. It can be observed that WQSP-assisted KANs achieve compact, expressive, and numerically stable approximation of multivariate continuous functions. For Weighted Quantum Signal Processing, one can only generate polynomials with definite parity (either even or odd). Consequently, a general function $f(x)$ must be decomposed into its even and odd components, i.e.
\[
f(x) = \frac{f(x) + f(-x)}{2} + \frac{f(x) - f(-x)}{2}.
\].  The structure of WQSP-KANs is depicted in Figure \ref{fig:qkan_arch}.

Having validated the univariate approximation capability of QSP and WQSP, we turn to the multivariate setting and evaluate the full QKAN architecture on the target function \small\begin{eqnarray}\label{targetfn}\hspace{-1.5cm}
    y&&=f(x_0,x_1)\\\nonumber&&=\exp{\Biggl(-\frac{(x_0-0.5)^2+(x_1-0.5)^2}{0.2}\Biggr)}+0.3(x_0^2-x_1^2)
\end{eqnarray}\normalsize which combines a Gaussian peak centered at $(0.5,0.5)$ with a broad saddle region. We employ WQSP-assisted KANs with degree-10 polynomial activations, realized by WQSP circuits whose signal operator weights correspond to all 42 different integer partitions of $10$. Since the function is bivariate, we consider a $[2,3,1]$ architecture i.e. the architecture has two input nodes $x_0$ and $x_1$, three hidden nodes, and one output node $y$, giving $9$ activation edges in total. The number of trainable parameters per activation is kept equal across all architectures, namely, $2L$ where $L$ is the length of the weight vector for WQSP. This yields a total of $18L$ parameters per model. 

The QKAN parameters are randomly initialized and trained jointly using Levenberg-Marquardt method across all $9$ edges simultaneously with $20$ restarts and with the training residual set to the full network output. This joint optimization allows LM to account for the coupling between activations that arises from the summation and normalization operations at each hidden node. The number of training points is set to $18L+10$ in all cases.

In Figure~\ref{fig:threefigs}, we approximate the target function shown in Figure~\ref{fig:threefigs}(a), defined in Equation~\ref{targetfn}, using a WQSP-assisted KAN with degree-$10$ polynomial activations. The WQSP-KAN at $d=10$ visually captures the broad structure of the target function, including the Gaussian peak and saddle region, with a relatively smooth and consistent surface. The point-wise error surface reveals that the largest residuals occur near the boundary, while the saddle region is approximated with lower error. The WQSP-QKAN for weight $[3,3,2,1,1]$ i.e $L=5$  requires only $90$ total parameters compared to $198$ parameters for the base QSP-QKAN and produces a better result with error of magnitude $O(10^{-1.2})$, along with other weight partitions as illustrated in Figure \ref{fig:QKAN_d_10}. In these experiments, we naively assume that all activation polynomials have the same degree. Exploring optimal network architectures and heterogeneous degree assignments tailored to a given target function is left for future work. 

\begin{figure*}[ht!]
    \centering
 \subfloat[ \label{targ}Target function]{\includegraphics[width=\columnwidth]{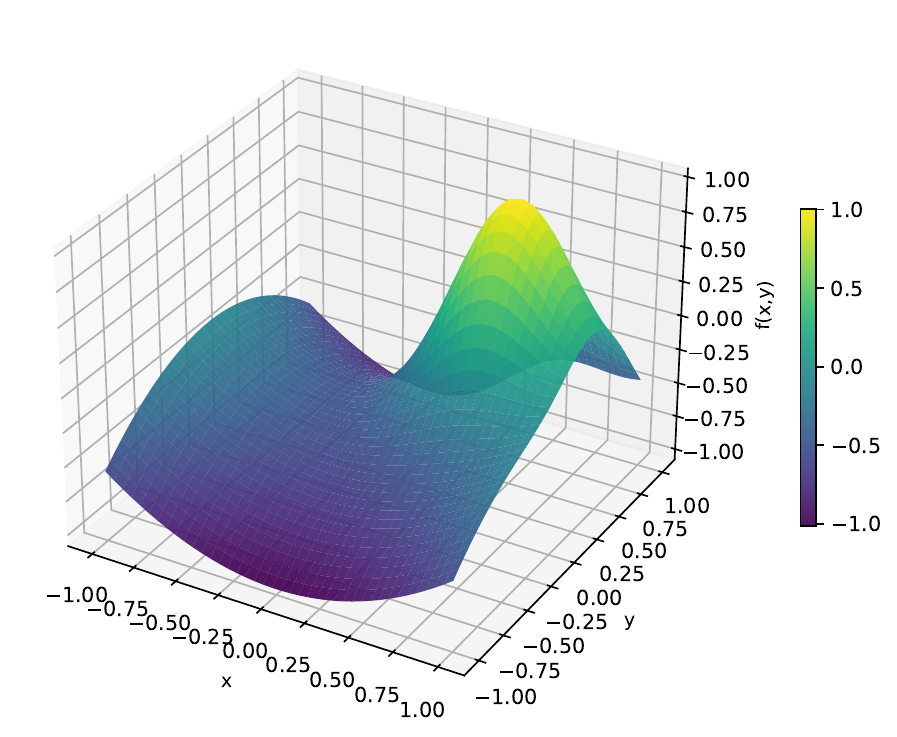}}
 \subfloat[ \label{appxwqsp} WQSP-KAN based approximation]{\includegraphics[width=\columnwidth]{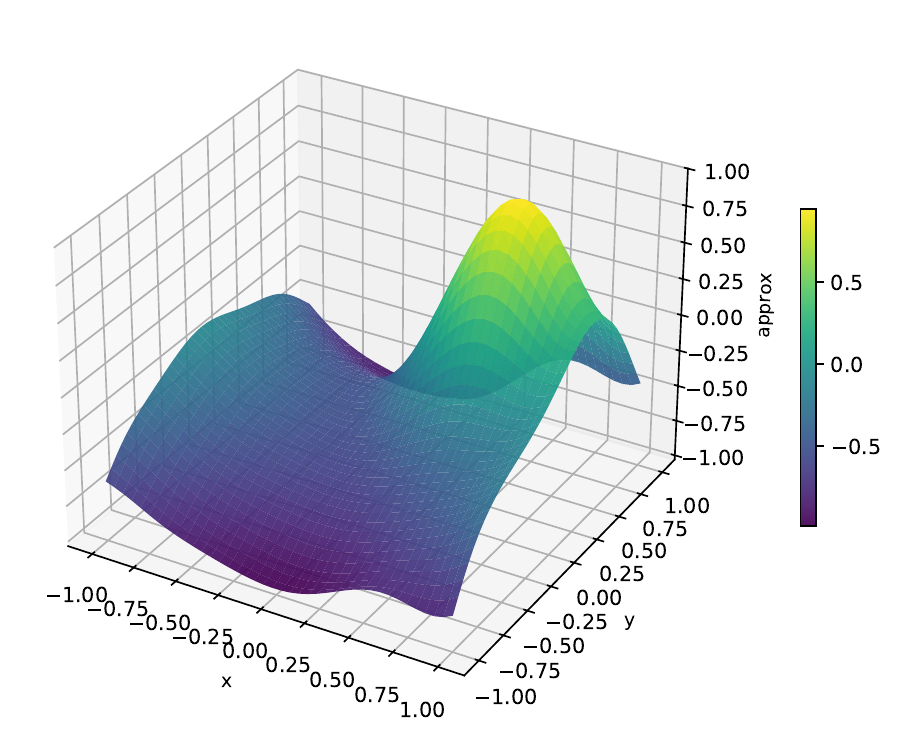}}\\
  \subfloat[ \label{errtarg} Point-wise Error ]{\includegraphics[width=\columnwidth]{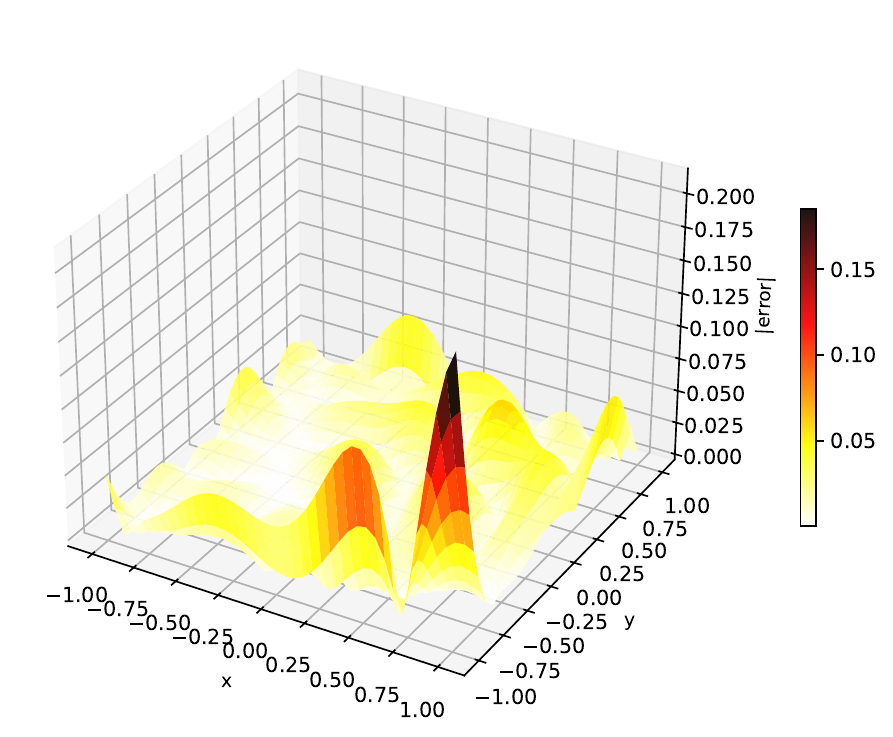}}
   \caption{Multivariate function approximation by a WQSP-QKAN with weight vector chosen to be $[3,3,2,1,1]$ i.e. every activation function is naively assumed to be a degree $10$ polynomial.}
    \label{fig:threefigs}
\end{figure*}

\begin{figure*}[t!]
    \centering
 \subfloat[ \label{2dplot} Approximation Error: WQSP-KAN]{\includegraphics[width=\columnwidth]{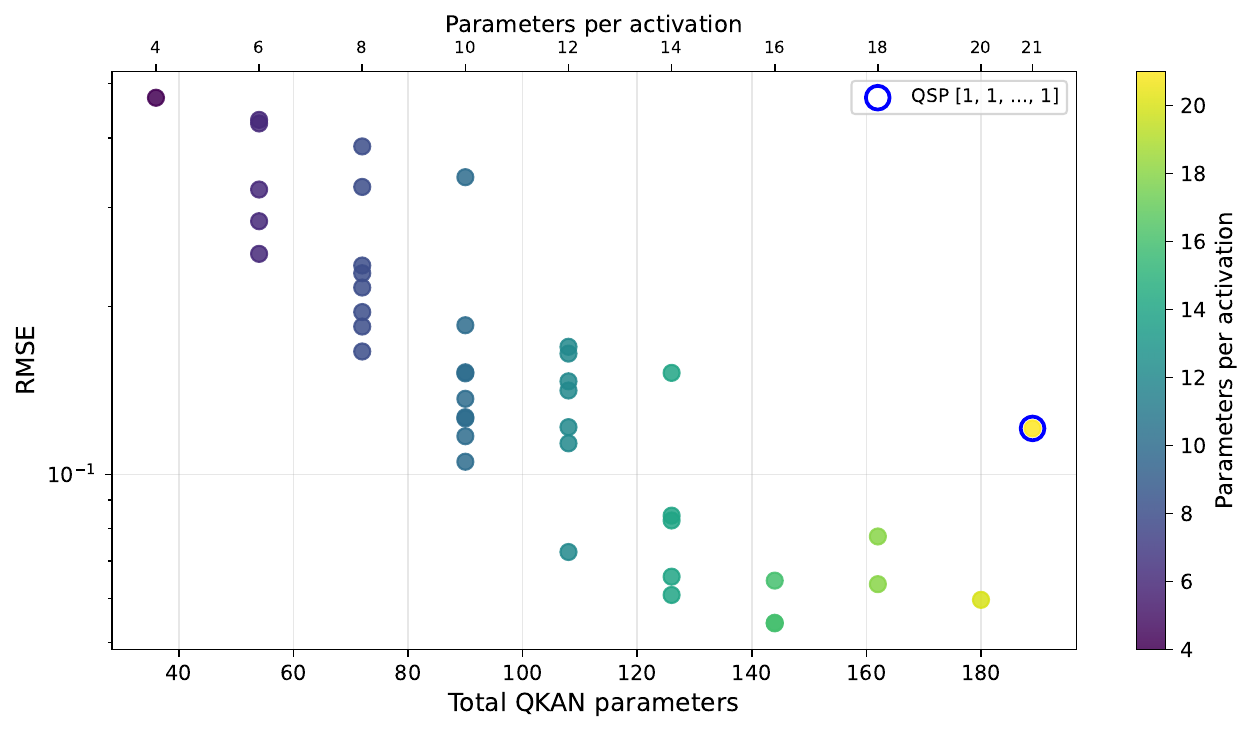}}
 \subfloat[ \label{3dplot} WQSP-KAN based approximation: partition based classification]{\includegraphics[width=\columnwidth]{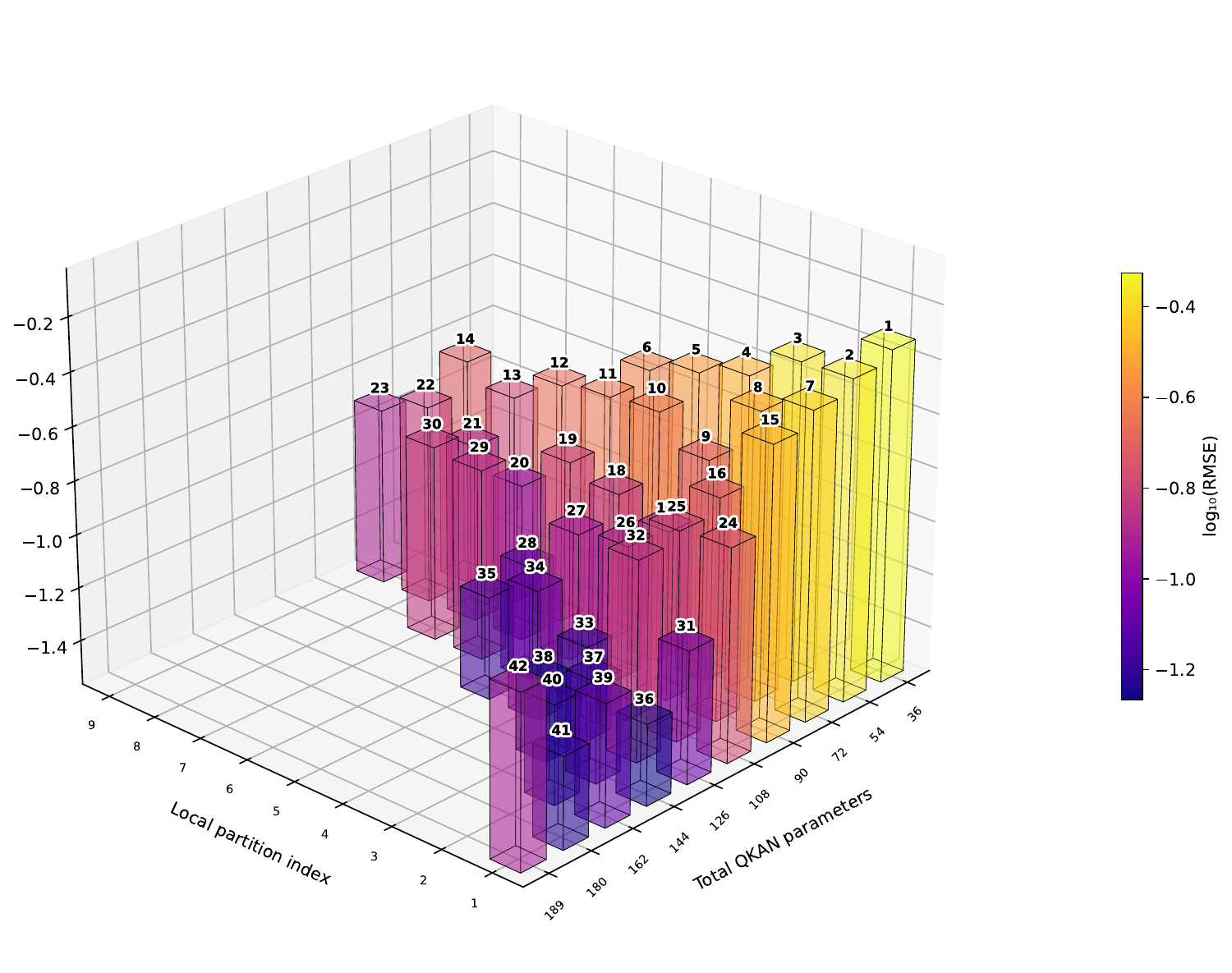}}
  \caption{(a) RMSE vs. parameter count for a $[2,3,1]$ QKAN with degree $10$ activations using WQSP across all integer partitions of $10$. Several partitions match or outperform QSP with fewer parameters. (b) 3D bar plot highlighting each individual partition of 10 by their corresponding index provided in Table \ref{tab:wqsp_partitions}. Clearly, it can be seen that, compared to QSP (index $42$), WQSP based KANs use far less parameters and provide better approximation errors for suitable weight choices }
    \label{fig:QKAN_d_10}
\end{figure*}

Thus, we observe that WQSP also proves to be an effective tool for integration into machine learning frameworks such as Kolmogorov--Arnold Networks. Its ability to represent expressive polynomial activations using significantly fewer trainable parameters enables compact models for multivariate function approximation. This highlights the broader applicability of WQSP beyond polynomial approximation, positioning it as a promising primitive for developing scalable quantum-inspired learning architectures with reduced parameter complexity.

\section{Conclusion} \label{sec:conclusion}

In this work, we addressed a major bottleneck of QSP namely, the growth in circuit depth and parameter count required for approximating high-degree polynomials. To overcome these limitations, we introduced Weighted Quantum Signal Processing, a structured extension of Quantum Signal Processing that incorporates fixed per-layer weights into the signal operators. When the weights are natural numbers greater than $1$, WQSP behaves as a pruned variant of QSP This modification enables the realization of higher-degree polynomial transformations while requiring significantly fewer trainable parameters.

We showed that polynomial approximation in WQSP, and consequently in QSP, reduces to solving two linear systems. Furthermore, we demonstrated that with an appropriate choice of weights, WQSP can achieve performance comparable to QSP while requiring fewer parameters and consequently shallower quantum circuits. Our analysis revealed that QSP contains a substantial degree of parameter redundancy, whereas WQSP serves as a gate-count- and circuit-depth-optimized realization of QSP. We attributed this reduction in gate count to the effective degrees of freedom induced by the nonzero coefficients of the target polynomial in the Chebyshev basis. We further observed that if the partition induced by the weight vector exceeds this degree-of-freedom threshold under suitable conditions, WQSP can recover solutions comparable to those obtained through QSP. However, for high-degree polynomials, identifying such optimal partitions becomes computationally expensive. Consequently, generic weight selections may introduce approximation errors, although they still provide substantial reductions in parameter count. Specifically, for a target polynomial of degree (d), standard QSP requires $\Theta(d)$ trainable phase parameters and a similar circuit depth. In contrast, WQSP with exponentially increasing weights $w_j=2^{j-1}$ (or alternatively $w_j=j$) achieves comparable approximation quality using only $k$ parameters where ($k=\Theta(\log_2 d)$) (or $\Theta(\sqrt{d})$), respectively). We also derived explicit upper bounds on the approximation error and showed that when the polynomial coefficients are sufficiently small, the approximation error becomes negligible even under weighted constructions. WQSP also provides a highly efficient, hardware-native framework for function generation. By utilizing $R_x$ rotations that map to a single physical pulse—rather than the multi-pulse Euler decompositions required by $\mathrm{SU}(2)$ rotation-based methods. Thus, our architecture significantly minimizes physical circuit depth and mitigates error accumulation in NISQ-era processors.

We also demonstrated that WQSP admits a natural interpretation as a quantum neural network architecture, where learning is performed through the optimization of phase angles. This learning-based formulation extends beyond polynomial fitting to the approximation of continuous functions, making WQSP a promising framework for compact function generators and learnable activation mechanisms in architectures such as Kolmogorov--Arnold Networks ~\cite{liu2024kan}, where expressive univariate function representations play a central role. To illustrate this broader applicability, we employed WQSP to parameterize the activation functions of a KAN and demonstrated its effectiveness in approximating multivariate continuous functions. While these initial results are encouraging, a comprehensive study of WQSP-assisted KANs for a wide range of multivariate functions remains an important direction for future research. In particular, investigating more complex benchmark functions and network architectures, performing architecture search and polynomial degree optimization, establishing tighter approximation and resource bounds, and exploring applications to scientific machine learning tasks, such as solving differential equations, are promising avenues for future work. Another key {open problem} is whether WQSP can be generalized to weighted oracle access within coherent QSVT settings. While our classical-input formulation incurs no physical gate overhead, extending this to unknown eigenvalues inside a black-box oracle faces known lower-bound constraints and remains a task for future research. We also note that all numerical experiments in this work are restricted to real-valued functions. To extend to complex functions $f(x)$, one must learn two real functions $u(x),v(x)$ such that $f(x)=u(x)+iv(x)$ and the procedure follows similarly.

Overall, WQSP provides a compact, structured, and trainable framework for polynomial and continuous function realization with reduced circuit resources while preserving expressive power comparable to state-of-the-art QSP constructions.

It is worth noting that the WQSP framework can, in principle, accommodate non-integer weights. However, such weights currently lack a clear physical interpretation within the standard quantum signal processing framework. Developing a comprehensive theoretical understanding of WQSP with non-integer weights, together with efficient algorithms for their optimization and implementation, remains an important direction for future research. Furthermore, for the approximation of very high-degree polynomials, it would be valuable to incorporate advanced phase-recovery and optimization techniques from~\cite{dong2024infinite,linlininfinite,Ying2022} to improve the numerical precision, computational efficiency, and stability of the proposed framework. Another promising direction is to extend WQSP to multivariate settings and integrate it with existing multivariate polynomial approximation frameworks~\cite{multivariablechuang,Rossi2025modularquantum}. More generally, WQSP may also provide an efficient mechanism for the simultaneous generation and approximation of multiple polynomials, following recent developments in this direction~\cite{laneve2023quantum}.
\section{Acknowledgment}
 R.S.S. acknowledges funding through Marie Skłodowska-Curie Actions-Ramon Llull AIRA (Grant No. 101126667). LP is supported by the National Research Foundation, Singapore through the National Quantum Office, hosted in A*STAR, under its Centre for Quantum Technologies Funding Initiative (S24Q2d0009). LP is also partially supported by A*STAR under its YIRG M25N8c0131.

\bibliography{ref.bib}
\end{document}